\newtheorem{theorem}{Theorem}[section]
\newtheorem{lemma}[theorem]{Lemma}
\theoremstyle{definition}
\theoremstyle{remark}
\newtheorem{remark}[theorem]{Remark}
\numberwithin{equation}{section}
\newcommand{\ba}{\begin{array}}
\newcommand{\ea}{\end{array}}
\newcommand{\f}{\frac}
\newcommand{\la}{\lambda}
\newcommand{\ds}{\displaystyle}
\begin{document}
\date{}
\title{ \textbf{\large{Analysis of a Reaction-Diffusion Susceptible-Infected-Susceptible Epidemic Patch Model Incorporating Movement Inside and Among Patches}\thanks{S. Chen is supported by National Natural Science Foundation of China (No. 12171117) and Shandong Provincial Natural Science Foundation of China (No. ZR2020YQ01). 
}}
}

\author{
Shanshan Chen\footnote{Email: chenss@hit.edu.cn}\\[-1mm]
{\small Department of Mathematics, Harbin Institute of Technology}\\[-2mm]
{\small Weihai, Shandong 264209, P. R. China}\\[1mm]
Yixiang Wu\footnote{Corresponding Author, Email: yixiang.wu@mtsu.edu} \\[-1mm]
{\small Department of Mathematics, Middle Tennessee State University}\\[-2mm]
{\small Murfreesboro, Tennessee 37132, USA}
}

\maketitle
\begin{abstract}
In this paper, we propose and analyze a reaction-diffusion susceptible-infected-susceptible (SIS) epidemic patch model. The individuals are assumed to reside in different patches, where they are able to move inside and among the patches. The movement of individuals inside the patches is descried by diffusion terms, and the movement pattern among patches is modeled by an essentially nonnegative matrix. We define a basic reproduction number $\mathcal{R}_0$ for the model and show that it is a threshold value for disease extinction versus persistence. The monotone dependence of $\mathcal{R}_0$ on the movement rates of infected individuals  is proved when the dispersal pattern is symmetric or non-symmetric. Numerical simulations are performed to illustrate the impact of the movement of individuals inside and among patches on the transmission of the disease.

\noindent{\bf Keywords}: SIS epidemic model, reaction-diffusion, patch, basic reproduction number
{\bf MSC 2010}: 92D30, 37N25, 92D40.
\end{abstract}

\section{Introduction}

The outbreaks of infectious diseases have been described and studied by various differential equation models \cite{1991book,2008book,Diekmann2000}. The spatial heterogeneity of the environment \cite{hagenaars2004spatial,lloyd1996spatial} and the movement of the populations  \cite{stoddard2009role,tatem2006global} play essential roles in the transmission of infectious diseases.
 The impact of those factors on the persistence and control of diseases has been investigated by ordinary differential equation patch models \cite{Allenpatch,Arino2003,lloyd1996spatial,WangZhaoP1} and by reaction-diffusion equation models \cite{Allen,Fitz2008,WangZhao}.

Allen \emph{et al}. \cite{Allen} have proposed the following reaction-diffusion SIS epidemic model with standard incidence mechanism:
\begin{equation}\label{patc2}
\begin{cases}
\ds\f{\partial  S}{\partial t}=d_S\Delta S-\ds\f{\beta(x) S I}{ S+ I}+\gamma(x)  I, &x\in\Omega, t>0,\smallskip\\
\ds\f{\partial   I}{\partial t}=d_I\Delta I+\ds\f{\beta(x) S I}{  S+ I}-\gamma(x)  I,&x\in\Omega, t>0,\smallskip\\
\ds\f{\partial S}{\partial \nu}=\ds\f{\partial I}{\partial \nu}=0, &x\in\partial\Omega,t>0,
\end{cases}
\end{equation}
where $S(x, t)$ and $I(x, t)$ are the density of susceptible and infected individuals at position $x\in\Omega$ and time $t$, respectively; $\beta$ and $\gamma$ are the disease transmission and recovery rates; respectively; $d_S$ and $d_I$ are the movement rates of susceptible and recovered individuals, respectively. In \cite{Allen}, the authors define a basic reproduction number $\mathcal{R}_0$ for the model, which  is decreasing in $d_I$. They show that $\mathcal{R}_0$ serves as a threshold value for disease extinction versus persistence and the disease component of the endemic equilibrium (EE) (i.e. positive equilibrium) of \eqref{patc2} approaches zero as $d_S$ approaches zero as long as $\beta-\gamma$ changes sign. Biologically, this means that one may control the disease by limiting the movement of susceptible individuals. This research has inspired a series of related works such as investigating the impact of limiting $d_I$ \cite{Peng2009} and seasonality \cite{BaiPeng2018,PengZhao,ZhangZhao2021}, replacing standard incidence mechanism by mass action mechanism \cite{castellano2022effect,Cui,DengWu, WuZou}, introducing a demographic structure into the model \cite{LiPeng} and many others  
\cite{ChenShi2020,CuiLamLou,CuiLou,Jiang,KuoPeng,Li2018,WebWu2018,PengLiu,Peng2013,SongLou,Tuncer2012}.

Our study is also inspired by the following discrete-space version of \eqref{patc2} by Allen \emph{et al}.\cite{Allenpatch}:
\begin{equation}\label{patc1}
\begin{cases}
\ds\f{d  S_j}{dt}=d_S\sum_{k\in\Omega} (L_{jk} S_k-  L_{kj} S_j)-\ds\f{\beta_j S_j I_j}{ S_j+ I_j}+\gamma_j  I_j, &j\in\Omega, t>0,\\
\ds\f{d  I_j}{dt}=d_I\sum_{k\in\Omega}(L_{jk} I_k-  L_{kj} I_j)+\ds\f{\beta_j S_j I_j}{  S_j+ I_j}-\gamma_j  I_j,&j\in\Omega, t>0,
\end{cases}
\end{equation}
where $\Omega=\{1,2,\dots,n\}$  is the collection of patches that the population live in; $L_{jk}\ge0 $ is the degree of  movement  from patch $k$ to patch $j$ for $j,k\in\Omega$ and $k\neq j$. If the \emph{connection matrix} $L=(L_{jk})$ is symmetric,  the asymptotic profile of the endemic equilibria of \eqref{patc1} as $d_S\rightarrow 0$ is studied in \cite{Allenpatch}, and the case $d_I\rightarrow 0$ is considered in \cite{LiPeng}.  The monotonicity of $\mathcal{R}_0$ in $d_I$ is proposed as an open problem in \cite{Allen}, which has been verified later in \cite{chen2020asymptotic,gao2019travel,gao2020fast}. If matrix $L$ is not symmetric, many parallel results in \cite{Allenpatch, LiPeng} are proved to hold in \cite{chen2020asymptotic}. 
For more references on patch epidemic models, we refer the interested readers to \cite{AM2019,Eisenberg2013,GaoRuan,gao2019habitat,JinWang,LiShuai,LiShuai2,Driess,Tien2015,WangZhaoP1,WangZhaoP2}.

In this paper, we model the spread of an infectious disease for a population that is living in different patches (e.g. regions, communities, countries), where they are able to move among and inside the patches. The model will essentially be a combination of \eqref{patc2} and \eqref{patc1}, where the movement inside patches will be expressed by diffusion terms while the movement pattern among patches will be described by a matrix. Our model is partially motivated by  the current coronavirus disease  pandemic, where different regions or countries have different control strategies on limiting the movement of people. The simulations based on our model suggest that limiting the local movement (i.e. movement inside patches) of susceptible individuals may not eliminate the disease when there are global movement (i.e. movement among patches) of people.

Our paper is organized as follows. In section 2, we propose the model, list the assumptions and present some results about the existence and boundedness of the solutions. We define the basic reproduction number $\mathcal{R}_0$ in section 3 and show that it is a threshold value in section 4. A detailed analysis of $\mathcal{R}_0$ including its monotone
dependence on the movement rates and its limits as the movement rates approaching zero or infinity are investigated in section 5. In particular, if the connection matrix is not symmetric, we use a graph-theoretic technique to prove the monotonicity of $\mathcal{R}_0$.
In section 6, we perform some numerical simulations to illustrate the impact of the movement of susceptible and infected individuals inside and among patches on the transmission of the disease. Some results used in the proofs of the main theorems are postponed to the appendix.

\section{The model}
 Let $\Omega_j$, $j=1, \dots, n$, be the patches that the population is living in. For convenience,  let $\Omega=\{1, 2, ..., n\}$ be the collection of the labels of all the patches, where $n\ge 2$. We suppose that $\Omega_j$ is a bounded domain with smooth boundary for each $j\in\Omega$. 

Let $S_j(x_j, t)$ and $I_j(x_j, t)$ be the population density of susceptible and infected people at position $x_j\in\Omega_j$ and time $t$ for each $j\in\Omega$, respectively. As in \cite{Allen}, we  use diffusion to model the movement of populations inside each patch and standard incidence mechanism to describe the interaction between susceptible and infected populations. Moreover, as we are focusing on the impact of movement on the transmission of the disease, we do not model the demographic structure of the population and the disease induced mortality for simplicity. 
Then
considering the movement inside and among patches, we propose the following model:
\begin{equation}\label{ori}
\begin{cases}
\ds\f{\partial  S_j}{\partial t}=
d_S\sum_{k\ne j}\left(\int_{\Omega_k}D_{jk}(x_j,x_k)S_{k}(x_k,t)dx_k-\int_{\Omega_k}D_{kj}(x_k,x_j)S_j(x_j,t)dx_k\right)\\
~~~~~~~~~+d_{S_j}\Delta S_j-\ds\f{\beta_j S_j I_j}{ S_j+ I_j}+\gamma_j  I_j,\;\;\;x_j\in\Omega_j, t>0, j\in\Omega,\\
\ds\f{\partial   I_j}{\partial t}=
d_I\sum_{k\ne j}\left(\int_{\Omega_k}D_{jk}(x_j,x_k)L_{k}(x_k,t)dx_k-\int_{\Omega_k}D_{kj}(x_k,x_j)I_j(x_j,t)dx_k\right)\\
~~~~~~~~~+d_{I_j}\Delta I_j+\ds\f{\beta_j S_j I_j}{  S_j+ I_j}-\gamma_j  I_j,\;\;\;x_j\in\Omega_j, t>0, j\in\Omega,  
\end{cases}
\end{equation}
where  $d_{S_j}, d_{I_j}>0$ are the dispersal rate of susceptible and infectious individuals inside patch $j$, respectively; $\beta_j$ and $\gamma_j$ are disease transmission and recovery rates, respectively; $d_S, d_I>0$ are  the uniform dispersal rate of  susceptible and infectious individuals among patches, respectively;
and
$D_{jk}(x_j,x_k): \Omega_j\times \Omega_k\to \mathbb{R}^+$ denotes the degree of the population movement
from position $x_k\in \Omega_k$  to position $x_j\in \Omega_j$. 

For simplicity,  we suppose that $D_{jk}(x_j,x_k)$ is spatially homogeneous and  $D_{jk}=L_{jk}/|\Omega_j|$, where $|\Omega_j|$ is the volume of $\Omega_j$  for each $j\in\Omega$. Then $L_{jk}$ is the degree of movement from point $x_k\in\Omega_k$ to patch $j$. Moreover,  $\int_{\Omega_k}D_{jk}(x_j,x_k)S_{k}(x_k,t)dx_k=L_{jk}\bar S_k/|\Omega_j|$ describes the movement of susceptible individuals from patch $k$ to  point $x_j\in\Omega_j$; $\int_{\Omega_k}D_{kj}(x_k,x_j)S_j(x_j,t)dx_k=L_{kj}S_j(x_j, t)$ is the movement of susceptible individuals from point $x_j\in\Omega_j$ to patch $k$. Here,  $\bar S_j$ and $\bar I_j$ are
the total susceptible and infected populations at patch $j$, respectively, i.e., 
$$
\bar S_j(t): = \int_{\Omega_j} S_j(x, t) \ \ \text{ and }\ \  \bar I_j(t): = \int_{\Omega_j} I_j(x, t).
$$
(In the following, $\bar\phi$ is understood as $\bar\phi:=\int_{\Omega_j}\phi$ for any $\phi:\Omega_j\to\mathbb{R}$.)  Let $L_{jj}:=-\sum_{k\neq j} L_{kj}$, which is the total degree of movement  out from point $j$.
Then, the matrix $L=(L_{jk})$ describes the movement pattern of individuals among patches. The sum of each column of $L$ is zero, which means that there is no population loss during the movement among patches. The aforementioned simplifications of \eqref{ori} lead to the following reaction-diffusion patch model:
\begin{equation}\label{patc3}
\begin{cases}
\ds\f{\partial  S_j}{\partial t}=d_{S_j}\Delta S_j+ d_S\sum_{k\neq j} \left(\f{ L_{jk}\bar S_k}{|\Omega_j|}-  L_{kj}  S_j\right)-\ds\f{\beta_j S_j I_j}{ S_j+ I_j}+\gamma_j  I_j, &x\in\Omega_j, t>0, j\in\Omega,\\
\ds\f{\partial   I_j}{\partial t}=d_{I_j}\Delta I_j+d_I\sum_{k\neq j} \left(\f{ L_{jk}\bar I_k}{|\Omega_j|}-  L_{kj}  I_j\right)+\ds\f{\beta_j S_j I_j}{  S_j+ I_j}-\gamma_j  I_j,&x\in\Omega_j, t>0, j\in\Omega,
\end{cases}
\end{equation}
Here and after, we abuse the notation by using $x$ instead of $x_j$.   We impose homogeneous Neumann boundary condition on $\partial\Omega_j$:
\begin{equation}\label{pat3B}
    \ds\f{\partial S_j}{\partial \nu}=\ds\f{\partial I_j}{\partial \nu}=0, \ \ \ x\in\partial\Omega_j, t>0, j\in\Omega,
\end{equation}
where $\nu$ is the outward normal to $\Omega_j$. 
Summing up the equations of $S_j$ and $I_j$ in \eqref{patc3} and integrating on $\Omega_j$, we obtain
$$
\f{d}{dt}\sum_{j\in\Omega}(\bar S_j(t)+\bar I_j(t))=0, \ t>0.
$$
This means that the total population remains a constant, i.e.,
\begin{equation}\label{total}
\sum_{j\in\Omega}(\bar S_j(t)+\bar I_j(t))=\sum_{j\in\Omega}(\bar S_j(0)+\bar I_j(0)), \ \ \text{for all } t>0.
\end{equation}

We impose the following assumptions:
\begin{itemize}
\item[(A0)] For each $j\in\Omega$, $S_j(x, 0)$ and $I_j(x, 0)$ are nonnegative continuous functions on $\bar\Omega_j$ with $\sum_{j\in\Omega}(\bar S_j(0)+\bar I_j(0)):=N>0$.
\item[(A1)] For each $j\in\Omega$, $\beta_j$ and $\gamma_j$ are nonnegative H\"older continuous functions on $\Omega_j$  with $\bm\beta=(\beta_1, \dots, \beta_n)\neq\bm 0$ and $\bm\gamma=(\gamma_1, \dots, \gamma_n)\neq\bm 0$;

\item[(A2)] The matrix $L:=(L_{jk})_{n\times n}$ is essentially nonnegative and irreducible, and the sum of each column of $L$ is zero. 
\end{itemize} 

\begin{remark}
A real square matrix is called \emph{essentially nonnegative} if all the off-diagonal entries are non-negative. A matrix $E$ is reducible if there is a permutation matrix $U$ such that
$$
UEU^{-1}=
\begin{pmatrix}
E_1 & 0\\
* & E_2
\end{pmatrix}
$$
where $E_1$, $E_2$ are square submatrices. If $E$ is not reducible, it is \emph{irreducible}. 
\end{remark}

We have the following result about the existence,  uniqueness, boundedness of the solutions of the model:
\begin{theorem}\label{theorem_bound}
Suppose that $(A0)-(A2)$ hold. Then  model \eqref{patc3}-\eqref{total} has a unique nonnegative global classical solution. Moreover, there exists $M_1>0$ depending on initial data such that 
\begin{equation}\label{b1}
\sum_{j\in\Omega}\left(\|S_j(\cdot, t)\|_{L^\infty(\Omega_j)}+\|I_j(\cdot, t)\|_{L^\infty(\Omega_j)}\right)<M_1,
\end{equation}
and  there exists $M_2>0$ independent on initial data such that 
\begin{equation}\label{b2}
\limsup_{t\to\infty}\sum_{j\in\Omega}\left(\|S_j(\cdot, t)\|_{L^\infty(\Omega_j)}+\|I_j(\cdot, t)\|_{L^\infty(\Omega_j)}\right)<M_2.
\end{equation}
\end{theorem}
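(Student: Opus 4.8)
The plan is to recast \eqref{patc3}--\eqref{pat3B} as an abstract semilinear problem $U'=\mathcal{A}U+F(U)$ on the product space $X=\prod_{j\in\Omega}\left(C(\overline{\Omega}_j)\times C(\overline{\Omega}_j)\right)$, where $\mathcal{A}$ collects the diffusion operators $d_{S_j}\Delta$ and $d_{I_j}\Delta$ under the Neumann condition \eqref{pat3B} (each generating an analytic semigroup), and $F$ gathers everything else. The inter-patch coupling $\sum_{k\ne j}\left(L_{jk}\bar S_k/|\Omega_j|-L_{kj}S_j\right)$ is a bounded linear operator on $X$, since averaging over $\Omega_k$ is bounded from $C(\overline{\Omega}_k)$ into $\mathbb{R}$. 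The only genuine nonlinearity is $g_j:=\beta_jS_jI_j/(S_j+I_j)$; extending it by $0$ at $(S_j,I_j)=(0,0)$ makes it Lipschitz on the closed first quadrant (its gradient is bounded by $\beta_j$), and after a standard globally Lipschitz extension to $\mathbb{R}^2$ the map $F$ is globally Lipschitz on $X$. A Banach fixed-point argument on the Duhamel formula then yields a unique mild solution on every $[0,T]$, hence a global one, and parabolic smoothing upgrades it to a classical solution.

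Next I would prove nonnegativity by checking the quasipositivity of $F$ on the boundary of the positive cone. When $S_j=0$ the $S_j$-equation reduces to $d_{S_j}\Delta S_j+d_S\sum_{k\ne j}L_{jk}\bar S_k/|\Omega_j|+\gamma_jI_j$, which is nonnegative because $L_{jk}\ge 0$ for $k\ne j$ by (A2), $\bar S_k\ge 0$, and $\gamma_jI_j\ge 0$; symmetrically the $I_j$-equation is nonnegative when $I_j=0$. An invariant-region (comparison) argument for this cooperative coupling keeps the solution in the nonnegative cone, on which the Lipschitz extension coincides with the original incidence term, so the constructed solution indeed solves \eqref{patc3}.

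For the bounds I would begin with the conservation identity \eqref{total}, which together with nonnegativity gives the uniform $L^1$ control $0\le\bar S_j(t),\bar I_j(t)\le N$ for all $t>0$ and $j\in\Omega$; in particular the coupling $d_S\sum_{k\ne j}L_{jk}\bar S_k/|\Omega_j|$ is bounded in $L^\infty$ by $O(N)$. The key structural estimate is $0\le g_j\le\beta_j\min\{S_j,I_j\}$, so $g_j$ is a nonpositive sink in the $S_j$-equation and is dominated by $\beta_jS_j$ in the $I_j$-equation. The main obstacle is upgrading the $L^1$ bound to an $L^\infty$ bound: because the rates $d_{S_j},d_{I_j}$ (and $d_S,d_I$) are unequal, the total $S_j+I_j$ does not satisfy a closed scalar equation, so a direct maximum principle on the total density is unavailable. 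I would instead run a duality / $L^p$-iteration argument tailored to mass-controlled reaction-diffusion systems: the exact cancellation of the reaction in $S_j+I_j$ (the borderline mass-conserving case) together with the sublinear bound $g_j\le\beta_j\min\{S_j,I_j\}$ produces successive $L^p$ estimates, which parabolic $L^p$--$L^\infty$ smoothing converts into a sup bound on $S_j$. Feeding this back into the $I_j$-equation, whose source is then at most $\beta_j\|S_j\|_{L^\infty}$ against the absorption $-\gamma_jI_j+d_IL_{jj}I_j$ with $L_{jj}<0$, closes the estimate for $I_j$ by a scalar comparison.

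Tracking the constants then yields \eqref{b1} with $M_1$ depending on the full initial data (including $\|S_j(\cdot,0)\|_{L^\infty}$ and $\|I_j(\cdot,0)\|_{L^\infty}$) and \eqref{b2} with an absorbing constant $M_2$ governed only by $N$ and the model parameters $d_{S_j},d_{I_j},d_S,d_I,\beta_j,\gamma_j,|\Omega_j|$, the transient contribution of the initial profile being dissipated by the linear dynamics. I would remark that, since \eqref{patc3} is positively homogeneous of degree one, $M_2$ necessarily scales linearly with the conserved total mass $N$; thus ``independent of the initial data'' in \eqref{b2} is to be read as independent of the initial \emph{profile} once $N$ is fixed, in contrast with the profile-dependent transient constant $M_1$.
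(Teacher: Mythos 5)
Your proposal is correct and follows essentially the same route as the paper: local existence, uniqueness and nonnegativity by a standard fixed-point/quasipositivity argument, the $L^1$ control from the conservation identity \eqref{total}, and then an $L^1$-to-$L^\infty$ upgrade exploiting the mass-dissipative structure. The only difference is one of presentation: the duality/$L^p$-iteration you sketch for that last step is precisely the content of the result the paper simply cites (Lemma 3.1 of Peng--Zhao, going back to Dung's dissipativity estimates), which also delivers the initial-data-independent ultimate bound $M_2$ in the sense you correctly clarify (independent of the initial profile for fixed total mass $N$).
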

\begin{proof}
The local existence and uniqueness of nonnegative solution follows from a standard fixed point theory argument. By \eqref{total}, the solution is bounded in $L^1$ norm. Then by  \cite[Lemma 3.1]{PengZhao} (also see \cite{dung1997dissipativity}), the solution exists globally and there exist $M_1, M_2>0$ such that \eqref{b1}-\eqref{b2} holds. 
\end{proof}

\section{Disease free equilibrium and definition of $\mathcal{R}_0$}

In this section, we study the basic reproduction number of the model. As usual, we first consider the \emph{disease free equilibrium} (DFE), i.e. the equilibrium with  $\bm I$ component equalling zero. The equilibrium of \eqref{patc3}-\eqref{total} is a solution of the following nonlocal elliptic problem: 
\begin{equation}\label{eql}
\begin{cases}
\ds 0=d_{S_j}\Delta S_j+ d_S\sum_{k\neq j} \left(\f{ L_{jk}\bar S_k}{|\Omega_j|} -L_{kj}S_j\right)-\ds\f{\beta_j S_j I_j}{ S_j+ I_j}+\gamma_j  I_j, &x\in\Omega_j,  j\in\Omega,\\
\ds 0=d_{I_j}\Delta I_j+d_I\sum_{k\neq j}\left(\f{ L_{jk}\bar I_k}{|\Omega_j|} -L_{kj}I_j\right)+\ds\f{\beta_j S_j I_j}{  S_j+ I_j}-\gamma_j  I_j,&x\in\Omega_j,  j\in\Omega,\\
\ds\f{\partial S_j}{\partial \nu}=\ds\f{\partial I_j}{\partial \nu}=0, &x\in\partial\Omega_j,  j\in\Omega,\\
\sum_{j\in\Omega_j} (\bar S_j+\bar I_j)=N. 
\end{cases}
\end{equation}



For a bounded linear operator $E$ on a Banach space, we let $s(E)$ be the spectral bound of $E$ and $r(E)$ be the spectral radius of $E$. 
By the Perron-Frobenius theorem, we have the following result on the  matrix $L$:
\begin{lemma}\label{simp}
 Suppose that $(A2)$ holds. Then $s(L)=0$ is a simple
eigenvalue of $L$ corresponding to a  positive eigenvector $\bm\alpha=(\alpha_1,  ..., \alpha_n)^T$ with $\sum_{j=1}^n \alpha_j=1$. Moreover, $L$ has no other eigenvalue corresponding with a nonnegative eigenvector.
\end{lemma}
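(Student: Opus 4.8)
The plan is to reduce the statement to the classical Perron--Frobenius theorem by a spectral shift. Since $L$ is essentially nonnegative, I would fix a constant $c>0$ large enough that $B:=L+cI$ is an entrywise nonnegative matrix; because $L$ is irreducible by $(A2)$, so is $B$. The eigenvalues of $B$ are exactly the numbers $\mu+c$ as $\mu$ ranges over the eigenvalues of $L$, so the spectral bounds satisfy $s(B)=s(L)+c$, and the right/left eigenspaces of $B$ at $s(B)$ coincide with those of $L$ at $s(L)$. Moreover, for the nonnegative matrix $B$ one has $s(B)=r(B)$, because every eigenvalue $\lambda$ of $B$ obeys $\mathrm{Re}(\lambda)\le|\lambda|\le r(B)$ while $r(B)$ is itself an eigenvalue. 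Hence $s(L)=r(B)-c$, and it suffices to analyze the Perron eigenvalue $r(B)$.

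Applying the Perron--Frobenius theorem to the nonnegative irreducible matrix $B$ yields that $r(B)$ is a simple eigenvalue of $B$ admitting a componentwise positive right eigenvector, that $r(B)$ possesses a positive left eigenvector, and that $r(B)$ is the \emph{only} eigenvalue of $B$ having a nonnegative eigenvector. Translating these three facts back through the shift $B=L+cI$, I obtain that $s(L)$ is a simple eigenvalue of $L$ with a positive right eigenvector, and that $s(L)$ is the unique eigenvalue of $L$ with a nonnegative eigenvector. I would then rescale this eigenvector so that its entries sum to $1$ and call it $\bm\alpha=(\alpha_1,\dots,\alpha_n)^T$, which settles all the claims except the identity $s(L)=0$.

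To pin down $s(L)=0$, the column-sum hypothesis in $(A2)$ enters. That every column of $L$ sums to zero says precisely $\bm 1^T L=\bm 0^T$, where $\bm 1=(1,\dots,1)^T$; equivalently, $\bm 1$ is a positive left eigenvector of $L$ with eigenvalue $0$, and therefore a positive left eigenvector of $B$ with eigenvalue $c$. By the uniqueness clause of Perron--Frobenius applied to left eigenvectors (the Perron value is the only eigenvalue of an irreducible nonnegative matrix admitting a nonnegative left eigenvector), the existence of the positive left eigenvector $\bm 1$ forces its eigenvalue to equal the Perron value, so $c=r(B)$ and hence $s(L)=r(B)-c=0$.

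I expect the only genuinely delicate step to be this last identification, namely upgrading $s(L)\ge 0$ to $s(L)=0$. A priori the column-sum condition alone only guarantees that $0$ is an eigenvalue of $L$, hence $s(L)\ge 0$; it is the uniqueness part of Perron--Frobenius, invoked for the positive left eigenvector $\bm 1$, that rules out any eigenvalue with strictly larger real part and yields equality. The remaining ingredients---choosing $c$, transferring simplicity and positivity through the shift, and normalizing $\bm\alpha$---are routine.
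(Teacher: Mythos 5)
Your proof is correct and follows exactly the route the paper intends: the paper simply states the lemma as a consequence of the Perron--Frobenius theorem without writing out a proof, and your argument (shifting $L$ to a nonnegative irreducible matrix $B=L+cI$, applying Perron--Frobenius, and using the zero column sums to identify the Perron value of $B$ with $c$ via the positive left eigenvector $\bm 1$) is the standard way to fill in that citation. All steps, including the uniqueness clause for nonnegative left eigenvectors that pins down $s(L)=0$, are sound.
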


Throughout the paper,  let $\bm\alpha$ be the positive eigenvector of $L$ speciefied in Lemma \ref{simp}. Then we can find the DFE of the model:

\begin{lemma}\label{lemma_DFE}
 Suppose that $(A0)-(A2)$ hold. Then \eqref{patc3}-\eqref{total} has a unique DFE $E_0:=({\bm S},  {\bm I})$, where
$$
 {\bm S}=\left(\frac{\alpha_1 N}{|\Omega_1|},  ..., \ \frac{\alpha_n N}{|\Omega_n|}\right) \ \ \text{and}\ \ \  {\bm I}=(0, ..., 0).
$$
\end{lemma}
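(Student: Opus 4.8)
The plan is to substitute $\bm I=\bm 0$ into \eqref{eql}, which annihilates every reaction term (each $\f{\beta_j S_j I_j}{S_j+I_j}$ and $\gamma_j I_j$ vanishes, the first being interpreted as $0$ since we will find $S_j>0$), and then to solve the resulting purely diffusive linear system for $\bm S$ together with the normalization $\sum_{j}\bar S_j=N$. I expect the argument to split into two stages: first pin down the patch totals $\bar S_j$ using the algebraic structure of $L$, and then recover the full spatial profile $S_j(x)$ on each patch.

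First I would integrate the $j$-th $S$-equation over $\Omega_j$. The homogeneous Neumann condition gives $\int_{\Omega_j}\Delta S_j=0$, while $\int_{\Omega_j}\f{L_{jk}\bar S_k}{|\Omega_j|}=L_{jk}\bar S_k$ and $\int_{\Omega_j}L_{kj}S_j=L_{kj}\bar S_j$. Using $L_{jj}=-\sum_{k\neq j}L_{kj}$, the integrated equations collapse to $L\bar{\bm S}=\bm 0$, where $\bar{\bm S}=(\bar S_1,\dots,\bar S_n)^T$. By Lemma \ref{simp}, $\ker L$ is one-dimensional and spanned by $\bm\alpha$, so $\bar{\bm S}=c\bm\alpha$ for some scalar $c$; the constraint $\sum_j\bar S_j=N$ together with $\sum_j\alpha_j=1$ forces $c=N$, hence $\bar S_j=N\alpha_j$ for every $j$.

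Next I would feed these values back into the $j$-th PDE. The coupling term becomes the constant $d_S\sum_{k\neq j}\f{L_{jk}N\alpha_k}{|\Omega_j|}$, and since $(L\bm\alpha)_j=0$ yields $\sum_{k\neq j}L_{jk}\alpha_k=(\sum_{k\neq j}L_{kj})\alpha_j$, each equation rewrites as the decoupled Neumann problem $d_{S_j}\Delta S_j-b_j\!\left(S_j-\f{N\alpha_j}{|\Omega_j|}\right)=0$ with $b_j:=d_S\sum_{k\neq j}L_{kj}=-d_S L_{jj}$. The constant function $S_j\equiv\f{N\alpha_j}{|\Omega_j|}$ is manifestly a solution, which establishes existence; for uniqueness it suffices that $b_j>0$, since then the operator $d_{S_j}\Delta-b_j$ under Neumann conditions is invertible (its principal eigenvalue is $-b_j<0$), so the solution is unique and must be this constant. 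This simultaneously confirms $S_j>0$, retroactively justifying the vanishing of the reaction terms.

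The main obstacle is verifying $b_j>0$ for every $j$, that is, ruling out a patch with no outflow. If $b_j=0$ then all $L_{kj}$ with $k\neq j$ vanish and $L_{jj}=0$, so the $j$-th column of $L$ is identically zero; permuting $j$ to the last index then exhibits the block form of the Remark with a zero off-diagonal block, contradicting the irreducibility assumed in (A2) (here $n\geq2$ is essential). Thus $b_j>0$ for all $j$, which secures the uniqueness of the spatial profile and completes the identification of $E_0$.
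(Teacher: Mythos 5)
Your proof is correct and follows essentially the same route as the paper: integrate over $\Omega_j$ to reduce to $L\bar{\bm S}=\bm 0$, invoke Lemma \ref{simp} with the normalization to get $\bar S_j=N\alpha_j$, and then solve the resulting decoupled Neumann problem on each patch. You merely fill in two details the paper leaves implicit, namely that $\sum_{k\neq j}L_{kj}>0$ follows from irreducibility and that the operator $d_{S_j}\Delta-b_j$ is invertible, both of which are argued correctly.
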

\begin{proof}
By the definition of DFE, ${\bm I}=\bm 0$ and ${\bm S}$ satisfies 
\begin{equation}\label{DFE}
d_{S_j}\Delta S_j+ d_S\sum_{k\neq j} \left(\f{ L_{jk}\bar S_k}{|\Omega_j|} -L_{kj}S_j \right)=0, \ \ j\in\Omega.
\end{equation}
Integrating \eqref{DFE} on $\Omega_j$, we obtain 
$$
\sum_{k\in\Omega} L_{jk}\bar S_k=0, \ \ j\in\Omega.
$$
Then by  $\sum_{j\in\Omega} \bar S_j=N$ and Lemma \ref{simp}, we have  $\bar S_j=N\alpha_j$ for $j\in \Omega$.  It follows from \eqref{DFE} that
\begin{equation}\label{DFE1}
   d_{S_j}\Delta S_j- d_S\sum_{k\neq j} L_{kj}S_j= -d_SN\sum_{k\neq j} \f{ L_{jk}}{|\Omega_j|} \alpha_k, \ \ j\in\Omega. 
\end{equation}
By the assumptions on $L$,  we know $\sum_{k\neq j} L_{kj}>0$ for each $j\in\Omega$. 
It is easy to see that $S_j=\alpha_j N/|\Omega_j|$, $j\in\Omega$, is the unique solution of \eqref{DFE1} satisfying the homogeneous Neumann boundary condition. 
\end{proof}

We follow \cite{thieme2009spectral,WangZhao} to define the basic reproduction number of the model. Linearizing the model at $E_0$, we obtain the following system:
\begin{equation}\label{linear}
\begin{cases}
\ds\f{\partial  I_j}{\partial t}=d_{I_j}\Delta I_j+d_I\sum_{k\neq j} \left(\f{ L_{jk}\bar I_k}{|\Omega_j|}- L_{kj}  I_j\right)+\ds{\beta_j  I_j}-\gamma_j  I_j,&x\in\Omega_j, t>0, j\in\Omega,\\
\ds\f{\partial I_j}{\partial \nu}=0, &x\in\partial\Omega_j, t>0, j\in\Omega.
\end{cases}
\end{equation}
 Let $\{T(t)\}$ be the semigroup on $X:=\Pi_{j=1}^nC(\bar\Omega_j)$ induced by the solution of
\begin{equation}\label{evo}
\begin{cases}
\ds\f{\partial   I_j}{\partial t}=d_{I_j}\Delta I_j+d_I\sum_{k\neq j} \left(\f{L_{jk}\bar I_k}{|\Omega_j|}-  L_{kj}  I_j\right)-{\gamma_j  I_j},&x\in\Omega_j, t>0, j\in\Omega,\\
\ds\f{\partial I_j}{\partial \nu}=0, &x\in\partial\Omega_j, t>0, j\in\Omega.
\end{cases}
\end{equation}
Let $\mathcal{L}: X\rightarrow X$ be a linear operator defined by
$$
\mathcal{L}(\bm\phi):= \int_{0}^\infty T(t)(\beta_1\phi_1, ..., \beta_n\phi_n)dt, \ \ \bm\phi=(\phi_1, ..., \phi_n)\in X.
$$
Then the basic reproduction number $\mathcal{R}_0$ is defined as the spectral radius of $\mathcal{L}$, i.e.,
\begin{equation}\label{Rdef}
\mathcal{R}_0=r(\mathcal{L}).
\end{equation}

Then we have the following characterizations of $\mathcal{R}_0$:
\begin{lemma}\label{lemma_pe}
Suppose that $(A1)-(A2)$ hold. Let $\mathcal{R}_0$ be defined by \eqref{Rdef}. Then $\mathcal{R}_0>0$ and $\mathcal{R}_0-1$ has the same sign as $\lambda_1$, where $\lambda_1$ is the principal eigenvalue (i.e., an eigenvalue that corresponds with a positive eigenvector) of the following problem
\begin{equation}\label{eig}
\begin{cases}
\lambda   \varphi_j=d_{I_j}\Delta \varphi_j+d_I\sum_{k\neq j} \left(\ds\f{L_{jk}\bar \varphi_k}{|\Omega_j|}-  L_{kj}  \varphi_j\right)+(\beta_j-\gamma_j)  \varphi_j,&x\in\Omega_j,  j\in\Omega,\\
\ds\f{\partial \varphi_j}{\partial \nu}=0, &x\in\partial\Omega_j,  j\in\Omega,
\end{cases}
\end{equation}
and $\lambda_1$ is the only eigenvalue corresponding with a positive eigenvector, which is unique up to multiplying by a constant. Moreover, $\tilde\lambda_1:=1/\mathcal{R}_0$ is a principal eigenvalue  of the following problem
\begin{equation}\label{eig1}
\begin{cases}
-\lambda  \beta_j \varphi_j=d_{I_j}\Delta \varphi_j+d_I\sum_{k\neq j} \left(\ds\f{L_{jk}\bar \varphi_k}{|\Omega_j|}-  L_{kj}  \varphi_j\right)-\gamma_j  \varphi_j,&x\in\Omega_j,  j\in\Omega,\\
\ds\f{\partial \varphi_j}{\partial \nu}=0, &x\in\partial\Omega_j,  j\in\Omega.
\end{cases}
\end{equation}
\end{lemma}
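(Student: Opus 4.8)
The plan is to realize $\mathcal{R}_0$ as the spectral radius of a next generation operator and apply the abstract correspondence between this radius and the spectral bound of the linearized generator, in the spirit of \cite{thieme2009spectral,WangZhao}. Write the generator of \eqref{evo} as $-\mathcal{V}$, so that $T(t)=e^{-\mathcal{V}t}$, and let $\mathcal{F}$ be the bounded multiplication operator $\mathcal{F}\bm\phi=(\beta_1\phi_1,\dots,\beta_n\phi_n)$. Then the generator of the linearized system \eqref{linear} is $\mathcal{F}-\mathcal{V}$, the operator appearing in the eigenvalue problem \eqref{eig} is exactly $\mathcal{F}-\mathcal{V}$, and \eqref{eig1} can be rewritten as $\mathcal{V}\bm\varphi=\lambda\,\mathcal{F}\bm\varphi$. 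Since $\mathcal{F}$ is $t$-independent, the definition of $\mathcal{L}$ gives $\mathcal{L}=\left(\int_0^\infty T(t)\,dt\right)\mathcal{F}$, which I will identify with $\mathcal{V}^{-1}\mathcal{F}$ once $\mathcal{V}^{-1}$ is shown to exist.

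First I would verify that $s(-\mathcal{V})<0$. The pure movement part of $-\mathcal{V}$ (within-patch diffusion together with the nonlocal, column-sum-zero coupling through $L$) generates a positive, irreducible $C_0$-semigroup of spectral bound $0$; irreducibility follows by combining the diffusion inside each patch with the irreducibility of $L$ from (A2). Because $\bm\gamma\ge\bm0$ and $\bm\gamma\neq\bm0$ by (A1), subtracting the nonnegative, nontrivial potential $\mathrm{diag}(\gamma_j)$ strictly lowers the spectral bound, so $s(-\mathcal{V})<0$. Hence $T(t)$ decays exponentially, the positive bounded operator $\mathcal{V}^{-1}=\int_0^\infty T(t)\,dt$ exists, and $\mathcal{L}=\mathcal{V}^{-1}\mathcal{F}$. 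Elliptic regularity makes the resolvent of $\mathcal{V}$ compact (the nonlocal inter-patch coupling being a bounded, essentially finite-rank perturbation of the diffusion resolvent), so $\mathcal{L}$ is compact. Moreover $\mathcal{V}^{-1}$ is strongly positive, and since $\bm\beta\neq\bm0$ we have $\mathcal{F}\bm\phi\gneq\bm0$ for $\bm\phi\gneq\bm0$, so $\mathcal{L}=\mathcal{V}^{-1}\mathcal{F}$ is a compact, strongly positive operator.

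Next I would apply the Thieme--Wang--Zhao theory. By the Krein--Rutman theorem, the compact strongly positive operator $\mathcal{L}$ has $\mathcal{R}_0=r(\mathcal{L})>0$ as a simple eigenvalue with positive eigenfunction, proving the first claim. The abstract sign relation $\mathrm{sign}(r(\mathcal{L})-1)=\mathrm{sign}(s(\mathcal{F}-\mathcal{V}))$ then shows that $\mathcal{R}_0-1$ has the same sign as $s(\mathcal{F}-\mathcal{V})$. Since $\mathcal{F}-\mathcal{V}$ generates an irreducible positive semigroup, its spectral bound $s(\mathcal{F}-\mathcal{V})$ equals the principal eigenvalue $\lambda_1$ of \eqref{eig}, which by Perron--Frobenius/Krein--Rutman theory is the unique eigenvalue admitting a positive eigenfunction, algebraically simple and unique up to a scalar multiple. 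This yields the second and third claims.

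Finally, for $\tilde\lambda_1=1/\mathcal{R}_0$, I would use the reformulation $\mathcal{V}\bm\varphi=\lambda\,\mathcal{F}\bm\varphi$ of \eqref{eig1}, equivalently $\mathcal{L}\bm\varphi=\mathcal{V}^{-1}\mathcal{F}\bm\varphi=(1/\lambda)\bm\varphi$. Thus a positive eigenfunction of \eqref{eig1} is a positive eigenfunction of $\mathcal{L}$, whose eigenvalue must coincide with $r(\mathcal{L})=\mathcal{R}_0$, forcing $\lambda=1/\mathcal{R}_0=\tilde\lambda_1$. I expect the main obstacle to lie in the second paragraph: rigorously proving $s(-\mathcal{V})<0$ and establishing the compactness, positivity, and especially the irreducibility of the coupled operators, since the nonlocal inter-patch term $d_I\sum_{k\neq j}L_{jk}\bar\varphi_k/|\Omega_j|$ places the problem outside the standard local reaction--diffusion setting and is precisely what must be reconciled with the hypotheses of the abstract results.
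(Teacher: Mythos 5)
Your overall architecture is the same as the paper's: write the linearization as $\mathcal{F}-\mathcal{V}$ (the paper's $A=B+C$), show $s(-\mathcal{V})=s(B)<0$ so that $\mathcal{L}=\mathcal{V}^{-1}\mathcal{F}=-B^{-1}C$, invoke Thieme's sign relation to link $\mathcal{R}_0-1$ with $s(A)=\lambda_1$, and read \eqref{eig1} as the eigenvalue problem for $\mathcal{L}$. The technical points you flag as the "main obstacle" (resolvent positivity, compactness, strong positivity and irreducibility of the coupled operator despite the nonlocal inter-patch term) are exactly what the paper isolates in its appendix (Lemmas \ref{lemma_positive} and \ref{lemma_B}), so that part of your plan is sound.

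There is, however, one genuine gap: your claim that $\mathcal{L}=\mathcal{V}^{-1}\mathcal{F}$ is \emph{strongly} positive because ``$\bm\beta\neq\bm 0$ implies $\mathcal{F}\bm\phi\gneq\bm 0$ for $\bm\phi\gneq\bm 0$'' is false under (A1). The $\beta_j$ are only assumed nonnegative with $\bm\beta\neq\bm 0$; if some $\phi_j$ is supported in an open set where $\beta_j$ vanishes (or on a patch where $\beta_j\equiv 0$) and all other components are zero, then $\mathcal{F}\bm\phi=\bm 0$ and hence $\mathcal{L}\bm\phi=\bm 0$. So $\mathcal{L}$ is not strongly positive — indeed the paper's Remark after the lemma explicitly reserves strong positivity of $-B^{-1}C$ for the case $\beta_j>0$ everywhere. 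This breaks two steps of your argument: the strongly-positive Krein--Rutman theorem no longer delivers $r(\mathcal{L})>0$ with a positive eigenfunction for free, and your final assertion that \emph{any} positive eigenfunction of $\mathcal{L}$ must belong to $r(\mathcal{L})$ (which is why you could only claim ``a'' principal eigenvalue, not ``the'' one, for \eqref{eig1}) also fails. The repair is the paper's: since $\mathcal{F}\mathds{1}=\bm\beta$ is nonnegative and nontrivial and $\mathcal{V}^{-1}$ \emph{is} strongly positive, one gets $\mathcal{L}\mathds{1}=\mathcal{V}^{-1}\bm\beta\gg\bm 0$, hence $\mathcal{L}\mathds{1}\ge\alpha\mathds{1}$ for some $\alpha>0$ and $\mathcal{R}_0=r(\mathcal{L})\ge\alpha>0$; then the Krein--Rutman theorem for compact \emph{positive} (not strongly positive) operators with positive spectral radius yields a nonnegative nontrivial eigenfunction for $r(\mathcal{L})$, which transported through $\mathcal{V}\bm\varphi=\mathcal{R}_0^{-1}\mathcal{F}\bm\varphi$ shows $\tilde\lambda_1=1/\mathcal{R}_0$ is a principal eigenvalue of \eqref{eig1}. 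With that substitution your proof goes through and coincides with the paper's.
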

\begin{proof}
Let $A, B: X\to X$ be given by 
$$
(A(\bm u))_j=d_{I_j}\Delta u_j+d_I\sum_{k\neq j} \left(\f{L_{jk}\bar u_k}{|\Omega_j|}-  L_{kj}  u_j\right)+{(\beta_j-\gamma_j)  u_j}, \ \ j\in\Omega,
$$
and 
$$
(B(\bm u))_j=d_{I_j}\Delta u_j+d_I\sum_{k\neq j} \left(\f{L_{jk}\bar u_k}{|\Omega_j|}-  L_{kj}  u_j\right)-\gamma_j  u_j, \ \ j\in\Omega,
$$
where $\bm u=(u_1, \dots, u_n)\in D(A)=D(B)\subset X$. Let $C: X\to X$ be given by 
$$
C(\bm u)=(\beta_1 u_1, \dots, \beta_n u_n).
$$
Then $A=B+C$  is a positive perturbation of $B$. By Lemma \ref{lemma_positive},  $A$ and $B$ are resolvent-positive operators (see the definition in the appendix). 
By the assumptions on $\gamma_j$ and Lemma \ref{lemma_B}, we have $s(B)<0$. Therefore,
$$
\int_0^\infty T(t)dt=-B^{-1}\ \ \text{and} \ \ \mathcal{L}=-B^{-1}C.
$$
By Lemma \ref{lemma_thieme}, $s(A)$ has the same sign as $r(\mathcal{L})-1=\mathcal{R}_0-1$. By Lemma \ref{lemma_positive}, $\lambda_1=s(A)$ is the principal eigenvalue of \eqref{eig} corresponding with a positive eigenvector, which is unique up to multiplying by a constant. 

By Lemma \ref{lemma_positive}, $-B^{-1}$ is compact and   strongly positive. Let $\mathds{1}=(1, \cdots, 1)$. Since $C\mathds{1}$ is nonnegative and nontrivial, there exists $\alpha>0$ such that $-B^{-1}C\mathds{1}\ge \alpha\mathds{1}$. Therefore, we have $\mathcal{R}_0=r(\mathcal{L})=r(-B^{-1}C)\ge \alpha>0$. Moreover since $-B^{-1}C$ is compact, by the Krein-Rutman theorem \cite[Theorem 19.2]{deimling2010nonlinear}, $\mathcal{R}_0=r(\mathcal{L})$ is a principal eigenvalue of $\mathcal{L}$, i.e., an eigenvalue that corresponds with a positive eigenvector. Hence, $\tilde \lambda_1=1/\mathcal{R}_0$ is a principal eigenvalue of the problem $B\bm\varphi=-\lambda C\bm\varphi$, which is  
\eqref{eig1}.
\end{proof}

\begin{remark}
If $\beta_j(x)>0$ for all $x\in\Omega_j$ and $j\in\Omega$, then  $-B^{-1}C$ is strongly positive. Then $\tilde\lambda_1:=1/\mathcal{R}_0$ is the unique eigenvalue of \eqref{eig1} that corresponds with a positive eigenvector. 
\end{remark}

\section{Threshold dynamics}
In this section, we show that $\mathcal{R}_0$ serves as a threshold value for the global dynamics of model \eqref{patc3}-\eqref{total}.  
Firstly, we show that $E_0$ is globally attractive if $\mathcal{R}_0<1$.
\begin{theorem}\label{theorem_R1}
Suppose that $(A0)-(A2)$ hold.
If $\mathcal{R}_0<1$, then the solution of \eqref{patc3}-\eqref{total} satisfies 
$$
\lim_{t\to\infty}  \left\|S_j(\cdot, t)-{\alpha_j N}/{|\Omega_j|}\right\|_{L^\infty(\Omega_j)}=\lim_{t\to\infty} \|I_j(\cdot, t)\|_{L^\infty(\Omega_j)}=0, \ \ \text{for all } j\in\Omega.
$$
\end{theorem}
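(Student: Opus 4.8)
The plan is to first convert the hypothesis $\mathcal{R}_0<1$ into the spectral condition $\lambda_1<0$ using Lemma \ref{lemma_pe}, and then to treat the infected and susceptible components separately: I would show $\bm I(\cdot,t)\to\bm 0$ by a linear comparison argument, and then show $\bm S(\cdot,t)\to\bm S^\ast:=(\alpha_1 N/|\Omega_1|,\dots,\alpha_n N/|\Omega_n|)$ by viewing the $\bm S$-equations as an asymptotically autonomous perturbation of a mass-conserving linear system.

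For the infected component, Lemma \ref{lemma_pe} gives $\lambda_1<0$ since $\mathcal{R}_0<1$; let $\bm\varphi=(\varphi_1,\dots,\varphi_n)$ be the corresponding positive principal eigenfunction of \eqref{eig}. Using $\beta_j S_j/(S_j+I_j)\le\beta_j$, each $I_j$ satisfies the differential inequality
\begin{equation*}
\f{\partial I_j}{\partial t}\le d_{I_j}\Delta I_j+d_I\sum_{k\ne j}\left(\f{L_{jk}\bar I_k}{|\Omega_j|}-L_{kj}I_j\right)+(\beta_j-\gamma_j)I_j,
\end{equation*}
so $\bm I$ is a subsolution of the linearized system \eqref{linear}. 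Since the coupling coefficients $L_{jk}/|\Omega_j|$ are nonnegative, \eqref{linear} is a cooperative (order-preserving) system and a comparison principle applies. The function $\overline{\bm I}(x,t):=Me^{\lambda_1 t}\bm\varphi(x)$ is an exact solution of \eqref{linear}, and using the boundedness from Theorem \ref{theorem_bound} together with the strict positivity of each $\varphi_j$ on the compact set $\bar\Omega_j$, I would fix $M>0$ large enough that $I_j(x,0)\le M\varphi_j(x)$ for all $x$ and $j$. Comparison then gives $0\le I_j(x,t)\le Me^{\lambda_1 t}\varphi_j(x)$, and $\lambda_1<0$ yields $\|I_j(\cdot,t)\|_{L^\infty(\Omega_j)}\to0$ for every $j\in\Omega$.

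For the susceptible component, once $\bm I\to\bm 0$ the reaction terms $-\beta_j S_j I_j/(S_j+I_j)+\gamma_j I_j$ tend to zero uniformly (they are bounded by $(\beta_j+\gamma_j)I_j$, with $\beta_j,\gamma_j$ bounded by $(A1)$), so the $\bm S$-equations become an asymptotically autonomous perturbation of the homogeneous linear system
\begin{equation*}
\f{\partial S_j}{\partial t}=d_{S_j}\Delta S_j+d_S\sum_{k\ne j}\left(\f{L_{jk}\bar S_k}{|\Omega_j|}-L_{kj}S_j\right),\qquad \f{\partial S_j}{\partial\nu}=0.
\end{equation*}
This limit system is cooperative and conserves $\sum_{j\in\Omega}\bar S_j$; its generator has spectral bound $0$, which is simple with positive eigenvector $\bm S^\ast$ (the DFE profile of Lemma \ref{lemma_DFE}), while the rest of its spectrum lies strictly in the left half-plane. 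Integrating the full $\bm S$-equations over $\Omega_j$, summing over $j$, and combining the conservation law \eqref{total} with $\bar I_j\to0$ yields $\sum_{j\in\Omega}\bar S_j(t)\to N$, which pins the solution's component in the one-dimensional kernel of the generator, while the spectral gap forces the complementary component to decay. Passing to the limit equation via the theory of asymptotically autonomous semiflows, the $\omega$-limit set of the full solution is contained in that of the limit system, whose unique equilibrium of total mass $N$ is $\bm S^\ast$; hence $\|S_j(\cdot,t)-\alpha_j N/|\Omega_j|\|_{L^\infty(\Omega_j)}\to0$.

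The hard part will be this second step. Because the homogeneous $\bm S$-operator has a genuine zero eigenvalue, its semigroup does not decay in the mass direction, so convergence cannot come from decay estimates alone: I must separately fix the kernel component through the conservation law \eqref{total} and control the transverse part via the spectral gap, and then rigorously transfer the asymptotics of the limit equation to the nonautonomous perturbed equation --- either by verifying the hypotheses of a theorem on asymptotically autonomous semiflows or by constructing mass-constrained super/subsolutions. Securing the \emph{uniform} (not merely $L^1$) decay of $\bm I$ in Step 1 is precisely what makes the perturbation vanish in $L^\infty$ and legitimizes this transfer.
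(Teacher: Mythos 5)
Your proposal is correct and follows essentially the same route as the paper: you convert $\mathcal{R}_0<1$ into $\lambda_1<0$ via Lemma \ref{lemma_pe} and bound $\bm I$ by a decaying solution of the linearization \eqref{linear} (the paper compares with the solution having the same initial data and invokes a semigroup decay estimate $Ce^{\lambda_1 t/2}$, whereas you use the explicit supersolution $Me^{\lambda_1 t}\bm\varphi$ --- an immaterial difference), and then treat the $\bm S$-equations as an asymptotically autonomous perturbation whose mass is pinned by \eqref{total}. For the step you flag as hard, the paper proceeds exactly as you outline but more concretely: it first integrates the $S_j$-equations to get an ODE system for the totals $\bar S_j$, deduces $\bar S_j(t)\to\alpha_j N$, and thereby reduces the limit system to decoupled inhomogeneous equations with strictly negative spectral bound, which sidesteps the abstract kernel/spectral-gap decomposition before applying the theory of asymptotically autonomous semiflows.
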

\begin{proof}
By \eqref{patc3}, we have 
\begin{equation*}
\ds\f{\partial  I_j}{\partial t}\le d_{I_j}\Delta I_j+d_I\sum_{k\in\Omega}\left(\frac{L_{jk} \bar I_k}{|\Omega_j|}-  L_{kj} I_j\right)+ (\beta_j-\gamma_j)  I_j,  \ \ j\in\Omega.
\end{equation*}
Let $(\tilde I_1, \dots, \tilde I_n)$ be the solution of 
\begin{equation}\label{pc}
\begin{cases}
\ds\f{\partial \tilde I_j}{\partial t}=d_{I_j}\Delta \tilde I_j+d_I\sum_{k\in\Omega}\left(\frac{L_{jk}\bar{{\tilde  {I_k}}}}{|\Omega_j|}-  L_{kj}\tilde  I_j\right)+(\beta_j-\gamma_j)  \tilde  I_j,&x\in\Omega_j, t>0, \ j\in\Omega, \\
\ds \frac{\partial \tilde I_j}{\partial \nu}= 0 &x\in\partial\Omega_j, t>0,\ j\in\Omega,\\
\ds \tilde I_j(x, 0)= I_j(x, 0) &x\in\Omega_j, \ j\in\Omega.
\end{cases}
\end{equation}
Since  $\mathcal{R}_0<1$, by the the proof of Lemma \ref{lemma_pe}, we  have $\lambda_1=s(A)<0$, where $A$ is defined as in the proof of Lemma \ref{lemma_pe}. By \cite[Propositions 4.12-4.13]{webb1985theory} or \cite[Corrolary 4.3.12]{engel1999one}, there exists $C>0$ such that 
$$ 
 \tilde I_j(x, t)\le Ce^{\lambda_1t/2}, \ \ \text{for all } x\in\bar\Omega_j, t>0\ \text{and}\   j\in\Omega.
$$
By Lemma \ref{lemma_comp}, we have $I_j(x, t)\le \tilde I_j(x, t)$ for all  $x\in\bar\Omega_j$, $t>0$ and  $j\in\Omega$. Therefore,
 $I_j(x, t)\to 0$ uniformly in $x\in\bar\Omega_j$ as $t\to\infty$. Then the equations of $S_j$ can be written as 
\begin{equation}\label{pc1}
\begin{cases}
\ds\f{\partial  S_j}{\partial t}=d_{S_j}\Delta  S_j+d_S\sum_{k\in\Omega}\left(\frac{L_{jk}  \bar S_k}{|\Omega_j|}-  L_{kj}  S_j\right)+f_j(x, t),&x\in\Omega_j, t>0, \ j\in\Omega, \\
\ds \frac{\partial S_j}{\partial \nu}= 0 &x\in\partial\Omega_j, t>0, \ j\in\Omega,
\end{cases}
\end{equation}
 where $f_j(x, t)$ satisfies $|f_j(x, t)|\le Ce^{\lambda_1 t/2}$ for all $x\in\bar\Omega_j$ for some $C>0$. Integrating the first equation in \eqref{pc1} over $\Omega_j$, we obtain 
\begin{equation}\label{pc2}
\ds\f{d  \bar S_j}{d t}=d_I\sum_{k\in\Omega} L_{jk}  \bar S_k+\bar f_j(t), \ \ j\in\Omega. 
\end{equation}
By $\sum_{j\in\Omega} (\bar S_j+\bar I_j)=N$ and $I_j(\cdot, t)\to 0$, we have $\sum_{j\in\Omega} \bar S_j\to N$. By $\bar f_j(t)\le Ce^{\lambda_1t/2}$ and \eqref{pc2}, we obtain $\bar S_j(t)\to \alpha_j N$ as $t\to\infty$ for all $j\in\Omega$. Therefore, the following is a limiting system of \eqref{pc1}:
\begin{equation}\label{pcl}
\begin{cases}
\ds\f{\partial  \tilde S_j}{\partial t}=d_{S_j}\Delta  \tilde S_j+d_I\sum_{k\in\Omega}\left(\frac{ L_{jk}  \alpha_k N}{|\Omega_j|}-  L_{kj}  \tilde S_j\right), &x\in\Omega_j, t>0,\ j\in\Omega, \\
\ds \frac{\partial \tilde S_j}{\partial \nu}= 0 &x\in\partial\Omega_j, t>0,\ j\in\Omega.
\end{cases}
\end{equation}
It is easy to see that $\tilde S_j(x, t)\to \alpha_jN/|\Omega|$ uniformly for $x\in\Omega_j$ as $t\to\infty$, where $\tilde S_j$ is any solution of \eqref{pcl}. By the theory of asymptotically autonomous semiflows (\cite{thieme1992convergence}), we have 
$S_j(x, t)\to \alpha_jN/|\Omega|$ uniformly for $x\in\Omega_j$ as $t\to\infty$. 
\end{proof}

Then we show that the solutions are uniformly persistent if $\mathcal{R}_0>1$.

\begin{theorem}
Suppose that $(A0)-(A2)$ hold. If $\mathcal{R}_0>1$, then there exists $\varepsilon_0>0$ such that  the solution of \eqref{patc3}-\eqref{total} satisfies 
\begin{equation}\label{limits}
\liminf_{t\to\infty}  \min_{x\in\bar\Omega_j} S_j(x, t)>\varepsilon_0 \ \  \text{and}\ \  \liminf_{t\to\infty}  \min_{x\in\bar\Omega_j}I_j(x, t)>\varepsilon_0, \ \ \text{for all } j\in\Omega,
\end{equation}
for any initial data $(\bm S_0, \bm I_0)$ with $\bm I_0\not\equiv 0$. 
Moreover, the model has a positive equilibrium.
\end{theorem}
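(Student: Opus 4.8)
The plan is to realize \eqref{patc3}-\eqref{total} as a dissipative semiflow and apply the theory of uniform persistence of Magal and Zhao, then extract the positive equilibrium from the existence of a global attractor in the interior. First I would view the solution map as a semiflow $\Phi(t)$ on the state space $\mathbb{X}^+=\prod_{j\in\Omega} C(\bar\Omega_j,\mathbb{R}_+^2)$ intersected with the conservation constraint \eqref{total}. By Theorem \ref{theorem_bound} this semiflow is point dissipative, and parabolic smoothing makes $\Phi(t)$ compact for $t>0$, so $\Phi$ admits a global compact attractor. I would decompose the constraint set into $X_0=\{(\bm S,\bm I):\bm I\not\equiv\bm 0\}$ and $\partial X_0=\{(\bm S,\bm 0)\}$. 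Both are positively invariant: on $\partial X_0$ this is immediate, while on $X_0$ the strong maximum principle applied to the cooperative, irreducible $\bm I$-subsystem (irreducibility coming from (A2) and Lemma \ref{simp}) forces $I_j(x,t)>0$ for all $x,j$ and $t>0$ whenever $\bm I_0\not\equiv\bm 0$.

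Next I would analyze the boundary dynamics. On $\partial X_0$ the $\bm S$-equations reduce to the linear nonlocal problem \eqref{DFE}, whose solutions converge to the DFE, so the only invariant set in $\partial X_0$ is the singleton $\{E_0\}$, which is compact, isolated and acyclic. The crux is to prove that $E_0$ is a uniform weak repeller for $X_0$, i.e. that there is $\delta>0$ with $\limsup_{t\to\infty}\|\Phi(t)(\bm S_0,\bm I_0)-E_0\|\ge\delta$ for every $(\bm S_0,\bm I_0)\in X_0$. This is exactly where $\mathcal{R}_0>1$ enters: by Lemma \ref{lemma_pe}, $\mathcal{R}_0>1$ is equivalent to $\lambda_1=s(A)>0$, with a positive eigenfunction $\bm\varphi$ of \eqref{eig}. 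Arguing by contradiction, if a trajectory stayed within distance $\delta$ of $E_0$, then for large $t$ one has $S_j/(S_j+I_j)\ge 1-\epsilon(\delta)$, hence $\partial_t I_j\ge d_{I_j}\Delta I_j+d_I\sum_{k\ne j}(L_{jk}\bar I_k/|\Omega_j|-L_{kj}I_j)+((1-\epsilon)\beta_j-\gamma_j)I_j$; comparison with \eqref{eig} perturbed by $\epsilon$, whose principal eigenvalue tends to $\lambda_1>0$, forces $I_j$ to grow like $e^{\lambda_1 t/2}$ and contradicts the smallness of $\bm I$. The main obstacle here is the nonlocal coupling through the averages $\bar I_k$, which I expect to handle by testing against the positive eigenfunction $\bm\varphi$ and invoking Lemma \ref{lemma_comp}, so that the eigenvalue comparison survives the nonlocal terms.

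With $\{E_0\}$ isolated, acyclic and weakly repelling, the Magal--Zhao persistence theorem upgrades weak to strong uniform persistence: there is $\eta>0$ with $\liminf_{t\to\infty} d(\Phi(t)(\bm S_0,\bm I_0),\partial X_0)\ge\eta$, that is $\liminf_{t\to\infty}\sum_j\|I_j(\cdot,t)\|\ge\eta$. To convert this integrated bound into the pointwise, patchwise estimate $\liminf_{t\to\infty}\min_x I_j>\varepsilon_0$ for every $j$, I would apply a parabolic Harnack inequality together with the irreducibility of $L$, which propagates a positive lower bound from one patch to all the others. For the susceptible class I would use $-\beta_j S_j I_j/(S_j+I_j)\ge-\beta_j S_j$ together with $\gamma_j I_j\ge 0$ and $\bm\gamma\ne\bm 0$: then $\bm S$ is a supersolution of a cooperative, irreducible linear system carrying a persistent recovery source, and its unique positive equilibrium supplies the uniform lower bound $\varepsilon_0$ for all the $S_j$.

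Finally, uniform persistence together with point dissipativity and compactness of $\Phi(t)$ yields a global attractor $A_0$ contained in the interior $X_0$; the steady-state result of Magal and Zhao for uniformly persistent semiflows possessing a global attractor then produces a fixed point of $\Phi$ in $X_0$, i.e. a positive equilibrium of \eqref{patc3}-\eqref{total}. Relative to the purely local reaction-diffusion case, I expect the two genuinely delicate points to be the weak-repeller estimate in the presence of the nonlocal averaging terms and the lower bound for $\bm S$ in patches where $\gamma_j\equiv 0$, both resolved through the irreducibility assumed in (A2).
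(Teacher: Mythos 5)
Your proposal is correct and follows essentially the same architecture as the paper: the same decomposition into $Y_0=\{\bm I\not\equiv\bm 0\}$ and $\partial Y_0$, the same invariance claims, the same weak-repeller argument at $E_0$ via the contradiction $S_j/(S_j+I_j)\ge 1-\varepsilon$ combined with the perturbed eigenvalue problem and the comparison lemma, and the same passage from uniform persistence plus a global attractor to a fixed point in the interior. The one technical divergence is in how the pointwise lower bound on $I_j$ is extracted: you invoke persistence with respect to the metric distance to $\partial X_0$ and then propose to upgrade the resulting integrated bound via a parabolic Harnack inequality and irreducibility, whereas the paper applies the persistence theorem directly with the generalized distance function $\rho(\bm u)=\min\{I_j(x):x\in\bar\Omega_j,\,j\in\Omega\}$, so that the pointwise, patchwise estimate falls out of the abstract theorem with no Harnack step; your route works but costs an extra (and somewhat delicate) uniformization argument that the paper's choice of $\rho$ avoids. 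On the other hand, your explicit treatment of the lower bound for $\bm S$ --- viewing $\bm S$ as a supersolution of a cooperative irreducible linear system with the persistent source $\gamma_jI_j$ and negative spectral bound --- is more careful than the paper, which obtains \eqref{limits} for $S_j$ only implicitly from the definition of $\rho$ (which involves only the $I$ components), so on that point your write-up actually fills in a detail the paper glosses over.
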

\begin{proof}  
 Let 
\begin{equation*}
\textstyle Y=\left\{(\bm S, \bm I)\in \left[\prod_{j\in\Omega} C(\Omega_j, \mathbb{R}+) \right]^2: \ \ \sum_{j\in\Omega}(\bar S_j+\bar I_j)=N\right\},
\end{equation*}
and
$$
\partial Y_0=\{(\bm S, \bm I)\in Y: \ \ \bm I\equiv 0\}.
$$
Denote $Y_0=Y\backslash \partial Y_0$, which is relatively open in $Y$.
 Let $\Phi(t): Y\rightarrow Y$ be the semiflow induced by the solution of  model \eqref{patc3}-\eqref{total}, i.e. $\Phi(t)(\bm S_0, \bm I_0)=(\bm S(\cdot, t), \bm I(\cdot, t))$ for all $t\ge 0$, where $(\bm S, \bm I)=((S_1, \dots,  S_n), (I_1, \dots,  I_n))$ is the solution of \eqref{patc3}-\eqref{total} with initial data $(\bm S_0, \bm I_0)\in Y$.

\underline{Claim 1}.  $Y_0$ is positively invariant with respect to $\Phi(t)$, i.e. $\Phi(t)Y_0\subseteq Y_0$ for all $t\ge 0$.

Let $(\bm S, \bm I)$ be a solution of \eqref{patc3}-\eqref{total} with $\bm I(x, 0)\not\equiv 0$.  By the nonnegativity of the solution and \eqref{patc3}, we have
\begin{equation}\label{Icc}
\begin{cases}
\ds\f{\partial   I_j}{\partial t}\ge d_{I_j}\Delta   I_j+d_I\sum_{k\in\Omega}\left(\frac{L_{jk}  \bar I_k}{|\Omega_j|}-  L_{kj}   I_j\right)-\gamma_j I_j, &x\in\Omega_j, t>0, \ j\in\Omega, \\
\ds \frac{\partial  I_j}{\partial \nu}= 0, &x\in\partial\Omega_j, t>0, \ j\in\Omega.
\end{cases}
\end{equation}
By Lemma \ref{lemma_comp}, we have 
\begin{equation}\label{pos}
I_j(x, t)>0  \ \text{for all}\ x\in\Omega_j, \ t>0 \ \text{and}\  j\in\Omega. 
\end{equation}
This implies $\Phi(t)Y_0\subseteq Y_0$ for all $t\ge 0$.

\underline{Claim 2}.  $\partial Y_0$ is positively invariant with respect to $\Phi(t)$, and for every $(\bm S_0, \bm I_0) \in \partial Y_0$ the $\omega-$limit set $\omega (\bm S_0, \bm I_0)$  of $(\bm S_0, \bm I_0)$  is the singleton $\{E_0\}$. 

If $\bm I(\cdot, 0)\equiv \bm 0$, then $\bm I(\cdot, t)\equiv \bm 0$ for all  $t>0$. Therefore, $\partial Y_0$ is positively invariant. 
To see $\omega (\bm S_0, \bm I_0)=\{E_0\}$ if $(\bm S_0, \bm I_0)\in\partial X_0$, we only need to show that $\bm S=(S_1,\dots S_n)$  satisfies $S_j(x, t)\to \alpha_j N/|\Omega_j|$ uniformly for $x\in\bar\Omega_j$ as $t\to\infty$ for all $j\in\Omega$, where $\bm S$ is a solution of 
\begin{equation}\label{pd}
\begin{cases}
\ds\f{\partial   S_j}{\partial t}=d_{S_j}\Delta   S_j+d_I\sum_{k\in\Omega}\left(\frac{L_{jk}  \bar S_k}{|\Omega_j|}-  L_{kj}   S_j\right), &x\in\Omega_j, t>0,\ j\in\Omega, \\
\ds \frac{\partial  S_j}{\partial \nu}= 0, &x\in\partial\Omega_j, t>0,\ j\in\Omega,\\
\ds \sum_{j\in\Omega}\int_{\Omega_j} S_j(x, 0) = N. &
\end{cases}
\end{equation}
The proof is similar to that for the convergence of the solutions of \eqref{pc1}, so we omit it here.

We will follow the terminology and method in \cite[Chapter 1]{ZhaoBook} to complete the proof. Define $\rho:\,\,Y \to  [0, \infty)$ by
$$
\rho (\bm u)= \min \{I_j(x): x \in \bar{\Omega}_j \ \text{and}\ j\in\Omega\},\,\,\, \bm u=(\bm S, \bm I) \in X.
$$
By the proof of claim 1, we have $\rho(\Phi(t)\bm u_0) >0$  for all $t>0$ and $\bm u_0 \in (Y_0\cap \rho^{-1}(0)) \cup \rho^{-1} (0, \infty)$. Thus, $\rho$ is a generalized distance function for the semiflow $\Phi (t): Y \to Y$.

\underline{Claim 3}.  $W^s(E_0) \cap \rho^{-1}(0, \infty)= \emptyset$, where $W^s(E_0)$ denotes the stable manifold of $E_0$.

Let $(\bm S_0, \bm I_0) \in \rho^{-1}(0,\infty)$  and  $(\bm S, \bm I)$ be the corresponding solution of \eqref{patc3}-\eqref{total} with initial data $(\bm S_0, \bm I_0)$.  Suppose to the contrary that 
$(\bm S(\cdot, t), \bm I(\cdot, t))\to E_0$ uniformly  as $ t \to \infty$. Then 
 for any $\varepsilon>0$   there exists $t_1>0$ such that $S_j(x, t)/(S_j(x, t)+I_j(x, t))>1-\varepsilon$  for all $(x, t)\in \bar\Omega_j\times [t_1, \infty)$ and $j\in\Omega$.  Hence,  $\bm I$ is  an upper solution of the following  problem 
 \begin{equation}\label{linearc}
\begin{cases}
\ds\f{\partial  \check I_j}{\partial t}=d_{I_j} \Delta \check I_j+d_I\sum_{k\neq j} \left(\f{ L_{jk}{\bar {\check I}}_k}{|\Omega_j|}- L_{kj}  \check I_j\right)+(1-\varepsilon)\beta_j  \check I_j-\gamma_j  \check I_j,&x\in\Omega_j, t>t_1, j\in\Omega,\\
\ds\f{\partial \check I_j}{\partial \nu}=0, &x\in\partial\Omega_j, t>t_1, j\in\Omega,\\
\ds \check I_j(x, t_1)=\delta \varphi_j(x, t_1), &x\in\Omega_j, j\in\Omega,
\end{cases}
\end{equation}
where $\bm \varphi=(\varphi_1, \dots, \varphi_n)$ is a positive eigenvector of 
\begin{equation}\label{eigep}
\begin{cases}
\ds\lambda   \varphi_j=d_{I_j}\Delta \varphi_j+d_I\sum_{k\neq j} \left(\f{L_{jk}\bar \varphi_k}{|\Omega_j|}-  L_{kj}  \varphi_j\right)+((1-\varepsilon)\beta_j-\gamma_j)  \varphi_j,&x\in\Omega_j,  j\in\Omega,\\
\ds\f{\partial \varphi_j}{\partial \nu}=0, &x\in\partial\Omega_j,  j\in\Omega,
\end{cases}
\end{equation}
and $\delta>0$ is small such that $\bm I(\cdot, t_1)\ge \delta \bm \varphi(\cdot, t_1)$. Since $\mathcal{R}_0>1$, by Lemma \ref{lemma_pe}, the principal eigenvalue $\lambda_1$ of \eqref{eig}  satisfies $\lambda_1>1$. So we can choose $\varepsilon>0$ small such that $\lambda_{1}^\varepsilon>1$, where $\lambda_{1}^\epsilon$ is the principal eigenvalue of \eqref{eigep}. It is easy to check that $\check I_j(x, t)=\delta e^{\lambda_1^\varepsilon t}\varphi_j(x, t)$  is the unique solution of \eqref{linearc}. By Lemma \ref{lemma_comp}, we have $I_j(x, t)\ge \check I_j(x, t)$ for all $x\in\Omega_j$ and $j\in\Omega$. This implies $I_j(x, t)\to\infty$ as $t\to\infty$, which contradicts the boundedness of $I_j$.

By Theorem \ref{theorem_bound}, $\Phi(t)$ is dissipative. It is easy to see that $\Phi(t)$ is compact. Therefore, by \cite[Theorem 1.1.3]{ZhaoBook}, $\Phi(t)$ admits a  global attractor. Then by the above claims and  the well-known abstract  persistence theory (see, e.g.,  \cite[Theorem 1.3.2]{ZhaoBook}),   $\Phi(t)$ is uniformly persistent with respect to  $(Y_0, \partial Y_0, \rho)$ in the sense that 
there exists $\varepsilon_0>0$ such that  $\lim\inf \rho(\Phi(t) \bm u_0) \ge 
\varepsilon_0$ for any $\bm u_0 \in Y$ with $\rho(\bm u_0)>0$ (i.e., $\bm u_0 \in Y_0$).  By the definition of $\rho$, we obtain \eqref{limits}.   By \cite[Theorem 1.3.7]{ZhaoBook}, $\Phi(t): Y_0\to Y_0$ admits a global attractor. Then by \cite[Theorem 1.3.11]{ZhaoBook}, $\Phi(t)$ has an equilibrium in $Y_0$. By the maximum principle for elliptic equations, it is easy to see that the equilibrium in $Y_0$ is positive. 
\end{proof}

\section{Analysis of $\mathcal{R}_0$}

In this section, we study the properties of $\mathcal{R}_0$. In the following, we set $a/0=0$ if $a=0$ and $a/0=\infty$ if $a>0$. 
First, we obtain a bound for $\mathcal{R}_0$: 
\begin{lemma}\label{lemma_Rb}
Suppose that \emph{(A1)-(A2)} hold. Then $\mathcal{R}_0$ satsifies: 
\begin{equation*}
 \inf_{x\in\bar\Omega_j, j\in\Omega} \frac{\beta_j(x)}{\gamma_j(x)}\le \mathcal{R}_0\le  \sup_{x\in\bar\Omega_j, j\in\Omega} \frac{\beta_j(x)}{\gamma_j(x)}.
\end{equation*}
\end{lemma}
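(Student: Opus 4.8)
The plan is to extract an explicit weighted-average representation of $\mathcal{R}_0$ from the eigenvalue characterization in Lemma \ref{lemma_pe} and then bound that average by the pointwise extrema of $\beta_j/\gamma_j$. By that lemma, $\tilde\lambda_1 = 1/\mathcal{R}_0$ is a principal eigenvalue of \eqref{eig1}, so there is a positive eigenvector $\bm\varphi=(\varphi_1,\dots,\varphi_n)$, with each $\varphi_j>0$ on $\bar\Omega_j$, satisfying
$$
-\frac{1}{\mathcal{R}_0}\beta_j\varphi_j = d_{I_j}\Delta\varphi_j + d_I\sum_{k\neq j}\Big(\frac{L_{jk}\bar\varphi_k}{|\Omega_j|}-L_{kj}\varphi_j\Big) - \gamma_j\varphi_j,\qquad j\in\Omega.
$$
Since $\beta_j$ and $\gamma_j$ are H\"older continuous, elliptic regularity gives $\varphi_j\in C^2(\bar\Omega_j)$, which legitimizes the integrations below.

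First I would integrate this identity over $\Omega_j$ and sum over $j\in\Omega$, exploiting two structural cancellations. The diffusion term drops out, because $\int_{\Omega_j}\Delta\varphi_j\,dx=0$ by the divergence theorem together with the homogeneous Neumann boundary condition. The migration term also vanishes after summation: integrating over $\Omega_j$ produces $d_I\sum_{k\neq j}(L_{jk}\bar\varphi_k-L_{kj}\bar\varphi_j)$, and summing over $j$ and relabeling indices gives $\sum_j\sum_{k\neq j}L_{jk}\bar\varphi_k=-\sum_k L_{kk}\bar\varphi_k=\sum_j\sum_{k\neq j}L_{kj}\bar\varphi_j$, where I use that each column of $L$ sums to zero (Assumption (A2)), i.e. $\sum_{j\neq k}L_{jk}=-L_{kk}$. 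What survives is the scalar identity
$$
\frac{1}{\mathcal{R}_0}\sum_{j\in\Omega}\int_{\Omega_j}\beta_j\varphi_j\,dx = \sum_{j\in\Omega}\int_{\Omega_j}\gamma_j\varphi_j\,dx,
$$
and since $\bm\gamma\neq\bm0$, $\gamma_j\ge0$, and $\varphi_j>0$, the right-hand side is strictly positive, so I may solve for
$$
\mathcal{R}_0 = \frac{\sum_{j\in\Omega}\int_{\Omega_j}\beta_j\varphi_j\,dx}{\sum_{j\in\Omega}\int_{\Omega_j}\gamma_j\varphi_j\,dx}.
$$

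Finally I would read off the two inequalities from this weighted average. Writing $m=\inf_{x\in\bar\Omega_j,\,j\in\Omega}\beta_j/\gamma_j$ and $M=\sup_{x\in\bar\Omega_j,\,j\in\Omega}\beta_j/\gamma_j$, the convention $a/0=0$ for $a=0$ and $a/0=\infty$ for $a>0$ is exactly what makes the pointwise bounds $m\,\gamma_j(x)\le\beta_j(x)\le M\,\gamma_j(x)$ hold on all of $\bar\Omega_j$: at a point with $\gamma_j(x)>0$ they are the definitions of $m$ and $M$; at a point with $\gamma_j(x)=0=\beta_j(x)$ both reduce to $0\le0$; and a point with $\gamma_j(x)=0<\beta_j(x)$ forces $M=\infty$, making that bound vacuous (and symmetrically a point with $\beta_j=0<\gamma_j$ forces $m=0$). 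Multiplying these pointwise bounds by $\varphi_j>0$, integrating, summing, and dividing by the positive denominator $\sum_j\int_{\Omega_j}\gamma_j\varphi_j$ yields $m\le\mathcal{R}_0\le M$, which is the claim. The argument is elementary once \eqref{eig1} supplies the ratio formula; the only points requiring genuine care are the positivity of the denominator (guaranteed by $\bm\gamma\neq\bm0$ and $\varphi_j>0$) and the degenerate zeros of $\gamma_j$, which are precisely handled by the stated convention.
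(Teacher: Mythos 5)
Your proof is correct and follows essentially the same route as the paper: integrate the eigenvalue identity \eqref{eig1} over each $\Omega_j$, sum over $j$ using the Neumann condition and the zero column sums of $L$ to obtain $\mathcal{R}_0=\sum_j\int_{\Omega_j}\beta_j\varphi_j\big/\sum_j\int_{\Omega_j}\gamma_j\varphi_j$, and then bound this ratio by the pointwise extrema of $\beta_j/\gamma_j$. The paper's write-up is terser (it simply "replaces $\beta_j$ by $\gamma_j\,\beta_j/\gamma_j$" in the ratio), while you spell out the cancellations and the degenerate zeros of $\gamma_j$ more carefully, but the argument is identical in substance.
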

\begin{proof}
Let $\varphi=(\varphi_1, ..., \varphi_n)$ be a positive eigenfunction corresponding with $\tilde\lambda_1$ for \eqref{eig1}. Integrating the first equation of \eqref{eig1} over $\Omega_j$ and summing up the equations over $j\in\Omega$, we obtain 
\begin{equation}\label{uplow0}
\tilde\lambda_1 \sum_{j\in\Omega}\int_{\Omega_j}\beta_j\varphi_j =\sum_{j\in\Omega}\int_{\Omega_j}\gamma_j\varphi_j ,
\end{equation}
and consequently, 
\begin{equation}\label{uplow}
\mathcal R_0= \ds\frac{\sum_{j\in\Omega}\int_{\Omega_j}\beta_j\varphi_j} {\sum_{j\in\Omega}\int_{\Omega_j}\gamma_j\varphi_j} ,
\end{equation}
Let 
$$
M:=\ds\sup_{x\in\bar\Omega_j, j\in\Omega} \ds\frac{\beta_j(x)}{\gamma_j(x)}.
$$ 
Clearly, if $M=\infty$ then $R_0\le M$. If $M<\infty$, replacing  $\beta_j$ by $\gamma_j\ds\frac{\beta_j}{\gamma_j}$ in \eqref{uplow}, one easily sees   $\mathcal{R}_0\le M$. Using similar arguments, we can obtain the lower bound for $\mathcal R_0$.
\end{proof}

\subsection{Symmetric $L$}
In this section, we assume that $L$ is symmetric and study the properties of $\mathcal{R}_0$. Multiplying the first equation of \eqref{eig1} by $\varphi_j$ and integrating over $\Omega_j$, we obtain 
\begin{equation}\label{phi1}
\frac{1}{\mathcal{R}_0}\int_{\Omega_j} \beta_j\varphi_j^2= d_{I_j}\int_{\Omega_j} |\triangledown\varphi_j|^2+d_I L_{jj}\left(\frac{1}{|\Omega_j|}\bar\varphi_j^2- \int_{\Omega_j} \varphi_j^2\right)-d_I\sum_{k\in\Omega} \f{L_{jk} }{|\Omega_j|}\bar \varphi_j\bar \varphi_k+ \int_{\Omega_j} \gamma_j\varphi_j^2,
\end{equation}
for each $j\in\Omega$. 
Multiplying \eqref{phi1} by $|\Omega_j|$ and summing up all the equations, we obtain
\begin{eqnarray}\label{phi2}
\frac{1}{\mathcal{R}_0}\sum_{j\in\Omega} |\Omega_j|\int_{\Omega_j} \beta_j\varphi_j^2&=&\sum_{j\in\Omega} d_{I_j}|\Omega_j|\int_{\Omega_j} |\triangledown\varphi_j|^2+d_I \sum_{j\in\Omega} L_{jj}\left(\bar\varphi_j^2- |\Omega_j|\int_{\Omega_j} \varphi_j^2\right) \nonumber\\
&&-d_I\sum_{j,k\in\Omega} {L_{jk} }\bar \varphi_j\bar \varphi_k+ \sum_{j\in\Omega}|\Omega_j|\int_{\Omega_j} \gamma_j\varphi_j^2.
\end{eqnarray}
By the H\"{o}lder's inequality, we have
\begin{equation}\label{phijsym}
\bar\varphi_j^2- |\Omega_j|\int_{\Omega_j} \varphi_j^2\le0,
\end{equation}
where the equality holds if and only if $\varphi_j$ is a constant. 
Moreover since $L$ is symmetric, it is negative semi-definite and
\begin{equation}\label{sumsym}
\sum_{j,k\in\Omega} {L_{jk} }\bar \varphi_j\bar \varphi_k\le 0,
\end{equation}
where the equality holds if and only if $\bar\varphi_1= \dots=\bar\varphi_n$.
Therefore, by \eqref{phi2}, we obtain
$$
\mathcal{R}_0=\frac{\sum_{j\in\Omega} |\Omega_j|\int_{\Omega_j} \beta_j\varphi_j^2}{f(\bm\varphi, d_I, d_{I_1}, ..., d_{I_n})+ \sum_{j\in\Omega}|\Omega_j|\int_{\Omega_j} \gamma_j\varphi_j^2},
$$
where
\begin{equation}\label{fsym}
\begin{split}
f(\bm\varphi, d_I, d_{I_1}, ..., d_{I_n}):=&\sum_{j\in\Omega} d_{I_j}|\Omega_j|\int_{\Omega_j} |\triangledown\varphi_j|^2 -d_I\sum_{j,k\in\Omega} {L_{jk} }\bar \varphi_j\bar \varphi_k\\
&+d_I \sum_{j\in\Omega} L_{jj}\left(\bar\varphi_j^2- |\Omega_j|\int_{\Omega_j} \varphi_j^2\right)\ge0.
\end{split}
\end{equation}
Here, we used  $L_{jj}<0$ for all $j\in\Omega$. 

\begin{lemma}
Suppose that \emph{(A1)-(A2)} hold and $L$ is symmetric. Let $\mathcal{R}_0$ be defined as above. Then, we have
\begin{equation}\label{var}
\mathcal{R}_0=\sup_{\substack{\bm\varphi\in H\\ 
\bm\varphi\neq \bm 0}}\left\{\frac{\sum_{j\in\Omega} |\Omega_j|\int_{\Omega_j} \beta_j\varphi_j^2}{f(\bm\varphi, d_I, d_{I_1}, ..., d_{I_n})+ \sum_{j\in\Omega}|\Omega_j|\int_{\Omega_j} \gamma_j\varphi_j^2}\right\},
\end{equation}
where $H:=\Pi_{j=1}^n H^1(\Omega_j)$.
\end{lemma}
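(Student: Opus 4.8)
The plan is to read \eqref{var} as the Rayleigh--quotient characterization of the principal eigenvalue of the self-adjoint problem \eqref{eig1}. Write $P(\bm\varphi)=\sum_{j\in\Omega}|\Omega_j|\int_{\Omega_j}\beta_j\varphi_j^2$ and $Q(\bm\varphi)=f(\bm\varphi,d_I,d_{I_1},\dots,d_{I_n})+\sum_{j\in\Omega}|\Omega_j|\int_{\Omega_j}\gamma_j\varphi_j^2$, so that the asserted identity is $\mathcal{R}_0=\sup_{\bm\varphi\neq\bm0}P(\bm\varphi)/Q(\bm\varphi)$. The first point is structural: because $L$ is symmetric, the bilinear form obtained by polarizing $Q$ is symmetric (this is precisely where symmetry of $L$ enters, through the term $-d_I\sum_{j,k}L_{jk}\bar\varphi_j\bar\varphi_k$), so \eqref{eig1} is a self-adjoint generalized eigenvalue problem with ``stiffness'' $Q$ and ``mass'' $P$. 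From \eqref{fsym} together with $\bm\gamma\neq\bm0$ one checks that $P(\bm\varphi)\ge0$ and $Q(\bm\varphi)>0$ for every $\bm\varphi\neq\bm0$, so the quotient is well defined.

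The inequality $\sup_{\bm\varphi\neq\bm0}P/Q\ge\mathcal{R}_0$ is immediate. Let $\bm\varphi^{*}$ be the positive principal eigenfunction of \eqref{eig1} furnished by Lemma \ref{lemma_pe}; it lies in $H$ by elliptic regularity, and the very computation \eqref{phi1}--\eqref{phi2} that defines $Q$ shows $\frac{1}{\mathcal{R}_0}P(\bm\varphi^{*})=Q(\bm\varphi^{*})$, i.e.\ $P(\bm\varphi^{*})/Q(\bm\varphi^{*})=\mathcal{R}_0$.

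For the reverse inequality I would set up the standard variational machinery. The key technical step is to prove that $Q$ is coercive on $H$, i.e.\ $Q(\bm\varphi)\ge c\,\|\bm\varphi\|_{H}^{2}$ for some $c>0$; granting this, $(H,Q)$ is a Hilbert space with norm equivalent to the usual one, and since $P$ pairs only $L^2$-factors while $H\hookrightarrow\prod_j L^2(\Omega_j)$ compactly (Rellich), the Riesz operator $K$ defined by $Q(K\bm u,\bm v)=P(\bm u,\bm v)$ is compact, self-adjoint and positive. The spectral theorem for such operators then gives $\sup_{\bm\varphi\neq\bm0}P(\bm\varphi)/Q(\bm\varphi)=\|K\|=r(K)$. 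I would prove coercivity by contradiction: a normalized sequence with $Q\to0$ has vanishing gradients (so its limit is patchwise constant $c_j$), the term $-d_I\sum_{j,k}L_{jk}\bar\varphi_j\bar\varphi_k\to0$ forces $(|\Omega_j|c_j)_j$ into the kernel of $L$, which by irreducibility (Lemma \ref{simp}) is spanned by $\bm1$, and finally $\sum_j|\Omega_j|\int_{\Omega_j}\gamma_j\varphi_j^2\to0$ together with $\bm\gamma\neq\bm0$ forces all $c_j=0$, contradicting normalization.

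It remains to identify $r(K)$ with $\mathcal{R}_0$. Rewriting the weak form of \eqref{eig1} as $Q(\bm\varphi,\bm v)=\tilde\lambda_1 P(\bm\varphi,\bm v)$ shows that the positive eigenfunction $\bm\varphi^{*}$ satisfies $K\bm\varphi^{*}=\mathcal{R}_0\bm\varphi^{*}$; since $K$ is a positive operator and $\bm\varphi^{*}$ is strictly positive, hence interior to the cone, the standard Krein--Rutman fact that a strictly positive eigenfunction of a positive operator can only correspond to the spectral radius yields $\mathcal{R}_0=r(K)$, which closes the two inequalities. The main obstacle is the coercivity lemma, and more generally the correct bookkeeping of the nonlocal coupling through $L$ --- its symmetry is what makes $Q$ self-adjoint, while its negative semidefiniteness and irreducibility are what make $Q$ coercive, and these must be tracked carefully since the familiar single-equation ground-state arguments do not transfer verbatim to the patch-coupled system.
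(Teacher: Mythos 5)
Your argument is correct in outline, but it reaches \eqref{var} by a genuinely different route than the paper. The paper works on $V=\Pi_j L^2(\Omega_j)$ equipped with the $\beta$-weighted inner product $\langle\bm u,\bm v\rangle=\sum_j|\Omega_j|\int_{\Omega_j}\beta_j u_jv_j$, shows $-B^{-1}C$ is self-adjoint and compact there, expands an arbitrary $\bm\varphi\in H$ in the resulting eigenbasis (after verifying that this basis also spans $H$ for the energy inner product $\langle\cdot,\cdot\rangle_H$), and bounds the Rayleigh quotient termwise; since this requires $\beta_j>0$ for $\langle\cdot,\cdot\rangle$ to be an inner product, a final perturbation step $\beta_{j,\epsilon}=\beta_j+\epsilon$ with $\mathcal{R}_\epsilon\to\mathcal{R}_0$ is needed to cover vanishing $\beta_j$. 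You instead put the inner product on the ``stiffness'' side: you take $Q$ itself as the inner product on $H$ (your coercivity lemma is exactly what makes this legitimate, and its proof by contradiction using Rellich, the negative semi-definiteness and irreducibility of $L$, and $\bm\gamma\neq\bm 0$ is sound), define the solution operator $K$ with $Q(K\bm u,\bm v)=P(\bm u,\bm v)$, and get $\sup P/Q=r(K)$ from the spectral theorem. This buys you a proof that never needs $\beta_j>0$, so the $\epsilon$-perturbation and the ``eigenbasis of $V$ is also a basis of $H$'' bookkeeping disappear. The one step you gloss over is the identification $r(K)=\mathcal{R}_0$: the standard ``positive eigenvector forces the spectral radius'' argument pairs $\bm\varphi^*$ against the Krein--Rutman eigenvector $\bm v^*\ge 0$ of $K$ at $r(K)$, and one must check that the pairing is nonzero; since $Q$ is not a positivity-preserving bilinear form, you should route this through $Q(\bm\varphi^*,\bm v^*)=\mathcal{R}_0^{-1}P(\bm\varphi^*,\bm v^*)=\mathcal{R}_0^{-1}\sum_j|\Omega_j|\int_{\Omega_j}\beta_j\varphi_j^*v_j^*>0$, which holds because $C\bm v^*\neq\bm 0$ (otherwise $\bm v^*=r(K)^{-1}K\bm v^*=\bm 0$) and $\bm\varphi^*\gg\bm 0$. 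Alternatively, you could observe that $K$ is precisely the next-generation operator $\mathcal{L}=-B^{-1}C$ realized on $H$, so that $r(K)=\mathcal{R}_0$ up to the (routine, by elliptic regularity) independence of the spectral radius from the underlying space. With that detail supplied, your proof is complete.
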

\begin{proof}
The above computations imply 
\begin{equation*}
\mathcal{R}_0\le \sup_{\substack{\bm\varphi\in H\\ 
\bm\varphi\neq \bm 0}}\left\{\frac{\sum_{j\in\Omega} |\Omega_j|\int_{\Omega_j} \beta_j\varphi_j^2}{f(\bm\varphi, d_I, d_{I_1}, ..., d_{I_n})+ \sum_{j\in\Omega}|\Omega_j|\int_{\Omega_j} \gamma_j\varphi_j^2}\right\}.
\end{equation*}
Let the operators $B$ and $C$ be defined as in the proof of Lemma \ref{lemma_pe}. 

Firstly, we assume  $\beta_j(x)>0$ for all $x\in\Omega_j$ and $j\in\Omega$. Then it is standard to apply the spectral theory for self-adjoint compact operators to show \eqref{var} (e.g., see \cite{evans2010partial}), so we only sketch the proof here. Let $V=\Pi_{j\in\Omega}L^2(\Omega_j)$ with the inner product $<\cdot, \cdot>: V\times V\to \mathbb{R}$ defined by 
$$
<\bm u, \bm v>= \sum_{j\in\Omega} |\Omega_j| \int_{\Omega_j} \beta_j u_jv_j, \ \ \text{for\ }\bm u, \bm v\in V.
$$ 
(The positivity of $\beta_j$ guarantees that $<\cdot, \cdot>$ is an inner product). Then we can show that $B^{-1}C$ is a self-adjoint operator on $V$, i.e. $$<B^{-1}C\bm u, \bm v>=<\bm u, B^{-1}C\bm v>\;\;\text{for any}\;\; \bm u, \bm v\in V.$$ Let $\bm f=B^{-1}C\bm u$ and $\bm g=B^{-1}C\bm v$. This is equivalent to show $$<\bm f, C^{-1}B\bm g>=<C^{-1}B\bm f, \bm g>,$$ which can be verified directly using the definition of $<\cdot, \cdot>$ and the symmetry of $L$. 

Since $-B^{-1}C$ is a self-adjoint compact operator on $V$, there exists an orthonormal basis $\{\bm\phi_k\}$ of V  consisting with eigenvectors of $-B^{-1}C$ corresponding to eigenvalues $\mathcal{R}_0=\lambda_1>|\lambda_2|\ge |\lambda_3|\ge\cdots$. Moreover, all the eigenvalues are positive. To see it, we define $<\cdot, \cdot>_H: H\times H\to\mathbb{R}$ by 
\begin{equation*}
\begin{array}{lll}
<\bm\varphi,\bm\psi>_H&=&\sum_{j\in\Omega} d_{I_j}|\Omega_j|\int_{\Omega_j} \triangledown\varphi_j\cdot \triangledown\psi_j+d_I \sum_{j\in\Omega} L_{jj}\left(\bar\varphi_j \bar\psi_j- |\Omega_j|\int_{\Omega_j} \varphi_j\psi_j\right) \\
&&-d_I\sum_{j,k\in\Omega} {L_{jk} }\bar \varphi_j\bar \psi_k+ \sum_{j\in\Omega}|\Omega_j|\int_{\Omega_j} \gamma_j\varphi_j\psi_j,
\end{array}
\end{equation*}
for $\bm\varphi,\bm\psi\in H$.
Then $<\cdot, \cdot>_H$ is an inner product on $H$. We can compute 
$$
\lambda_k=<-B^{-1}C\bm\phi_k, \bm \phi_k>=<\bm\psi_k, -C^{-1}B\bm\psi_k>=<\bm\psi_k,\bm\psi_k>_H>0,
$$
where $\bm\psi_k=-B^{-1}C\bm\phi_k$. 

We claim that 
$\{\bm\phi_k\}$ is a basis of $H$. To see it,  we suppose $<\bm\phi_k, \bm w>_H=0$ for all $k\ge 1$ for some $\bm w\in H$. It suffices to show $\bm w=\bm 0$. Indeed, 
$$
0=<\bm\phi_k, \bm w>_H=<-C^{-1}B\bm\phi_k, \bm w>=\ds\frac{<\bm\phi_k, \bm w>}{\lambda_k},
$$ and so $<\bm\phi_k, \bm w>=0$ for all $k\ge 1$. Since $\{\bm\phi_k\}$ is a basis for $V$, we have $\bm w=\bm 0$. Let $\bm\varphi\in H$ with $\bm\varphi\neq\bm 0$. Then $\bm\varphi=\sum \alpha_k\bm \psi_k$ for some $\alpha_k\in\mathbb{R}$. We can compute 
\begin{eqnarray*}
\frac{<\bm w, \bm w>}{<\bm w, \bm w>_H}&=&\frac{\sum_{k, l} \alpha_k\alpha_l<\bm\psi_k, \bm\psi_l>}{\sum_{k, l}\alpha_k\alpha_l<\bm \psi_k, \bm \psi_l>_H}\\
&=& \frac{\sum \alpha_k^2}{\sum\alpha_k^2/\lambda_k}\le \lambda_1=\mathcal{R}_0.
\end{eqnarray*}
This shows 
\begin{equation}\label{var11}
\mathcal{R}_0\ge \frac{\sum_{j\in\Omega} |\Omega_j|\int_{\Omega_j} \beta_j\varphi_j^2}{f(\bm\varphi, d_I, d_{I_1}, ..., d_{I_n})+ \sum_{j\in\Omega}|\Omega_j|\int_{\Omega_j} \gamma_j\varphi_j^2}
\end{equation}
for all $\bm\varphi\in H$. So,
\begin{equation*}
\mathcal{R}_0\ge \sup_{\substack{\bm\varphi\in H\\ 
\bm\varphi\neq \bm 0}}\left\{\frac{\sum_{j\in\Omega} |\Omega_j|\int_{\Omega_j} \beta_j\varphi_j^2}{f(\bm\varphi, d_I, d_{I_1}, ..., d_{I_n})+ \sum_{j\in\Omega}|\Omega_j|\int_{\Omega_j} \gamma_j\varphi_j^2}\right\},
\end{equation*}
and \eqref{var} holds. 

Finally, we drop the positivity assumption on $\beta_j$.   Let $\beta_{j, \epsilon}=\beta_{j}+\epsilon$ for $j\in\Omega$, $C_\epsilon=(\beta_{1, \epsilon}, \dots, \beta_{n,\epsilon})$, and $\mathcal{R}_{\epsilon}=r(-B^{-1}C_\epsilon)$. Then the variational formula \eqref{var} holds for $\mathcal{R}_{\epsilon}$ with $\beta_j$ replaced by $\beta_{j, \epsilon}$.  By \cite[Section 4.3.5]{kato2013perturbation}, $\mathcal{R}_\epsilon\to \mathcal{R}_0$ as $\epsilon\to 0$.  Then by \eqref{var} for $\mathcal{R}_\epsilon$, we have
\begin{equation*}
\mathcal{R}_\epsilon\ge \frac{\sum_{j\in\Omega} |\Omega_j|\int_{\Omega_j} \beta_{j,\epsilon}\varphi_j^2}{f(\bm\varphi, d_I, d_{I_1}, ..., d_{I_n})+ \sum_{j\in\Omega}|\Omega_j|\int_{\Omega_j} \gamma_j\varphi_j^2},
\end{equation*}
for all $\bm\varphi=(\varphi_1, \dots, \varphi_n)\in H$ with $\bm\varphi\neq \bm 0$.
Taking $\epsilon\to0$, we obtain \eqref{var11}.
Therefore, \eqref{var} holds.
\end{proof}

Using the variational formula \eqref{var}, we can prove the following result.
\begin{theorem}
Suppose that \emph{(A1)-(A2)} hold and $L$ is symmetric. 
Then the following statements hold:
\begin{enumerate}
\item [\emph{(1)}] $\mathcal R_0$ is 
decreasing in  $d_I\in (0, \infty)$. Moreover, $\mathcal R_0$  is strictly decreasing in $d_I$ if and only if $(\beta_1, \dots, \beta_n)$ is not a multiple of $(\gamma_1, \dots, \gamma_n)$.
\item [\emph{(2)}]$\mathcal R_0$ is 
decreasing in $d_{I_j}\in (0, \infty)$ for each $j\in\Omega$.
\item [\emph{(2)}]$\mathcal R_0$ satisfies
\begin{equation}\label{limM}
    \lim_{\max\{d_I, d_{I_1}, ..., d_{I_n}\}\to 0} \mathcal{R}_0= \sup_{x\in\bar\Omega_j, j\in\Omega} \frac{\beta_j(x)}{\gamma_j(x)}:=M,
    \end{equation}
and
    \begin{equation}\label{limm}
    \lim_{\min\{d_I, d_{I_1}, ..., d_{I_n}\}\to\infty} \mathcal{R}_0=  \frac{\sum_{j\in\Omega}\frac{\bar\beta_j}{|\Omega_j|}}{\sum_{j\in\Omega}\frac{\bar\gamma_j}{|\Omega_j|}}:=m.
       \end{equation}
\end{enumerate}
\end{theorem}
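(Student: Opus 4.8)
The plan is to read off all three assertions from the variational characterization \eqref{var}, exploiting that only the quantity $f$ in the denominator depends on the movement rates while the numerator and the term $\sum_{j\in\Omega}|\Omega_j|\int_{\Omega_j}\gamma_j\varphi_j^2$ do not. Writing
$$
g(\bm\varphi):=-\sum_{j,k\in\Omega}L_{jk}\bar\varphi_j\bar\varphi_k+\sum_{j\in\Omega}L_{jj}\left(\bar\varphi_j^2-|\Omega_j|\int_{\Omega_j}\varphi_j^2\right),
$$
we have $f(\bm\varphi,d_I,d_{I_1},\dots,d_{I_n})=\sum_{j\in\Omega}d_{I_j}|\Omega_j|\int_{\Omega_j}|\nabla\varphi_j|^2+d_I\,g(\bm\varphi)$, and both summands are nonnegative by \eqref{phijsym}, \eqref{sumsym} and $L_{jj}<0$. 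Moreover, by the equality cases of \eqref{phijsym} and \eqref{sumsym}, $g(\bm\varphi)=0$ forces each $\varphi_j$ to be constant and $\bar\varphi_1=\cdots=\bar\varphi_n$, i.e. $\varphi_j\equiv a/|\Omega_j|$ for a common constant $a$ (the gradient term then vanishes too). This dichotomy drives every part.

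For (1) and (2), fix a test function $\bm\varphi$ and view the quotient in \eqref{var} as a function of the rates. Since $\int_{\Omega_j}|\nabla\varphi_j|^2\ge0$ and $g(\bm\varphi)\ge0$, the denominator is nondecreasing in each of $d_I,d_{I_1},\dots,d_{I_n}$, so each quotient is nonincreasing and, taking the supremum over $\bm\varphi$, $\mathcal{R}_0$ is nonincreasing in each rate. This proves (2) and the monotone part of (1). For the strictness in (1), recall that the supremum in \eqref{var} is attained at the principal eigenfunction $\bm\varphi$ of \eqref{eig1}. If $\mathcal{R}_0(d_I^{(1)})=\mathcal{R}_0(d_I^{(2)})$ for some $d_I^{(1)}<d_I^{(2)}$, then evaluating the quotient at the maximizer for $d_I^{(2)}$ and comparing at $d_I^{(1)}$ forces $g(\bm\varphi)=0$, so $\bm\varphi\equiv(a/|\Omega_j|)_j$. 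Substituting this constant profile into \eqref{eig1} makes the Laplacian and, because $L$ is symmetric with zero column sums, the migration terms vanish, leaving $\gamma_j\equiv\tilde\lambda_1\beta_j$ for every $j$, i.e. $(\beta_1,\dots,\beta_n)=\mathcal{R}_0(\gamma_1,\dots,\gamma_n)$. Conversely, if $(\beta_j)=c(\gamma_j)$ then $\beta_j/\gamma_j\equiv c$ and Lemma \ref{lemma_Rb} gives $\mathcal{R}_0\equiv c$, independent of $d_I$. This establishes the equivalence.

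For the limits in (3), the bound $\mathcal{R}_0\le M$ holds for all rates by Lemma \ref{lemma_Rb}. To obtain $\lim_{\max\to0}\mathcal{R}_0\ge M$, fix $\epsilon>0$, pick a patch $j_0$ and a point $x_0$ with $\beta_{j_0}(x_0)/\gamma_{j_0}(x_0)>M-\epsilon$ (or arbitrarily large if $M=\infty$), and test \eqref{var} with $\bm\varphi$ supported in $\Omega_{j_0}$ and concentrated near $x_0$; for such $\bm\varphi$ one computes $f(\bm\varphi,\dots)\to0$ as all rates tend to $0$, so the quotient tends to $\int_{\Omega_{j_0}}\beta_{j_0}\varphi_{j_0}^2/\int_{\Omega_{j_0}}\gamma_{j_0}\varphi_{j_0}^2>M-\epsilon$, whence $\liminf_{\max\to0}\mathcal{R}_0\ge M-\epsilon$. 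For the second limit, testing \eqref{var} with the constant profile $\varphi_j\equiv1/|\Omega_j|$ (for which $f\equiv0$) gives $\mathcal{R}_0\ge m$ for all rates. For the reverse inequality, take rates with $\min\{d_I,d_{I_1},\dots,d_{I_n}\}\to\infty$ and let $\bm\varphi^{(k)}$ be corresponding maximizers normalized by $\sum_{j}|\Omega_j|\int_{\Omega_j}(\varphi_j^{(k)})^2=1$. Then the numerator and $\sum_j|\Omega_j|\int_{\Omega_j}\gamma_j(\varphi_j^{(k)})^2$ are bounded and, since $\mathcal{R}_0\ge m>0$, so is $f(\bm\varphi^{(k)},\dots)$; hence $\int_{\Omega_j}|\nabla\varphi_j^{(k)}|^2\le f/d_{I_j}\to0$ and $g(\bm\varphi^{(k)})\le f/d_I\to0$. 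By Rellich's theorem a subsequence converges in $L^2$ to a nonzero $\bm\varphi^*$ with $\nabla\varphi_j^*\equiv0$ and $g(\bm\varphi^*)=0$, i.e. to a constant profile; discarding the nonnegative $f$ from the denominator of \eqref{var} and passing to the limit gives $\limsup\mathcal{R}_0\le\sum_j|\Omega_j|\int_{\Omega_j}\beta_j(\varphi_j^*)^2/\sum_j|\Omega_j|\int_{\Omega_j}\gamma_j(\varphi_j^*)^2=m$.

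The routine part is the pointwise monotonicity of the Rayleigh quotient. The two delicate points are the equality-case analysis giving the sharp criterion in (1) — where one must know the supremum in \eqref{var} is realized by a positive eigenfunction and then extract proportionality of $\bm\beta$ and $\bm\gamma$ from $g(\bm\varphi)=0$ — and the upper estimate $\limsup\mathcal{R}_0\le m$, whose main obstacle is the indeterminate form $d_I\,g(\bm\varphi^{(k)})$ (a blowing-up factor times a vanishing one). The resolution is to never evaluate $f$ in the limit: simply drop the nonnegative $f$ from the denominator and use the $L^2$-compactness of the normalized maximizers to identify their limit as the constant profile on which the quotient equals $m$.
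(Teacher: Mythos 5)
Your proof is correct. Parts (1), (2) and the first limit in (3) follow essentially the same route as the paper: monotonicity is read off from the Rayleigh quotient \eqref{var}, strictness comes from the equality cases of \eqref{phijsym}--\eqref{sumsym} forcing the maximizer to be the constant profile and hence, via \eqref{eig1}, forcing $\bm\beta$ proportional to $\bm\gamma$, and the small-rate limit uses Lemma \ref{lemma_Rb} together with test functions concentrated near a near-maximizing point of $\beta_{j_0}/\gamma_{j_0}$. Where you genuinely diverge is the large-rate limit. The paper first invokes monotonicity to reduce to the single-parameter case $d=d_I=d_{I_1}=\cdots=d_{I_n}\to\infty$, then works with the principal eigenfunction of \eqref{eig1}, obtains uniform $W^{2,p}$ bounds, passes to the limit in the eigenvalue equation divided by $d$ to identify the constant limit profile, and finally evaluates the integral identity \eqref{uplow}. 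You instead stay entirely within the variational framework: normalizing the maximizers in $L^2$, you deduce from the boundedness of $f$ that the gradient terms and $g$ vanish in the limit, use Rellich compactness to identify the $L^2$-limit as the constant profile, and bound $\mathcal{R}_0$ from above by discarding the nonnegative $f$ from the denominator. Your route buys two things: it treats arbitrary rates with $\min\to\infty$ directly without the monotonicity reduction, and it needs only $H^1$-level compactness rather than elliptic $W^{2,p}$ regularity; the paper's route, by working with the eigenfunction equation, generalizes more readily to the non-symmetric case treated later (where no variational formula is available). Both arguments are sound.
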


\begin{proof}

By the variational formula \eqref{var},  $\mathcal R_0$ is decreasing in $d_I$ and $d_{I_j}$ for all $j\in\Omega$. Let $\bm \varphi$ be a  positive eigenvector of \eqref{eig1} corresponding to the principal eigenvalue
$\tilde \la_1=1/\mathcal R_0$. If $\mathcal R_0(d_I)=\mathcal R_0(\tilde d_I)$ with $\tilde d_I>d_I$, then by \eqref{fsym} and the variational formula \eqref{var} we must have that 
$$-\sum_{j,k\in\Omega} {L_{jk} }\bar \varphi_j\bar \varphi_k
+\sum_{j\in\Omega} L_{jj}\left(\bar\varphi_j^2- |\Omega_j|\int_{\Omega_j} \varphi_j^2\right)=0.$$
This combined with \eqref{phijsym} and \eqref{sumsym} implies that there exists a positive constant $C$ such that $\varphi_j=C$ for all  $x\in\bar\Omega_j$ and $j\in\Omega$. Then by \eqref{eig1} we see that $\beta_j/\gamma_j=\mathcal R_0(d_I)$ for all $j\in\Omega$, and $\mathcal R_0$ is constant. This completes the proof of (1) and (2).

Now we prove {(3)}. By Lemma \ref{lemma_Rb}, we have 
$\mathcal{R}_0\le M$.
Fix $M_1\in(0,M)$. Choose $j_0\in\Omega$, $x_0\in \Omega_{j_0}$ and $\delta>0$ such that $\beta_{j_0}(x)/\gamma_{j_0}(x)>M_1$ for any $x\in B(x_0, \delta)$. Fix $\sigma\in (0, 1)$. Let $\tilde \varphi_{j_0}\in C(\bar\Omega_{j_0}; [0, 1])$ such that 
\begin{equation*}
\tilde \varphi_{j_0}(x)=
\left\{
    \begin{array}{cc}
        0, & x\in \Omega_{j_0}\backslash B(x_0, \delta),  \\
        1, & x\in B(x_0, \sigma\delta).
    \end{array}
 \right.
\end{equation*}
Let $\tilde{\bm\varphi}=(0, ..., 0, \tilde \varphi_{j_0}, 0,...,0)$. Then by \eqref{var}, we have 
\begin{eqnarray*}
\mathcal{R}_0&\ge& \frac{\sum_{j\in\Omega} |\Omega_j|\int_{\Omega_j} \beta_j\tilde\varphi_j^2}{f(\tilde{\bm\varphi}, d_I, d_{I_1}, ..., d_{I_n})+ \sum_{j\in\Omega}|\Omega_j|\int_{\Omega_j} \gamma_{j}\tilde\varphi_{j}^2}\\
&\ge&  \frac{M_1|\Omega_{j_0}|\int_{B(x_0, \sigma\delta)} \gamma_{j_0}\tilde\varphi_{j_0}^2}{f(\tilde{\bm\varphi}, d_I, d_{I_1}, ..., d_{I_n})+ |\Omega_{j_0}|\int_{B(x_0, \delta)} \gamma_{j_0}\tilde\varphi_{j_0}^2}.
\end{eqnarray*}
This implies 
$$
\liminf_{\max\{d_I, d_{I_1}, ..., d_{I_n}\}\to 0} \mathcal{R}_0\ge  \frac{M_1\int_{B(x_0, \sigma\delta)} \gamma_{j_0}\tilde\varphi_{j_0}^2}{\int_{B(x_0, \delta)} \gamma_{j_0}\tilde\varphi_{j_0}^2}.
$$
Taking $\sigma\to 1$, we obtain 
$$
\liminf_{\max\{d_I, d_{I_1}, ..., d_{I_n}\}\to 0} \mathcal{R}_0\ge  M_1.
$$
Since $M_1\in(0,M)$ was arbitrary, we have $\liminf_{\max\{d_I, d_{I_1}, ..., d_{I_n}\}\to 0} \mathcal{R}_0\ge  M$. This combined with $\mathcal{R}_0\le M$ proves \eqref{limM}.

It reamins to prove \eqref{limm}. Taking $\varphi=(1/|\Omega_1|, ..., 1/|\Omega_n|)$ in \eqref{var}, we find  $\mathcal{R}_0\ge m$. By the monotonicity of $\mathcal{R}_0$, it suffices to consider the case $d=d_I=d_{I_1}=\cdots=d_{I_n}$ with $d\to\infty$. Let $\bm\varphi=(\varphi_1, ..., \varphi_n)$ be the positive eigenvector of \eqref{eig1} corresponding with  $\tilde\lambda_1=1/\mathcal{R}_0$ such that $\sum_{j\in \Omega}\|\varphi_j\|_{L^\infty(\Omega_j)}=1$.
By $\tilde\lambda_1\le 1/m<\infty$ and the parabolic  estimate,  $\varphi_j$ is uniformly bounded in $W^{2, p}(\Omega_j)$ for $d>1$ and $j\in\Omega$. Therefore restricted to a subsequence if necessary, we may assume $\varphi_j\to \varphi_j^*$ weakly in $W^{2, p}(\Omega_j)$ as $d\to\infty$ for some $\varphi_j^*\in W^{2, p}(\Omega_j)$. Dividing \eqref{eig1} by $d$ and taking $d\to\infty$, we find
\begin{equation}\label{eig2}
\begin{cases}
\ds\Delta \varphi_j^*+\sum_{k\neq j} \left(\f{L_{jk}\bar \varphi_k^*}{|\Omega_j|}-  L_{kj}  \varphi_j^*\right)=0, &x\in\Omega_j,  j\in\Omega,\\
\ds\f{\partial \varphi_j^*}{\partial \nu}=0, &x\in\partial\Omega_j,  j\in\Omega.
\end{cases}
\end{equation}
Similar to the proof of Lemma \ref{lemma_DFE} and using $\sum_{j\in\Omega}\|\varphi_j^*\|_{L^\infty(\Omega_j)}=1$, we can show $\varphi^*=k^*(1/{|\Omega_1|}, ..., 1/{|\Omega_n|})$, where 
$$
k^*=\frac{1}{\sum_{j\in\Omega}1/|\Omega_j|}.
$$
Then taking $d\to\infty$ in \eqref{uplow}, we obtain 
$$
\lim_{d\to\infty} \mathcal{R}_0 =\frac{\sum_{j\in\Omega}\int_{\Omega_j}\beta_j\varphi_j^* }{\sum_{j\in\Omega}\int_{\Omega_j}\gamma_j\varphi_j^* }=m.
$$
This completes the proof.
\end{proof}

\subsection{Non-symmetric $L$}
In this section, we analyze $\mathcal{R}_0$ not assuming that $L$ is symmetric. Similar to the case that $L$ is  symmetric, we will show that $\mathcal{R}_0$ is monotone decreasing in the movement rates of infected individuals $d_I, d_{I_1}, \dots, d_{I_n}$ and compute the limits of $\mathcal{R}_0$ as they approach zero or infinity.   

For convenience, we introduce a parameter $a\in \mathbb{R}$, and let $\lambda_1(d_I,d_{I_1},\dots,d_{I_n}, a)$ be the principal eigenvalue of
\begin{equation}\label{eiga}
\begin{cases}
\ds \lambda   \varphi_j=d_{I_j}\Delta \varphi_j+d_I\sum_{k\neq j} \left(\f{L_{jk}\bar \varphi_k}{|\Omega_j|}-  L_{kj}  \varphi_j\right)+(a\beta_j-\gamma_j)  \varphi_j,&x\in\Omega_j,  j\in\Omega,\\
\ds\f{\partial \varphi_j}{\partial \nu}=0, &x\in\partial\Omega_j,  j\in\Omega.
\end{cases}
\end{equation}
In the proof of the following result, we need the Tree-Cycle identity \cite[Theorem 2.2]{LiShuai2} (also see \cite{chen2021global, chen2022two} for some recent applications of it).
\begin{lemma}\label{lemma_a}
Suppose that \emph{(A1)-(A2)} hold. Let  $\la_1:=\la_1(d_I,d_{I_1},\dots,d_{I_n}, a)$ be the principal eigenvalue of \eqref{eiga}. Then the following statements hold:
\begin{enumerate}
\item [\emph{(1)}]  $\lambda_1$ is 
decreasing in  $d_I\in (0, \infty)$. Moreover, $\la_1$  is strictly decreasing in $d_I$ if and only if there does not exist a constant $C$ such that $a\beta_j(x)-\gamma_j(x)=C$ for all $x\in\Omega_j$ and $j\in\Omega$.
\item [\emph{(2)}]$\lambda_1$ is 
decreasing in $d_{I_j}\in(0, \infty)$ for each $j\in\Omega$.
\item [\emph{(3)}]$\lambda_1$ is strictly increasing in  $a\in\mathbb{R}$.
\end{enumerate}
\end{lemma}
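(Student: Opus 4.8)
The plan is to run everything through the primal and adjoint principal eigenfunctions of \eqref{eiga}, reserving the Tree-Cycle identity for the one genuinely non-symmetric obstruction in part (1). Write \eqref{eiga} as $\mathcal{A}\bm\varphi=\lambda\bm\varphi$ with $\mathcal{A}=\mathcal{A}(d_I,d_{I_1},\dots,d_{I_n},a)$. As in the proof of Lemma \ref{lemma_pe}, $\mathcal{A}$ is resolvent-positive and irreducible, so by Lemma \ref{lemma_positive} and the Krein-Rutman theorem $\lambda_1=s(\mathcal{A})$ is a simple eigenvalue with a strictly positive eigenfunction $\bm\varphi=(\varphi_1,\dots,\varphi_n)$. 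Taking the formal adjoint of $\mathcal{A}$ in the inner product $\langle\bm u,\bm v\rangle:=\sum_{j\in\Omega}|\Omega_j|\int_{\Omega_j}u_jv_j$ produces the same operator with $L$ replaced by its transpose $L^{T}$; since $L^{T}$ is again essentially nonnegative and irreducible, $s(\mathcal{A}^*)=s(\mathcal{A})=\lambda_1$ and the adjoint problem also has a positive principal eigenfunction $\bm\psi=(\psi_1,\dots,\psi_n)$. (Where $\bm\beta$ vanishes somewhere I would first replace $\beta_j$ by $\beta_j+\epsilon$ and let $\epsilon\to0$, exactly as in the symmetric case.) Simplicity of $\lambda_1$ makes it differentiable in each parameter by analytic perturbation theory \cite{kato2013perturbation}, and differentiating $\mathcal{A}\bm\varphi=\lambda_1\bm\varphi$ and pairing against $\bm\psi$ gives the Rayleigh-type formula $\partial_p\lambda_1=\langle(\partial_p\mathcal{A})\bm\varphi,\bm\psi\rangle/\langle\bm\varphi,\bm\psi\rangle$ for $p\in\{d_I,d_{I_1},\dots,d_{I_n},a\}$.

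Part (3) is then immediate: $\partial_a\mathcal{A}$ is multiplication by $\beta_j$, so $\partial_a\lambda_1=\big(\sum_j|\Omega_j|\int_{\Omega_j}\beta_j\varphi_j\psi_j\big)/\langle\bm\varphi,\bm\psi\rangle>0$ since $\bm\beta\neq\bm0$ and $\varphi_j,\psi_j>0$. The structural observation driving the remaining parts is that on each patch $\varphi_j$ and $\psi_j$ solve the \emph{same} scalar Neumann problem $d_{I_j}\Delta w+q_j w=\mathrm{const}$ with common potential $q_j:=a\beta_j-\gamma_j-d_I\sum_{k\neq j}L_{kj}-\lambda_1$, differing only in the constant inter-patch source. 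Irreducibility of $L$ together with $\varphi_k,\psi_k>0$ forces both sources to be strictly positive, so the solutions are proportional: $\psi_j=\rho_j\varphi_j$ for constants $\rho_j>0$. For part (2) this yields $\partial_{d_{I_j}}\lambda_1\propto\int_{\Omega_j}(\Delta\varphi_j)\psi_j=-\rho_j\int_{\Omega_j}|\nabla\varphi_j|^2\le0$, proving monotonicity in each $d_{I_j}$.

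For part (1) I would substitute $\psi_j=\rho_j\varphi_j$ into the derivative formula and set $a_j:=\bar\varphi_j$. After integration the numerator of $\partial_{d_I}\lambda_1$ separates into a within-patch piece controlled by Cauchy-Schwarz, $|\Omega_j|\int_{\Omega_j}\varphi_j^2\ge a_j^2$, and a purely inter-patch quadratic form in $(a_1,\dots,a_n)$ built from the weights $\rho_jL_{jk}$. Because $L$ is not symmetric this form is not manifestly sign-definite, and this is the main obstacle. To overcome it I would use the balance identity $\rho_j\sum_{k\neq j}L_{jk}a_k=\sum_{k\neq j}\rho_kL_{kj}a_k$ implied by $\psi_j=\rho_j\varphi_j$, which says exactly that the weighted fluxes on the movement digraph are conservative, and then invoke the Tree-Cycle identity \cite[Theorem 2.2]{LiShuai2} to rewrite the inter-patch form as a sum over the directed cycles of the digraph of $L$, each carrying a positive spanning-tree weight. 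The conservative structure makes each cyclic contribution reduce to a non-positive weighted squared difference of the $a_j$, so the whole form is $\le0$ and $\partial_{d_I}\lambda_1\le0$.

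Finally I would read off strict monotonicity from the equality cases. If $\partial_{d_I}\lambda_1=0$ then every inequality above is tight, so each $\varphi_j$ is spatially constant and all cyclic terms vanish; by irreducibility this forces the constant profile $\varphi_j\equiv c\,\alpha_j/|\Omega_j|$ with $\bm\alpha$ the Perron vector of $L$ from Lemma \ref{simp}, for which the coupling term drops out of \eqref{eiga}, and substituting back gives $a\beta_j-\gamma_j\equiv\lambda_1$, constant in $x$ and $j$. Conversely, if $a\beta_j-\gamma_j\equiv C$, the profile $\varphi_j=\alpha_j/|\Omega_j|$ annihilates the coupling, so $\lambda_1\equiv C$ independently of $d_I$. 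I expect the Tree-Cycle bookkeeping in part (1)—matching the non-symmetric coupling terms to cycle weights and pinning down their sign—to be the only real difficulty; parts (2), (3) and the equality analysis are routine once the proportionality $\psi_j=\rho_j\varphi_j$ is in hand.
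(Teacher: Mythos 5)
Your proposal is correct, and it takes a genuinely different route from the paper's. The paper works only with the adjoint eigenfunction $\bm\psi$: it differentiates the adjoint problem \eqref{eigaad} in the parameter, which produces terms in $\bar\psi_k'$, and then needs the Tree--Cycle identity in \emph{all three} parts just to kill the telescoping derivative terms $\sum(\bar\psi_k'/\bar\psi_k-\bar\psi_j'/\bar\psi_j)$ around cycles. You instead use the primal--adjoint Rayleigh formula $\partial_p\lambda_1=\langle(\partial_p\mathcal{A})\bm\varphi,\bm\psi\rangle/\langle\bm\varphi,\bm\psi\rangle$, which avoids eigenfunction derivatives altogether, and — more importantly — you observe the patchwise proportionality $\psi_j=\rho_j\varphi_j$. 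That observation is not in the paper and is genuinely useful: it makes part (2) an immediate integration by parts on a single patch and part (3) a one-line positivity statement, with no graph theory at all, and in part (1) it reduces the problem to the inter-patch quadratic form $\sum_{j\ne k}\rho_j(L_{jk}\bar\varphi_j\bar\varphi_k-L_{kj}\bar\varphi_j^2)$ plus a Cauchy--Schwarz gap, after which your balance identity (equivalently, the fact that the weights $\tilde a_{jk}=\rho_j\bar\varphi_jL_{jk}\bar\varphi_k$ form a circulation, so the Laplacian cofactors are all equal) feeds into the same Tree--Cycle/AM--GM endgame as the paper's inequality \eqref{ineq}. Two points deserve a more careful write-up than your sketch gives them. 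First, the proportionality does not follow merely from the two constant sources being positive: you need the scalar Neumann operator $M_j=d_{I_j}\Delta+q_j$ on $\Omega_j$ to be invertible with positive inverse of $-M_j$, which you get because $\varphi_j>0$ is a \emph{strict} positive supersolution ($M_j\varphi_j=-c_j<0$), forcing the principal eigenvalue of $M_j$ to be negative; only then does $\varphi_j=c_j(-M_j)^{-1}\mathds{1}$, $\psi_j=c_j'(-M_j)^{-1}\mathds{1}$ give $\psi_j=(c_j'/c_j)\varphi_j$. Second, the adjoint in your weighted inner product is not literally ``$L$ replaced by $L^T$'': the off-diagonal coupling becomes $d_I\sum_{k\ne j}L_{kj}\bar\psi_k/|\Omega_j|$ but the diagonal loss term $-d_I\sum_{k\ne j}L_{kj}\psi_j$ is unchanged, so the column sums of the resulting coupling matrix need not vanish; this is harmless (existence of a positive adjoint principal eigenfunction with the same eigenvalue follows from Krein--Rutman duality, and the common potential $q_j$ is exactly what makes the proportionality work), but the statement as written is inaccurate. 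The $\epsilon$-regularization of $\bm\beta$ you mention is unnecessary here, since $\beta_j$ only enters the zeroth-order coefficient and plays no role in irreducibility or positivity of the eigenfunctions.
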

\begin{proof}
(1) We consider the adjoint  eigenvalue problem of \eqref{eiga}:
\begin{equation}\label{eigaad}
\begin{cases}
\ds \lambda   \psi_j=d_{I_j}\Delta \psi_j+d_I\sum_{k\neq j} \left(\f{L_{kj}\bar \psi_k}{|\Omega_k|}-  L_{kj}  \psi_j\right)+(a\beta_j-\gamma_j)  \psi_j,&x\in\Omega_j,  j\in\Omega,\\
\ds\f{\partial \psi_j}{\partial \nu}=0, &x\in\partial\Omega_j,  j\in\Omega.
\end{cases}
\end{equation}
By \cite[Proposition 4.2.18]{engel1999one}, $\lambda_1$ 
is the principal eigenvalue of \eqref{eigaad}. 
Differentiating the first equation of \eqref{eigaad} with respect to $d_I$, we obtain
\begin{equation}\label{eigder}
\begin{split}
    \lambda_1'   \psi_j+\lambda_1 \psi_j'=&d_{I_j}\Delta \psi_j'+d_I\sum_{k\neq j} \left(\f{L_{kj}\bar \psi_k'}{|\Omega_k|}-  L_{kj}  \psi_j'\right)+\sum_{k\neq j} \left(\f{L_{kj}\bar \psi_k}{|\Omega_k|}-  L_{kj}  \psi_j\right)\\
    &+(a\beta_j-\gamma_j)  \psi_j',
    \end{split}
\end{equation}
where $'$ denotes $\partial/\partial d_I$. Multiplying \eqref{eigder} by $\psi_j$, multiplying \eqref{eigaad} by $\psi_j'$,  taking the difference of them and integrating it over $\Omega_j$, we obtain
\begin{equation}\label{eigenad}
\begin{split}
    \lambda_1'  \bar{\psi_j^2}=&d_I\sum_{k\neq j} \f{L_{kj}}{|\Omega_k|} \left(\bar\psi_j\bar\psi_k'-    \bar\psi_j'\bar\psi_k\right) +\sum_{k\neq j} L_{kj}\left(\ds\frac{\bar \psi_j\bar\psi_k }{{|\Omega_k|}}-\overline{\psi^2_j}\right)\\
    =&d_I\sum_{k\neq j} \f{L_{kj}\bar\psi_k\bar\psi_j}{|\Omega_k|}\left(\ds\frac{\bar\psi_k'}{\psi_k}-\ds\frac{\bar\psi_j'}{\psi_j}\right)+\sum_{k\neq j} \ds\frac{L_{kj}\bar \psi_j\bar\psi_k }{{|\Omega_k|}}\left(1-\ds\frac{|\Omega_k|\overline{\psi^2_j}}{\bar\psi_k\bar\psi_j}\right)\\
    \le& d_I\sum_{k\neq j} \f{L_{kj}\bar\psi_k\bar\psi_j}{|\Omega_k|}\left(\ds\frac{\bar\psi_k'}{\psi_k}-\ds\frac{\bar\psi_j'}{\psi_j}\right)+\sum_{k\neq j} \ds\frac{L_{kj}\bar \psi_j\bar\psi_k }{{|\Omega_k|}}\left(1-\ds\frac{w_j}{w_k}\right),\ \ j\in\Omega
    \end{split}
\end{equation}
with $w_j=(\int_{\Omega_j}\psi^2_j)^{\frac{1}{2}}/|\Omega_j|^{\frac{1}{2}}$,
where we have used the H\"{o}lder inequality in the last step and the equality holds if and only if $\psi_j$ is constant for all $j\in\Omega$.
Let 
$
\tilde A=(\tilde a_{jk})$ with
$\tilde a_{jk}=\frac{L_{jk}\bar \psi_j\bar\psi_k }{{|\Omega_j|}}$,
and let  $\mathcal L$ be  the  column Laplacian matrix associated with $\tilde A$; that is, $\mathcal L_{jk}=-\tilde a_{jk}$ for $j\neq k$ and $\mathcal L_{jj}=\sum_{k\not = j}\tilde a_{kj}$. Denote $\theta_j$ be the cofactor of the $j$-th diagonal element of $\mathcal L$, and consequently $(\theta_1,\dots,\theta_n)$ is an  eigenvector of $\mathcal L$ corresponding to eigenvalue 0.
Multiplying \eqref{eigenad} by $\theta_j$, summing them over all $j$, and using the Tree-Cycle identity \cite[Theorem 2.2]{LiShuai2}, we have
\begin{equation}\label{5.12}
\begin{split}
\lambda_1' \sum_{j=1}^n\theta_j \bar{\psi_j^2}\le&d_I\sum_{j=1}^n\sum_{k\neq j} \f{\theta_jL_{kj}\bar\psi_k\bar\psi_j}{|\Omega_k|}\left(\ds\frac{\bar\psi_k'}{\psi_k}-\ds\frac{\bar\psi_j'}{\psi_j}\right)+\sum_{j=1}^n\sum_{k\neq j} \ds\frac{\theta_jL_{kj}\bar \psi_j\bar\psi_k }{{|\Omega_k|}}\left(1-\ds\frac{w_j}{w_k}\right)\\
=&\sum_{\mathcal{Q} \in \mathbb{Q}} w(\mathcal{Q}) \sum_{(s,r)\in E(\mathcal{C}_{\mathcal{Q}})} \Big(\frac{\bar\psi_r'}{\bar\psi_r}-\frac{\bar\psi_s'}{\bar\psi_s}+1-\frac{w_s}{w_r}\Big),
\end{split}
\end{equation}
where  $\mathbb{Q}$ is the set of all spanning unicycle graphs of $(\mathcal{G}, \tilde{A})$, $w(\mathcal{Q})>0$ is the weight of $\mathcal{Q}$, and $\mathcal{C}_\mathcal{Q}$ denotes the directed cycle of $\mathcal{Q}$ with directed edge set $E(\mathcal{C}_{\mathcal{Q}})$. 
Along any directed cycle $\mathcal{C}_\mathcal{Q}$ of length $l$,
\begin{equation}\label{ineq}
\sum_{(s,r)\in E(\mathcal{C}_{\mathcal{Q}})} \Big(1-\frac{w_s}{w_r}\Big)  \le l - l\cdot \Big(\prod_{(s,r)\in E(\mathcal{C}_{\mathcal{Q}})} \frac{w_s}{w_r}\Big)^{1/l}=l-l\cdot 1 = 0,
\end{equation}
 and 
 \begin{equation}\label{eqad}
\sum_{(s,r)\in E(\mathcal{C}_{\mathcal{Q}})} \Big(\frac{\bar\psi_r'}{\bar\psi_r}-\frac{\bar\psi_s'}{\bar\psi_s}\Big) = 0,
\end{equation}
where the equality in \eqref{ineq} holds if and only if $w_s=w_r$ for any $(s,r)\in E(\mathcal{C}_{\mathcal{Q}})$.

Combining \eqref{5.12}-\eqref{eqad}, we have
$\la'_0\le0$. Moreover if $\la_1'=0$ for some $d_I>0$,  $\psi_j$ is constant for all $j\in\Omega$ and $\psi_1=\dots=\psi_n$. Then plugging into the first equation of \eqref{eigaad},   we have $a\beta_j(x)-\gamma_j(x)=C$ for some constant $C$ for all $x\in\bar\Omega_j$ and $j\in\Omega$. In this case, $\lambda_1=C$ for all $d_I>0$.

(2) Fix $j_0\in\Omega$.
Using similar arguments as in (1), we compute that
\begin{equation}\label{djd}
\lambda_1' \sum_{j=1}^n\theta_j \bar{\psi_j^2}=d_I\sum_{j=1}^n\sum_{k\neq j} \f{\theta_jL_{kj}\bar\psi_k\bar\psi_j}{|\Omega_k|}\left(\ds\frac{\bar\psi_k'}{\psi_k}-\ds\frac{\bar\psi_j'}{\psi_j}\right)-\theta_{ j_0}\int_{\Omega_{ j_0}}|\triangledown\psi_{j_0}|^2dx,
\end{equation}
where $'$ denotes $\partial/\partial d_{I_{ j_0}}$ and $(\theta_1,\dots,\theta_n)$ is defined as in (1).
Then by \eqref{eqad} and the Tree-Cycle identity again, we have
\begin{equation}\label{tran2}
\sum_{j=1}^n\sum_{k\neq j} \f{\theta_jL_{kj}\bar\psi_k\bar\psi_j}{|\Omega_k|}\left(\ds\frac{\bar\psi_k'}{\psi_k}-\ds\frac{\bar\psi_j'}{\psi_j}\right)=0,
\end{equation}
which implies that $\la_1'\le0$ and $\la_1$ is decreasing in $d_{I_0}$.

(3) 
Using similar arguments as in (1), we can show
\begin{equation}\label{eigint}
   \lambda_1'\sum_{j=1}^n \theta_j \bar{\psi_j^2}=d_I\sum_{j=1}^n\sum_{k\neq j} \f{\theta_jL_{kj}\bar\psi_k\bar\psi_j}{|\Omega_k|}\left(\ds\frac{\bar\psi_k'}{\psi_k}-\ds\frac{\bar\psi_j'}{\psi_j}\right) +\sum_{j=1}^n\overline{\beta_j\psi_j^2},
\end{equation}
where $'$ denotes $\partial/\partial a$ and $(\theta_1,\dots,\theta_n)$ is defined as in (1). So by \eqref{tran2},  $\lambda_1'>0$ for any $a>0$ and  $\lambda_1$ is strictly increasing in $a$.
\end{proof}

 


Then we study the profiles of $\la_1$ when the  movement
rates are small or large.
\begin{lemma}\label{lamma_lab0}
Suppose that \emph{(A1)-(A2)} hold. Let  $\la_1(d_I,d_{I_1},\dots,d_{I_n}, a)$ be the principal eigenvalue of \eqref{eiga}. 
Then 
\begin{equation}\label{lim0}
    \lim_{\max\{d_I, d_{I_1}, ..., d_{I_n}\}\to 0} \lambda_1=\max_{x\in\overline\Omega_j\\ j\in\Omega} \{a\beta_j(x)-\gamma_j(x)\}, 
\end{equation}
and
\begin{equation}\label{liminf}
    \lim_{\min\{d_I, d_{I_1}, ..., d_{I_n}\}\to \infty} \lambda_1=\sum_{j\in\Omega}\alpha_j \frac{a\bar\beta_j-\bar\gamma_j}{|\Omega_j|},
\end{equation}
where $\bm \alpha=(\alpha_1,\dots,\alpha_n)$ is the right eigenvector of $L$ corresponding to eigenvalue $0$ with $\sum_{j\in\Omega}\alpha_j=1$.
\end{lemma}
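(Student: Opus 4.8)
The plan is to establish the two limits \eqref{lim0} and \eqref{liminf} by combining the monotonicity and eigenvalue characterizations already available. For the small-movement limit \eqref{lim0}, I would first invoke the monotonicity from Lemma \ref{lemma_a}(1)--(2): since $\lambda_1$ is decreasing in each of $d_I, d_{I_1}, \dots, d_{I_n}$, it suffices to control $\lambda_1$ along a diagonal path where all movement rates equal a common parameter $d\to 0$. The upper bound is the easy direction: testing the variational/Rayleigh characterization (or directly using the positive eigenfunction) and discarding the nonpositive diffusion and movement contributions should yield $\lambda_1 \le \max_{x,j}\{a\beta_j(x)-\gamma_j(x)\}$. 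For the lower bound I would use a localized test function concentrated near a point $(x_0,j_0)$ where $a\beta_{j_0}-\gamma_{j_0}$ nearly attains its maximum, exactly in the spirit of the argument proving \eqref{limM}: the gradient and movement terms, being multiplied by $d$, vanish in the limit, leaving the reaction term arbitrarily close to the maximum.

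For the large-movement limit \eqref{liminf}, the natural approach is a homogenization/averaging argument. Again by monotonicity it suffices to take $d=d_I=d_{I_1}=\cdots=d_{I_n}\to\infty$. I would normalize the positive eigenfunction $\bm\varphi$ by, say, $\sum_{j\in\Omega}\|\varphi_j\|_{L^\infty(\Omega_j)}=1$, and use the a priori bound on $\lambda_1$ (which stays bounded by \eqref{lim0} together with the opposite one-sided estimates) to obtain uniform $W^{2,p}$ bounds on each $\varphi_j$ via parabolic/elliptic regularity, just as in the proof of \eqref{limm}. Passing to a weakly convergent subsequence $\varphi_j\rightharpoonup\varphi_j^*$ and dividing \eqref{eiga} by $d$ before letting $d\to\infty$ should force the limit profile to solve the pure-movement system $\Delta\varphi_j^*+\sum_{k\ne j}\bigl(L_{jk}\bar\varphi_k^*/|\Omega_j|-L_{kj}\varphi_j^*\bigr)=0$ with Neumann conditions. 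By the reasoning of Lemma \ref{lemma_DFE} this forces $\varphi_j^*$ to be the spatially constant profile proportional to $\alpha_j/|\Omega_j|$. Finally, integrating \eqref{eiga} over each $\Omega_j$, summing over $j$, and passing to the limit should give $\lim\lambda_1 = \sum_{j}\alpha_j(a\bar\beta_j-\bar\gamma_j)/|\Omega_j|$, after substituting the identified limit profile.

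The main obstacle I anticipate lies in the large-movement case, specifically in justifying that the limiting eigenfunction $\bm\varphi^*$ is nontrivial and is precisely the constant profile dictated by $L$. One must rule out $\bm\varphi^*\equiv\bm 0$: the normalization plus the uniform regularity bounds and compact Sobolev embedding should upgrade weak convergence to strong $C(\bar\Omega_j)$ convergence, so that $\sum_j\|\varphi_j^*\|_{L^\infty}=1$ survives the limit and nontriviality is preserved. The other delicate point is that the limiting system determines $\bm\varphi^*$ only up to the kernel of the movement operator; here I would lean on the irreducibility in (A2) and the simplicity of the zero eigenvalue of $L$ from Lemma \ref{simp}, which pins down the profile uniquely as a positive multiple of $(1/|\Omega_1|,\dots,1/|\Omega_n|)$ scaled by $\bm\alpha$. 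A subtlety worth checking is the correct weighting: in \eqref{eiga} the movement coupling uses the right eigenvector $\bm\alpha$ (column-sum-zero structure of $L$), so after integration the averaged coefficients naturally weight patch $j$ by $\alpha_j/|\Omega_j|$, matching the stated limit $m$-type expression.

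Once both limits are in hand, I would note that the monotonicity established in Lemma \ref{lemma_a} guarantees the limits are genuine monotone limits (not merely limits along the diagonal), since $\lambda_1$ is monotone in each variable separately and sandwiched between its values at the extreme diagonal directions. This closes the argument for both \eqref{lim0} and \eqref{liminf}.
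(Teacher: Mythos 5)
Your reduction to the diagonal via Lemma \ref{lemma_a}, your upper bound for \eqref{lim0} by integrating the eigenfunction equation, and your entire treatment of the large-movement limit \eqref{liminf} (normalization, uniform $W^{2,p}$ bounds, weak convergence, division by the large parameter, identification of the limit profile through the Lemma \ref{lemma_DFE}-type argument and the simplicity of the zero eigenvalue of $L$ from Lemma \ref{simp}, then integration) all match the paper's proof. The concern you raise about nontriviality of $\bm\varphi^*$ and about the correct $\alpha_j/|\Omega_j|$ weighting is well placed and handled the same way in the paper.

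There is, however, a genuine gap in your lower bound for the small-movement limit \eqref{lim0}. You propose a localized test function ``exactly in the spirit of the argument proving \eqref{limM},'' but that argument rests on the variational formula \eqref{var}, which is established only for \emph{symmetric} $L$; Lemma \ref{lamma_lab0} sits in the non-symmetric section, where the operator in \eqref{eiga} is not self-adjoint and $\lambda_1$ admits no Rayleigh-quotient supremum over arbitrary $H^1$ test functions. The available non-self-adjoint characterization that yields lower bounds is of sup-inf type, $\lambda_1=\sup_{\phi>0}\inf_x (A\phi)/\phi$, and it requires test functions that are strictly positive on every $\bar\Omega_j$; a bump concentrated near a maximizer of $a\beta_{j_0}-\gamma_{j_0}$ vanishes elsewhere (and the nonlocal coupling $L_{jk}\bar\varphi_k/|\Omega_j|$ prevents genuine localization anyway), so the infimum is not controlled by the value near the concentration point and the argument does not close. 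The paper's route is different: it drops the nonnegative incoming coupling term from the eigenfunction equation, so that the positive eigenfunction $\varphi_j$ becomes admissible in the min-max formula for the \emph{decoupled scalar} problem \eqref{eigcom} on each $\Omega_j$, giving $\lambda_1\ge\lambda_1^j(q,a)$; the small-diffusion limit of each scalar principal eigenvalue is classical and equals $\max_{\bar\Omega_j}\{a\beta_j-\gamma_j\}$, which supplies the lower bound. You would need to replace your test-function step with such a comparison (or an equivalent non-self-adjoint device) for the lemma as stated.
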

\begin{proof}
We introduce a parameter $q\in (0, \infty)$ and denote 
$$
\la_1(q,a):=\la_1(qd_I,qd_{I_1},\dots,qd_{I_n}, a).
$$

By Lemma \ref{lemma_a}, 
it suffices to show that $\lambda_1(q, a)$ satisfies
\begin{equation}\label{suppl1}
    \lim_{q\to 0} \lambda_1(q, a)=\max_{x\in\overline\Omega_j\\ j\in\Omega} \{a\beta_j(x)-\gamma_j(x)\},
\end{equation}
and
\begin{equation}\label{suppl2}
\;\;\lim_{q\to \infty} \lambda_1(q, a)=\sum_{j\in\Omega}\alpha_j \frac{a\bar\beta_j-\bar\gamma_j}{|\Omega_j|}.
\end{equation}
First we prove \eqref{suppl1}. Let $\bm\varphi=(\varphi_1, ..., \varphi_n)$ be a positive eigenvector corresponding with $\lambda_1(q, a)$. We have 
$$
\lambda_1(q, a)\varphi_j\ge qd_{I_j}\Delta \varphi_j-qd_I\sum_{k\neq j} L_{kj}  \varphi_j+(a\beta_j-\gamma_j)  \varphi_j, \ \   j\in\Omega.
$$
Let $\lambda^j_1(q, a)$, $j\in\Omega$, be the principal eigenvalue of the following eigenvalue problem
\begin{equation}\label{eigcom}
\begin{cases}
\lambda\phi_j= qd_{I_j}\Delta \phi_j-qd_I\sum_{k\neq j} L_{kj}  \phi_j+(a\beta_j-\gamma_j)  \phi_j, \ \ &x\in\Omega_j, \\
\ds\f{\partial \phi_j}{\partial \nu}=0, \ \ &x\in\partial\Omega_j.
\end{cases}
\end{equation}
By the ``min-max" formula (\cite{berestycki1994principal}) of $\lambda^j_1(q, a)$, we have 
\begin{eqnarray*}
\lambda^j_1(q, a)&=&\min_{\phi\in W_j} \max_{x\in\Omega_j} 
\frac{qd_{I_j}\Delta \phi-qd_I\sum_{k\neq j} L_{kj}  \phi+(a\beta_j-\gamma_j)  \phi}{\phi}\\
&\le& \max_{x\in\Omega_j} 
\frac{qd_{I_j}\Delta \varphi_j-qd_I\sum_{k\neq j} L_{kj}  \varphi_j+(a\beta_j-\gamma_j)  \varphi_j}{\varphi_j}\le\lambda_1(q, a),
\end{eqnarray*}
where 
$$
W_j:=\left\{u\in C^2(\Omega_j)\cap C^1(\bar\Omega_j): \ \frac{\partial u}{\partial \nu}=0  \ \text{on } \partial\Omega_j \ \ \text{and} \  \ u(x)>0 \text{ for all}\ x\in\Omega_j\right\}.
$$
Then we have $\lambda_1(q, a)\ge \lambda^j_1(q, a)$ for all $j\in\Omega$. It is well-known \cite{cantrell2004spatial} that 
$$
\lim_{q\to 0} \lambda_1^j(q, a)= \max_{x\in\bar\Omega_j} \{a\beta_j(x)-\gamma(x)\}.
$$
It follows that 
\begin{equation}\label{eigliminf}
   \liminf_{q\to 0} \lambda_1(q, a)\ge  \max_{x\in\bar\Omega_j, j\in\Omega} \{a\beta_j(x)-\gamma(x)\}.
\end{equation}
On the other hand, integrating the first equation of \eqref{eiga} over $\Omega_j$ and summing up over $j\in\Omega$, we obtain 
$$
\lambda_1(q, a)\sum_{j\in\Omega} \bar\varphi_j=\sum_{j\in\Omega} \int_{\Omega_j} (a\beta_j-\gamma_j)\varphi_j,
$$
which leads to
\begin{equation}\label{lambda_bound}
\lambda_1(q, a) \le  \max_{x\in\bar\Omega_j, j\in\Omega} \{a\beta_j(x)-\gamma(x)\}
\end{equation}
for all $q, a>0$. This combined with \eqref{eigliminf} proves \eqref{lim0}.

Then we consider \eqref{suppl2}. We normalize $\bm\varphi$ such that $\sum_{j\in\Omega} \|\varphi_j\|_{L^\infty(\Omega_j)}=1$. By \eqref{eiga}, \eqref{lambda_bound} and the parabolic estimate, we know that $\varphi_j$ is uniformly bounded in $W^{2, p}(\Omega_j)$ for all $j\in \Omega_j$ and $q>1$. Therefore, restrict to a subsequence if necessary we have $\varphi_j\to \varphi_j^*$ weakly in $W^{2, p}(\Omega_j)$ as $q\to\infty$ for some $\varphi_j^*\in W^{2, p}(\Omega_j)$ with $\sum_{j\in\Omega} \|\varphi_j^*\|_{L^\infty(\Omega_j)}=1$. Dividing \eqref{eiga} by $q$ and taking $q\to\infty$, we find 
\begin{equation}\label{eig22}
\begin{cases}
\ds d_{I_j}\Delta \varphi_j^*+d_I\sum_{k\neq j} \left(\f{L_{jk}\bar \varphi_k^*}{|\Omega_j|}-  L_{kj}  \varphi_j^*\right)=0, &x\in\Omega_j,  j\in\Omega,\\
\ds\f{\partial \varphi_j^*}{\partial \nu}=0, &x\in\partial\Omega_j,  j\in\Omega.
\end{cases}
\end{equation}
Similar to the proof of Lemma \ref{lemma_DFE}, we can show $\bm\varphi^*=k^*(\alpha_1/{|\Omega_1|}, ..., \alpha_n/{|\Omega_n|})$, where 
$$
k^*=\frac{1}{\sum_{j\in\Omega} \alpha_j/|\Omega_j|}.
$$
Integrating the first equation of \eqref{eiga} over $\Omega_j$ and summing up the equations over $j\in\Omega$, we obtain 
$$
\lambda_1(q, a) \sum_{j\in\Omega}\int_{\Omega_j}\varphi_j =\sum_{j\in\Omega}\int_{\Omega_j}(a\beta_j-\gamma_j)\varphi_j .
$$
Taking $q\to\infty$ and plugging in $\bm\varphi^*$, we obtain \eqref{liminf}.
\end{proof}

Combining Lemmas \ref{lemma_a} and \ref{lamma_lab0}, we obtain the following main result about $\mathcal{R}_0$:
\begin{theorem}
Suppose that \emph{(A1)-(A2)} hold. Then the following statements hold:
\begin{enumerate}
\item [\emph{(1)}] $\mathcal R_0$ is 
decreasing in  $d_I\in(0, \infty)$. Moreover, $\mathcal R_0$  is strictly decreasing in $d_I$ if and only if $(\beta_1, \dots, \beta_n)$ is not a multiple of $(\gamma_1, \dots, \gamma_n)$.
\item [\emph{(2)}]$\mathcal R_0$ is 
decreasing in $d_{I_j}\in (0, \infty)$ for each $j\in\Omega$.
\item[\emph{(3)}]  $\mathcal R_0$ satisfies
    $$
\ds\lim_{\max\{d_I, d_{I_1}, ..., d_{I_n}\}\to 0} \mathcal{R}_0= \sup_{x\in\bar\Omega_j, j\in\Omega} \frac{\beta_j(x)}{\gamma_j(x)},$$ and $$\ds    \lim_{\min\{d_I, d_{I_1}, ..., d_{I_n}\}\to \infty} \mathcal{R}_0=  \frac{\sum_{j\in\Omega}\alpha_j\frac{\bar\beta_j}{|\Omega_j|}}{\sum_{j\in\Omega}\alpha_j\frac{\bar\gamma_j}{|\Omega_j|}},
$$
where $\bm \alpha=(\alpha_1,\dots,\alpha_n)$ is the right eigenvector of $L$ corresponding to eigenvalue $0$ with $\sum_{j\in\Omega}\alpha_j=1$.
\end{enumerate}
\end{theorem}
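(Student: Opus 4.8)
The plan is to reduce every assertion to the already-established properties of the parametrized principal eigenvalue $\lambda_1(d_I,d_{I_1},\dots,d_{I_n},a)$ of \eqref{eiga} recorded in Lemmas \ref{lemma_a} and \ref{lamma_lab0}. The bridge is the observation that setting $\lambda=0$ and $a=1/\mathcal{R}_0$ in \eqref{eiga} turns it into exactly \eqref{eig1}; since by Lemma \ref{lemma_pe} the value $\tilde\lambda_1=1/\mathcal{R}_0$ is the principal eigenvalue of \eqref{eig1}, we obtain the key identity
\begin{equation}\label{keyrel}
\lambda_1\!\left(d_I,d_{I_1},\dots,d_{I_n},\tfrac{1}{\mathcal{R}_0}\right)=0.
\end{equation}
Because $\lambda_1$ is strictly increasing in $a$ by Lemma \ref{lemma_a}(3), the value $a=1/\mathcal{R}_0$ is the unique $a$ solving $\lambda_1=0$, so \eqref{keyrel} characterizes $\mathcal{R}_0$ as an implicit function of the movement rates.

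To prove (1) and (2) I would transfer monotonicity across \eqref{keyrel}. Fixing all parameters except $d_I$ and taking $d_I^{(1)}<d_I^{(2)}$ with $a_i=1/\mathcal{R}_0(d_I^{(i)})$, Lemma \ref{lemma_a}(1) gives $\lambda_1(d_I^{(2)},\dots,a_1)\le\lambda_1(d_I^{(1)},\dots,a_1)=0=\lambda_1(d_I^{(2)},\dots,a_2)$, whence strict $a$-monotonicity forces $a_1\le a_2$, i.e.\ $\mathcal{R}_0$ is decreasing in $d_I$; the identical argument with Lemma \ref{lemma_a}(2) proves (2). For the strictness in (1): if $\mathcal{R}_0$ is constant on some $d_I$-interval, then $a=1/\mathcal{R}_0$ is constant there and \eqref{keyrel} makes $\lambda_1(\cdot,a)\equiv 0$ on that interval, so $\lambda_1$ is not strictly decreasing in $d_I$; the characterization in Lemma \ref{lemma_a}(1) then yields a constant $C$ with $a\beta_j(x)-\gamma_j(x)\equiv C$, and since that same lemma gives $\lambda_1\equiv C$, \eqref{keyrel} forces $C=0$. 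Thus $\beta_j=\mathcal{R}_0\,\gamma_j$ for all $j$, i.e.\ $(\beta_1,\dots,\beta_n)$ is a multiple of $(\gamma_1,\dots,\gamma_n)$; conversely, if $\beta_j=\kappa\gamma_j$ then $\mathcal{R}_0=\kappa$ by Lemma \ref{lemma_Rb} and is $d_I$-independent, giving the stated equivalence.

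For the limits in (3) I would combine \eqref{keyrel} with the profiles of Lemma \ref{lamma_lab0} through a squeeze built on strict $a$-monotonicity. Writing $M=\sup_{x,j}\beta_j/\gamma_j$ and considering $\max\{d_I,d_{I_1},\dots,d_{I_n}\}\to0$, Lemma \ref{lemma_Rb} already gives $1/\mathcal{R}_0\ge 1/M$. For the reverse, fix any $a>1/M$; then $\max_{x,j}\{a\beta_j(x)-\gamma_j(x)\}>0$, so by \eqref{lim0} we have $\lambda_1(\dots,a)>0=\lambda_1(\dots,1/\mathcal{R}_0)$ once the movement rates are small, and strict $a$-monotonicity gives $1/\mathcal{R}_0<a$; letting $a\downarrow 1/M$ yields $1/\mathcal{R}_0\to 1/M$, i.e.\ $\mathcal{R}_0\to M$. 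The limit as $\min\{d_I,d_{I_1},\dots,d_{I_n}\}\to\infty$ is handled the same way using \eqref{liminf}: the affine-in-$a$ limit $\sum_j\alpha_j(a\bar\beta_j-\bar\gamma_j)/|\Omega_j|$ vanishes exactly at $a_*=\big(\sum_j\alpha_j\bar\gamma_j/|\Omega_j|\big)/\big(\sum_j\alpha_j\bar\beta_j/|\Omega_j|\big)$, and comparing at fixed $a>a_*$ and $a<a_*$ forces $1/\mathcal{R}_0\to a_*$, giving the stated value $m$.

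The main obstacle I anticipate is the strictness argument in (1): extracting from "$a\beta_j-\gamma_j\equiv C$" the sharp conclusion that $C=0$, hence proportionality of $\bm\beta$ and $\bm\gamma$, which hinges on combining \eqref{keyrel} with the fact from Lemma \ref{lemma_a}(1) that in the degenerate case $\lambda_1$ is identically $C$. The limit arguments are otherwise routine, the only care being that the convergence in Lemma \ref{lamma_lab0} holds for each fixed $a$, so every comparison must be made at a fixed auxiliary $a$ before it is allowed to approach the threshold.
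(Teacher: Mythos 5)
Your proposal is correct, and its skeleton coincides with the paper's: both hinge on the relation $\lambda_1(d_I,d_{I_1},\dots,d_{I_n},1/\mathcal R_0)=0$, the strict monotonicity of $\lambda_1$ in $a$ from Lemma \ref{lemma_a}(3), and the transfer of the $d_I$- and $d_{I_j}$-monotonicity of Lemma \ref{lemma_a}(1)--(2) across this identity, with the limits read off from Lemma \ref{lamma_lab0}. Two of your steps are, however, genuinely more direct than the paper's. For the strictness in (1), the paper splits into the case where no $a$ makes $a\bm\beta-\bm\gamma$ constant and the case where some $\tilde a$ does, and in the latter it analyzes the sign of the constant $C$ and invokes Lemma \ref{lemma_Rb} to show $\tilde\lambda_1\neq\tilde a$; you instead observe that failure of strict monotonicity at $a=1/\mathcal R_0$ forces $a\beta_j-\gamma_j\equiv C$ with $\lambda_1\equiv C$, and the identity $\lambda_1=0$ then gives $C=0$ outright, yielding proportionality of $\bm\beta$ and $\bm\gamma$ in one stroke. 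For (3), the paper first posits limits $1/l_1$ and $1/l_2$ by monotonicity and must then dispose of the corner cases $l_1=0$ and $l_2=\infty$ separately; your two-sided squeeze at a fixed auxiliary $a$ (comparing the sign of the limits in \eqref{lim0} and \eqref{liminf} with $\lambda_1(\cdot,1/\mathcal R_0)=0$) handles $M=\infty$ uniformly and removes both corner cases. The only cosmetic caveat is the converse in (1): with the paper's convention $0/0=0$, Lemma \ref{lemma_Rb} alone does not pin down $\mathcal R_0=\kappa$ when $\gamma_j$ vanishes somewhere, but the identity \eqref{uplow} in its proof (or, as in the paper, $\lambda_1(\cdot,1/\kappa)=0$) gives it immediately.
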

\begin{proof}
We only prove (1) as (2) can be proved similarly using Lemma \ref{lemma_a}.
Note that $\tilde\lambda_1:=1/\mathcal{R}_0$ is a principal eigenvalue of 
\begin{equation}\label{eigd}
\begin{cases}
\ds -\lambda  \beta_j \varphi_j=d_{I_j}\Delta \varphi_j+d_I\sum_{k\neq j} \left(\f{L_{jk}\bar \varphi_k}{|\Omega_j|}-  L_{kj}  \varphi_j\right)-\gamma_j  \varphi_j,&x\in\Omega_j,  j\in\Omega,\\
\ds\f{\partial \varphi_j}{\partial \nu}=0, &x\in\partial\Omega_j,  j\in\Omega.
\end{cases}
\end{equation}
If  $(\beta_1, \dots, \beta_n)=k(\gamma_1, \dots, \gamma_n)$ for some $k>0$,  by Lemma \ref{lemma_a}(3), it is easy to see that $\tilde\lambda_1(q)=1/k$ and $\mathcal{R}_0=k$ for all $q>0$. So the claim holds in this situation. Hence, we may assume  that $(\beta_1, \dots, \beta_n)$ is not a multiple of $(\gamma_1, \dots, \gamma_n)$. We consider two cases:

\noindent\emph{Case 1}. For any $a>0$, there does not exist a constant $C$ such that $a\beta_j(x)-\gamma_j(x)=C$ for all $x\in\bar\Omega_j$ and $j\in\Omega$. 

By Lemma \ref{lemma_a}, $\lambda_1$ is strictly decreasing in $d_I$ for any $a>0$. Let $d^{1}_I>d^{1}_I>0$. Then, we have
 $$
 \lambda_1\left(d^{2}_I,\tilde\lambda_1\left(d^{1}_I\right)\right)> \lambda_1\left(d^{1}_I,\tilde\lambda_1\left(d^{1}_I\right)\right)= \lambda_1\left(d^{2}_I,\tilde\lambda_1\left(d^{2}_I\right)\right)=0.
 $$
 By Lemma \ref{lemma_a}, we have $\tilde\lambda_1\left(d^{1}_I\right)>\tilde\lambda_1\left(d^{2}_I\right)$. Therefore, $\tilde\lambda_1$ is strictly increasing in $d_I$, and $\mathcal{R}_0$ is strictly decreasing in $d_I$. 
 
 \noindent\emph{Case 2}. There exists $\tilde a>0$ such that  $\tilde a\beta_j(x)-\gamma_j(x)=C$ for all $x\in\bar\Omega_j$ and $j\in\Omega$ for some constant $C$. 
 
 Since $(\beta_1, \dots, \beta_n)$ is not a multiple of $(\gamma_1, \dots, \gamma_n)$, $\tilde a$ is the unique number such that $(\tilde a\beta_1-\gamma_1,\dots, \tilde a\beta_n-\gamma_n)$ is a multiple of $(1, \dots, 1)$ and $C\neq 0$. 
  If $C>0$, then 
 $$
 \inf_{x\in\bar\Omega} \frac{\beta_j(x)}{\gamma_j(x)}>\frac{1}{\tilde a}, \ \ \text{for all } j\in\Omega.
 $$
 By Lemma \ref{lemma_Rb}, we have $\mathcal{R}_0>1/\tilde a$ and $\tilde\lambda_1<\tilde a$ for all $d_I>0$.  By   Lemma \ref{lemma_a}, $\lambda_1$ is strictly decreasing in $q$ for any $a<\tilde a$. Then similar to Case 1, we can show that $\tilde\lambda_1$ is strictly increasing in $d_I$, and $\mathcal{R}_0$ is strictly decreasing in $d_I$. 
 
 If $C<0$, then 
 $$
 \sup_{x\in\bar\Omega} \frac{\beta_j(x)}{\gamma_j(x)}<\frac{1}{\tilde a}, \ \ \text{for all } j\in\Omega.
 $$
 By Lemma \ref{lemma_Rb}, we have $\mathcal{R}_0<1/\tilde a$ and $\tilde\lambda_1>\tilde a$ for all $d_I>0$.  Then we can prove the monotonicity of $\tilde\lambda_1$ and $\mathcal{R}_0$ with respect to $d_I$ similar to the case $C>0$.

(3) We introduce the parameter $q\in (0, \infty)$ again and denote 
$$\mathcal R_0(q):=\mathcal \la_1(qd_I,qd_{I_1},\dots,qd_{I_n}).$$
By (1)-(2), it suffices to compute the limits of $\mathcal{R}_0(q)$ as $q\to 0$ or $\infty$. 
 Since $\mathcal{R}_0(q)$ is decreasing in  $q$, we may assume 
 $$
 \lim_{q\to 0} \mathcal{R}_0(q)=1/l_1\ \ \text{and}\ \   \lim_{q\to \infty} \mathcal{R}_0(q)=1/l_2
 $$
 for some $l_1 \in [0, \infty)$ and $l_2\in (0, \infty]$. Therefore, $\lim_{q\to 0} \tilde \lambda_1(q)=l_1$ and $ \lim_{q\to \infty} \tilde \lambda_1(q)=l_2$. We first suppose $l_1\neq 0$ and $l_2\neq \infty$.  Let $\epsilon>0$ be given. Then there exists $\delta>0$ such that 
 $$
l_1-\epsilon< \tilde \lambda_1(q)<l_1+\epsilon, \ \ \text{for any } q<\delta,
 $$
 and 
  $$
l_2-\epsilon< \tilde \lambda_1(q)<l_2+\epsilon, \ \ \text{for any } q>\frac{1}{\delta}. 
 $$
 By the monotonicity of $\lambda_1(q, a)$ in $q$, we have 
 $$
\lambda_1(q, l_1-\epsilon)< \lambda_1(q, \tilde\lambda_1(q))=0<  \lambda_1(q, l_1+\epsilon), \ \ \text{for any } q<\delta,
 $$
 and 
  $$
\lambda_1(q, l_2-\epsilon)< \lambda_1(q, \tilde\lambda_1(q))=0<  \lambda_1(q, l_2+\epsilon), \ \ \text{for any } q>\frac{1}{\delta}.
 $$
 By Lemma \ref{lamma_lab0}, we have 
\begin{equation}\label{l1b}
  \max_{x\in\overline\Omega_j\\ j\in\Omega} \{(l_1-\epsilon)\beta_j(x)-\gamma_j(x)\}
\le 0\le \max_{x\in\overline\Omega_j\\ j\in\Omega} \{(l_1+\epsilon)\beta_j(x)-\gamma_j(x)\},
\end{equation}
 and 
  $$
  \sum_{j\in\Omega}\alpha_j \frac{(l_2-\epsilon)\bar\beta_j-\bar\gamma_j}{|\Omega_j|}
\le 0\le \sum_{j\in\Omega}\alpha_j \frac{(l_2+\epsilon)\bar\beta_j-\bar\gamma_j}{|\Omega_j|}.
 $$
 Taking $\epsilon\to 0$, we obtain 
 $$
  \max_{x\in\overline\Omega_j\\ j\in\Omega} \{l_1\beta_j(x)-\gamma_j(x)\} 
= 0 \ \ \text{and}\ \   \sum_{j\in\Omega}\alpha_j \frac{l_2\bar\beta_j-\bar\gamma_j}{|\Omega_j|}=0.
 $$
 Therefore, we have
 $$
 \frac{1}{l_1}
= \sup_{x\in\bar\Omega_j, j\in\Omega} \frac{\beta_j(x)}{\gamma_j(x)} \ \ \text{and}\ \   \frac{1}{l_2}=  \frac{\sum_{j\in\Omega}\alpha_j\frac{\bar\beta_j}{|\Omega_j|}}{\sum_{j\in\Omega}\alpha_j\frac{\bar\gamma_j}{|\Omega_j|}}.
 $$

 If $l_1=0$, it suffices to show that we must have $M:=\sup_{x\in\bar\Omega} {\beta_j(x)}/{\gamma_j(x)}=\infty$ in this case. Indeed, suppose to the contrary that $M<\infty$. Then we must have 
 \begin{equation}\label{MM}
     \frac{1}{M}\beta_j(x)-\gamma_j(x)\le 0, \ \ \text{for all} \ x\in\Omega_j, j\in\Omega. 
 \end{equation}
It is easy to check that the second inequality in \eqref{l1b} still holds when $l_1=0$. Taking $\epsilon\to0$, we obtain 
$$
0 \le \max_{x\in\overline\Omega_j\\ j\in\Omega} \{l_1\beta_j(x)-\gamma_j(x)\}.
$$
By \eqref{MM}, we have $l_1\ge 1/M>0$, which is a contradiction. 

Finally, we show that $l_2=\infty$ is not possible. Suppose to the contrary that $l_2=\infty$. Fix $
\tilde M>0$ such that 
$$
\sum_{j\in\Omega}\alpha_j \frac{\tilde M\bar\beta_j-\bar\gamma_j}{|\Omega_j|}>0.
$$
Since $\lim_{q\to \infty} \tilde \lambda_1(q)=l_2=\infty$, there exists $\hat q>0$ such that $\tilde \lambda_1(q)>\tilde M$ for all $q\ge \hat q$. 
 By the monotonicity of $\lambda_1(q, a)$ in $a$ (Lemma \ref{lemma_a}), we have 
 $$
\lambda_1(q, \tilde\lambda_1(q))=0>  \lambda_1(q, \tilde M), \ \ \text{for all } q\ge \hat q.
 $$
 Taking $q\to\infty$ and by Lemma \ref{lamma_lab0}, we obtain 
 $$
0\ge \lim_{q\to\infty}  \lambda_1(q, \tilde M)=\sum_{j\in\Omega}\alpha_j \frac{\tilde M\bar\beta_j-\bar\gamma_j}{|\Omega_j|},
 $$
 which is a contradiction. 
\end{proof}


\section{Numerical simulations}
In this section, we explore the impact of controlling population movement on the transmission of the disease using model \eqref{patc3}-\eqref{total}. -eps-converted-to.pdf

Let $\Omega=\{1, 2\}$, $\Omega_1=(0, 1)$ and $\Omega_2=(2, 3)$, which means that the individuals are living in two patches and each patch is an interval of length one. The transmission  and recovery rates are $\beta_1=1.5+\sin(2\pi x), \beta_2=1+\sin(2\pi x)$ and $\gamma_1=\gamma_2=1$, respectively.  The movement pattern of individuals between the two patches is described by matrix $L=(L_{jk})$ with $L_{11}=L_{22}=-1$ and  $L_{12}=L_{21}=1$, and the movement rates of susceptible and infected individuals between patches are $d_{S}=d_{I}=1$. The movement rates of individuals inside patches are $d_{S_1}=d_{S_2}=d_{I_1}=d_{I_2}=1$. The initial conditions are $S_1(x, 0)=S_2(x, 0)=5$ and $I_1(x, 0)=I_2(x, 0)=1$. We solve the model numerically and graph the total infected individuals $\int_{\Omega_1} I_1$ and $\int_{\Omega_1} I_2$ in Fig. \ref{fig1}. The figure indicates that the disease will persist in both patches.

\begin{figure}[htbp]
\centering\includegraphics[width=0.5\textwidth]{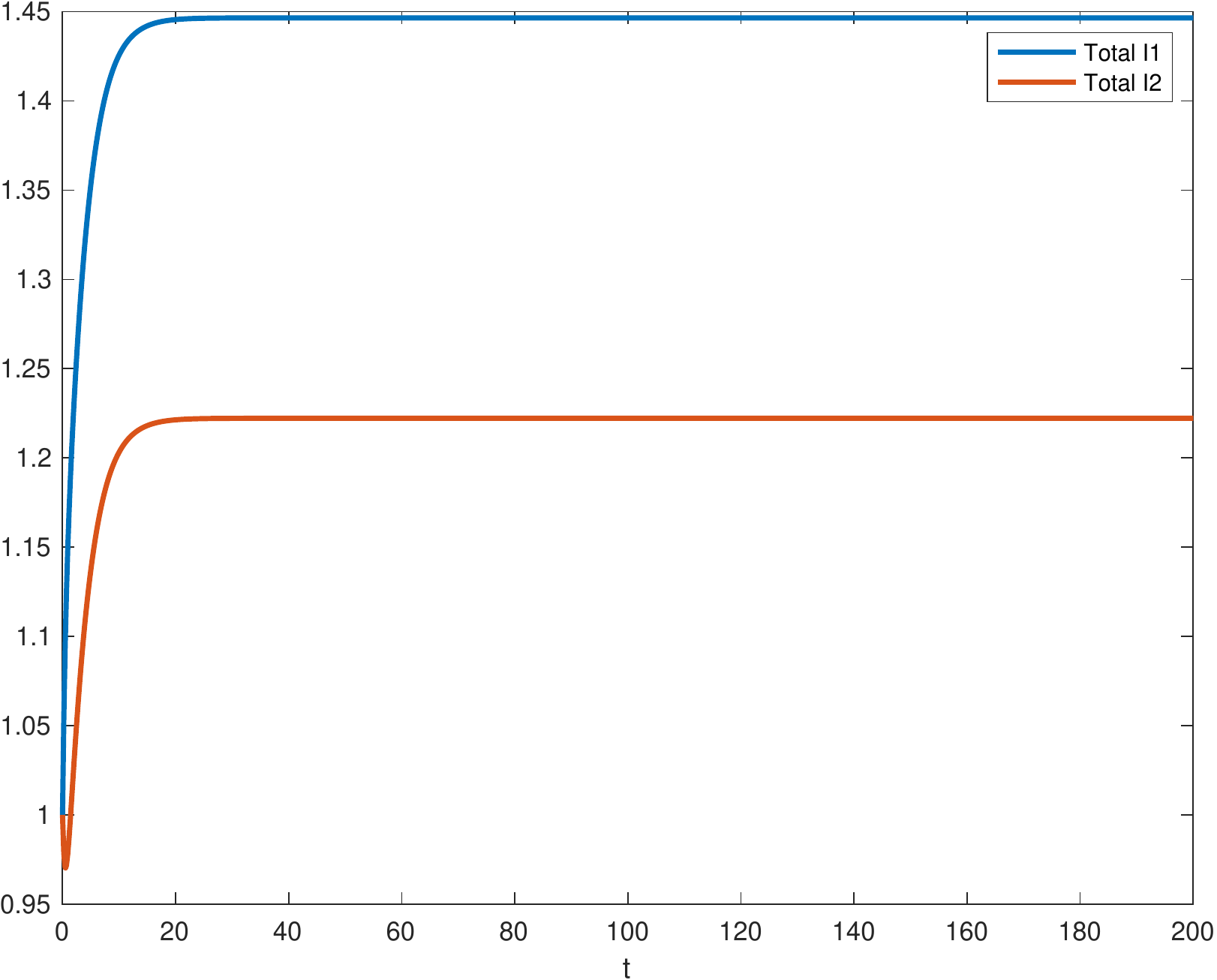}
\caption{Total infected individuals $\int_{\Omega_1} I_1$ and $\int_{\Omega_1} I_2$. The movement rates are $d_{S_1}=d_{S_2}=d_{I_1}=d_{I_2}=d_S=d_I=1$.}
\label{fig1}
\end{figure}

In the following, we will vary some  parameter values to see the impact of limiting the movement of individuals. The parameter values not mentioned below are the same as the first simulation. 
\begin{itemize}
    \item Limiting $d_{S_1}$. 
Since we are interested in limiting the movement of susceptible individuals in patch 1, we let $d_{S_1}=10^{-5}$ in this simulation. Firstly, we set $d_S=d_I=0$, which means that the patches are disconnected. As shown in  Fig. \ref{fig2}(a), the total infected individuals in patch 1  approaches zero. This is in agreement with a result in \cite{Allen}, which states that if $n=1$ limiting the movement of susceptible individuals may eliminate the disease provided that $\beta_1-\gamma_1$ changes sign in $\Omega_1$. Then we exam the impact of the movement between patches by setting $d_S=d_I=0.2$ (Fig. \ref{fig2}(b)) or $d_S=d_I=1$ (Fig. \ref{fig2}(c)).
It turns out that the infected individuals in both patches are no longer approaching zero, which implies that limiting $d_{S_1}$ cannot eliminate the disease now.

    \begin{figure}[htbp]
\centering
 \begin{subfigure}{0.32\textwidth}
 \centering
\includegraphics[width=\textwidth]{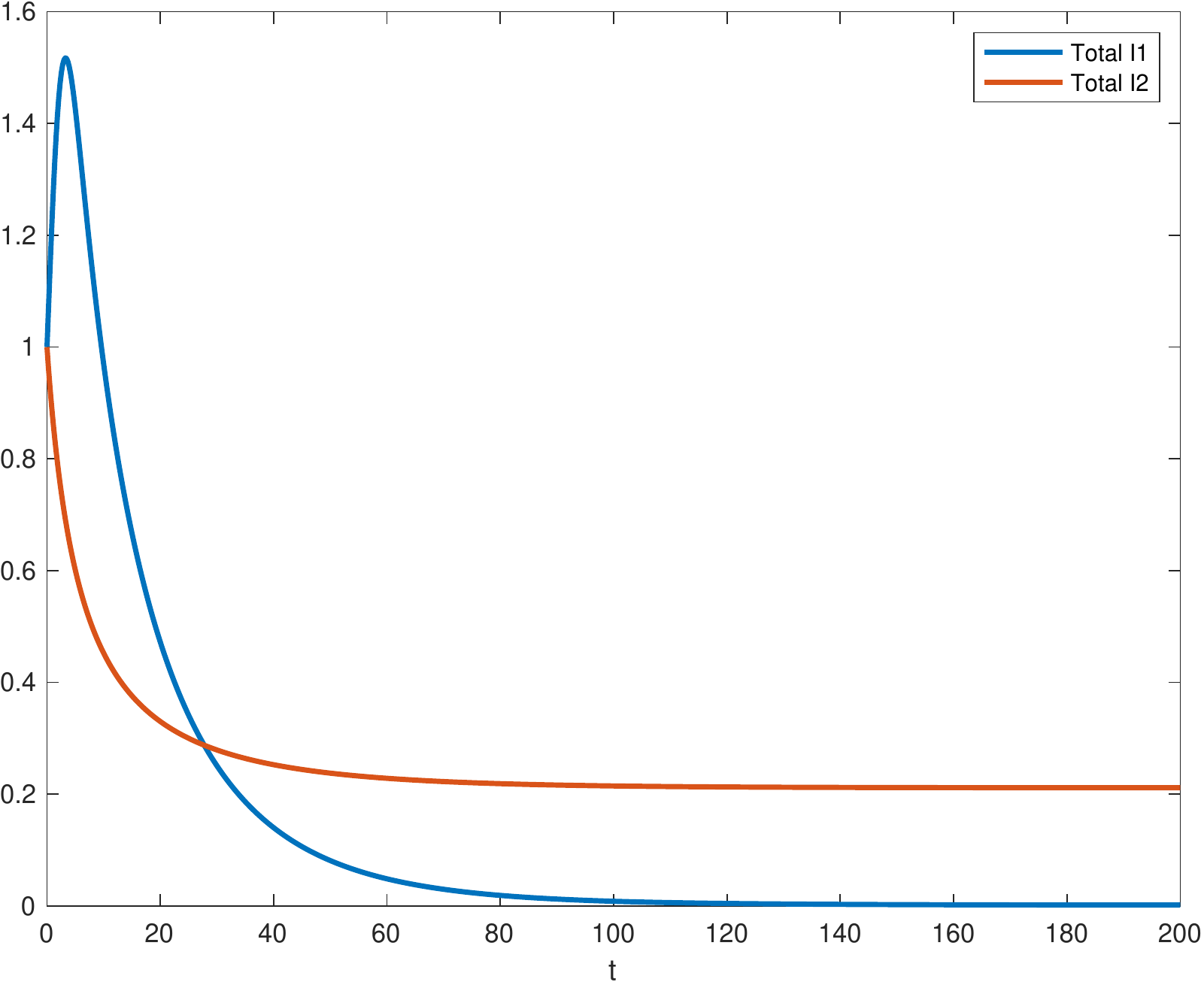}
(a)
\end{subfigure}
 \begin{subfigure}{0.32\textwidth}
 \centering
\includegraphics[width=\textwidth]{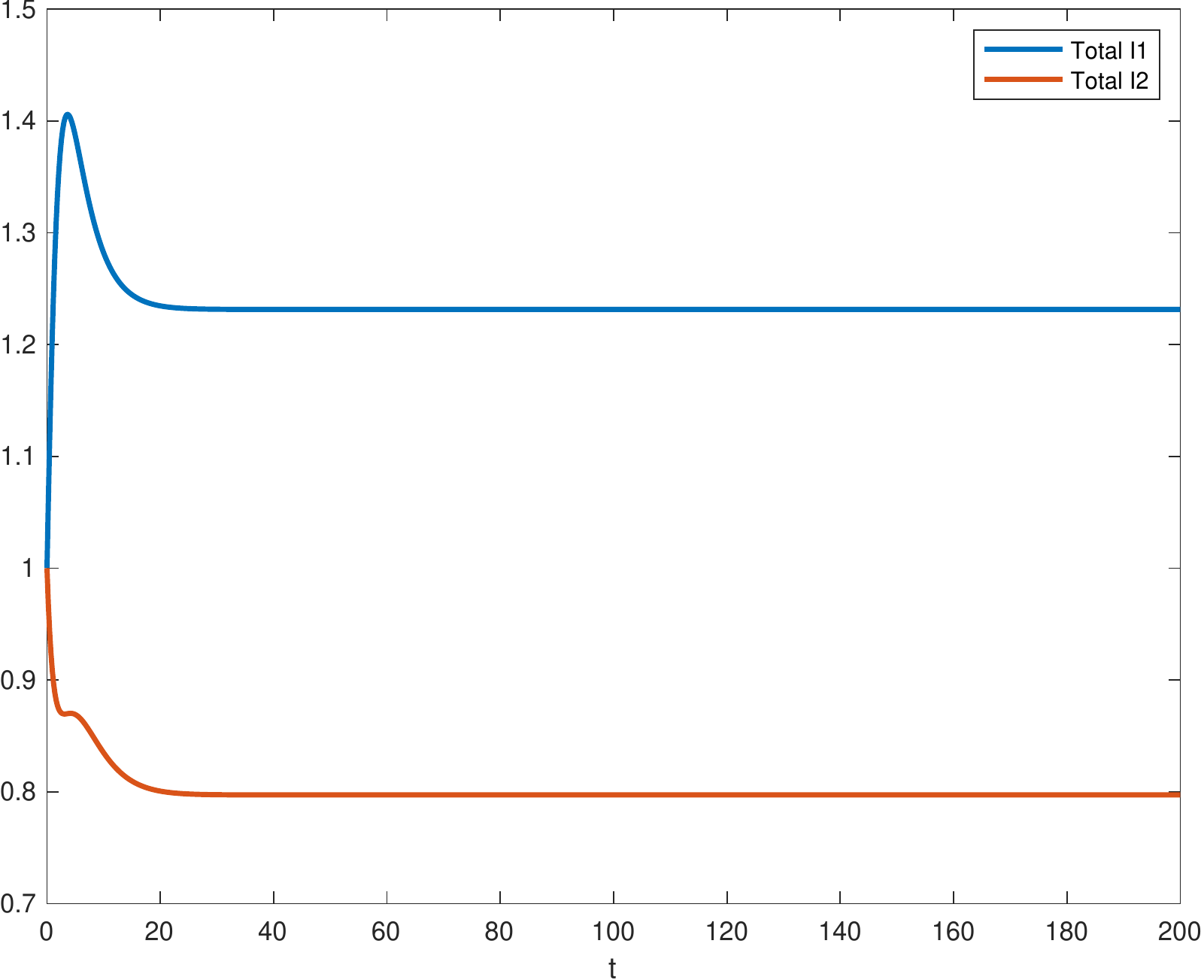}
\label{fig2b}
(b)
\end{subfigure}
 \begin{subfigure}{0.32\textwidth}
 \centering
\includegraphics[width=\textwidth]{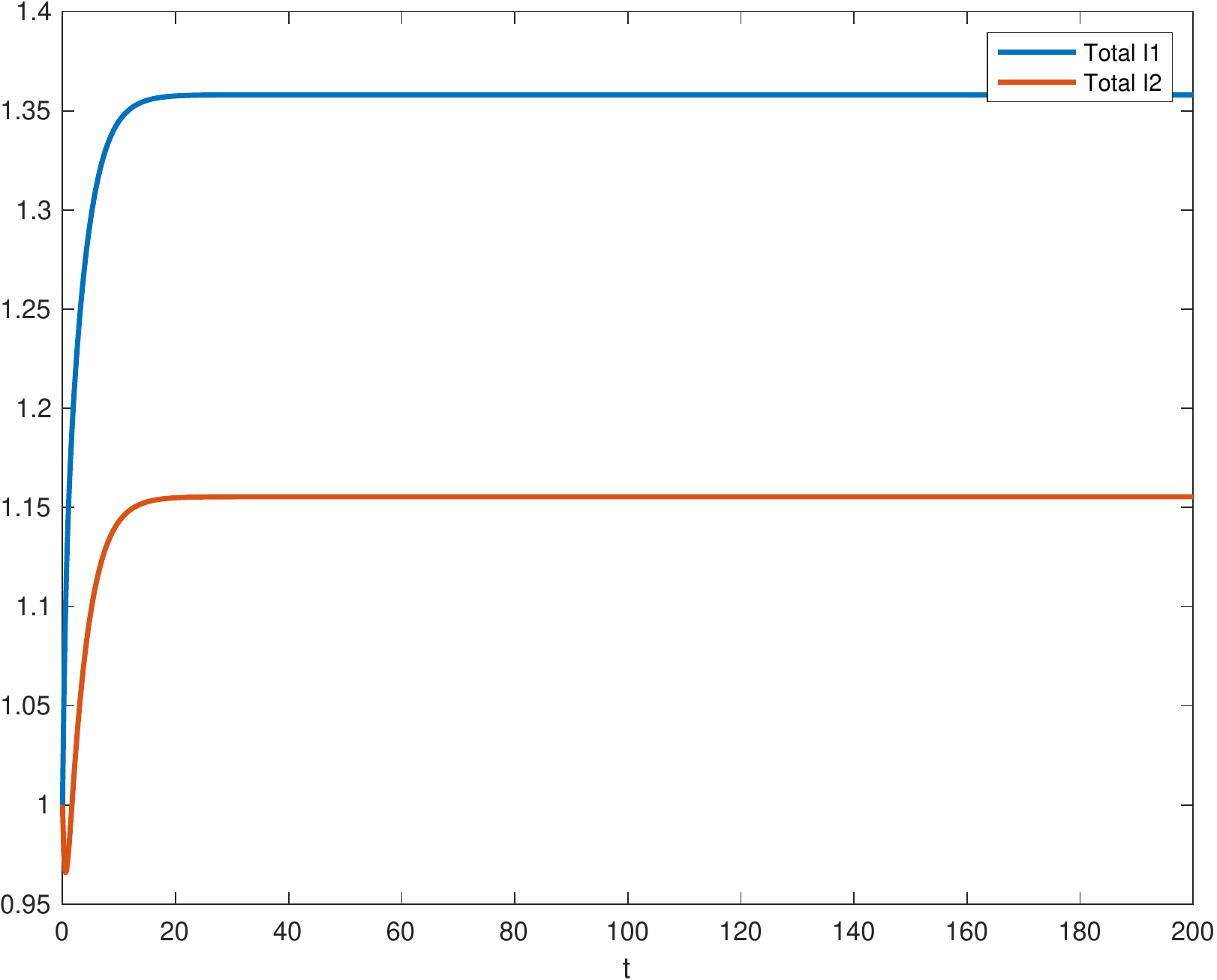}
\label{fig2c}
(c)
\end{subfigure}
\caption{Total infected individuals $\int_{\Omega_1} I_1(x, t)$ and $\int_{\Omega_1} I_2(x, t)$  with $d_{S_1}=10^{-5}$.  The left figure is for $d_S=d_I=0$; the middle figure is for $d_S=d_I=0.2$; the right figure is for $d_S=d_I=1$. The other movement rates are $d_{S_2}=d_{I_1}=d_{I_2}=1$.}
\label{fig2}
\end{figure}

    \item Limiting $d_{S_1}$ and $d_{S_2}$. We simulate the impact of limiting the movement of susceptible individuals in both patches by setting $d_{S_1}=d_{S_2}=10^{-5}$. If the patches are disconnected ($d_{S}=d_{I}=0$), as shown in the Fig. \ref{fig3}(a), the infected individuals in both patches are eliminated. However if there are movement of individuals between patches, limiting the movement of susceptible individuals cannot eliminate the disease anymore (see Fig. \ref{fig3}(b)-(c)). Moreover, larger movement rates between patches  seem to increase the epidemic size in both patches. 

    \begin{figure}[htbp]
\centering
 \begin{subfigure}{0.32\textwidth}
 \centering
\includegraphics[width=\textwidth]{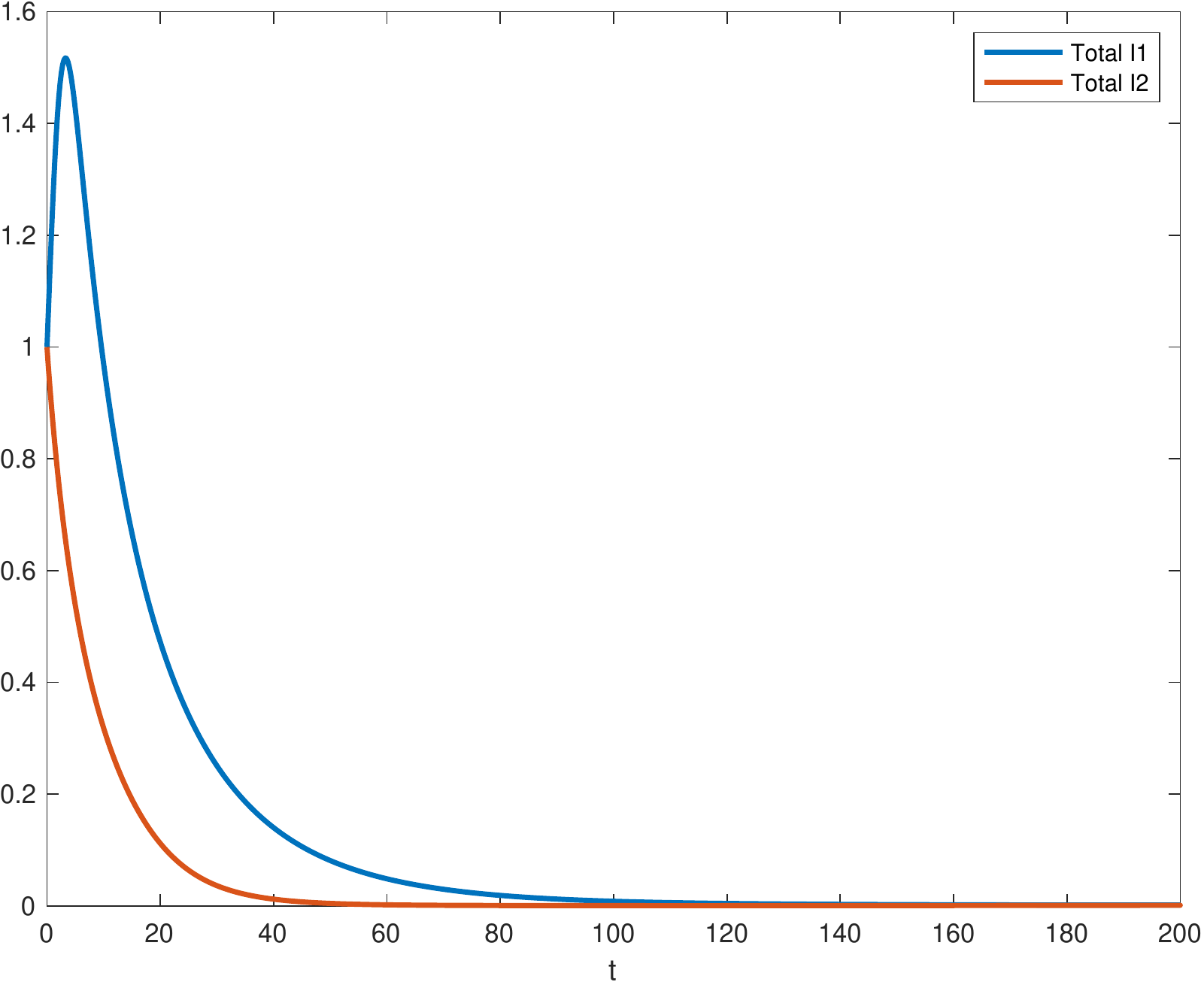}
(a)
\end{subfigure}
 \begin{subfigure}{0.32\textwidth}
 \centering
\includegraphics[width=\textwidth]{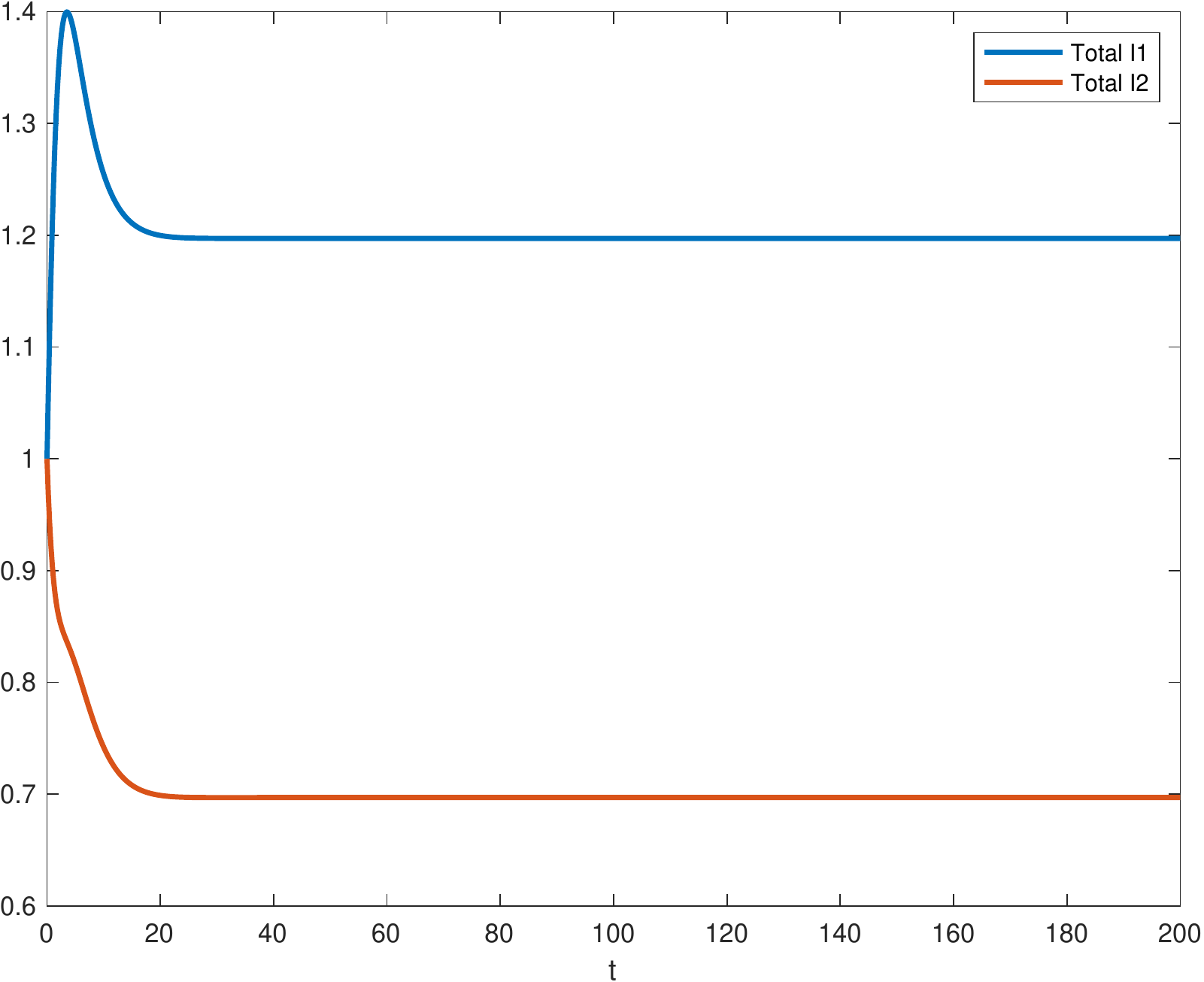}
\label{fig2b2}
(b)
\end{subfigure}
 \begin{subfigure}{0.32\textwidth}
 \centering
\includegraphics[width=\textwidth]{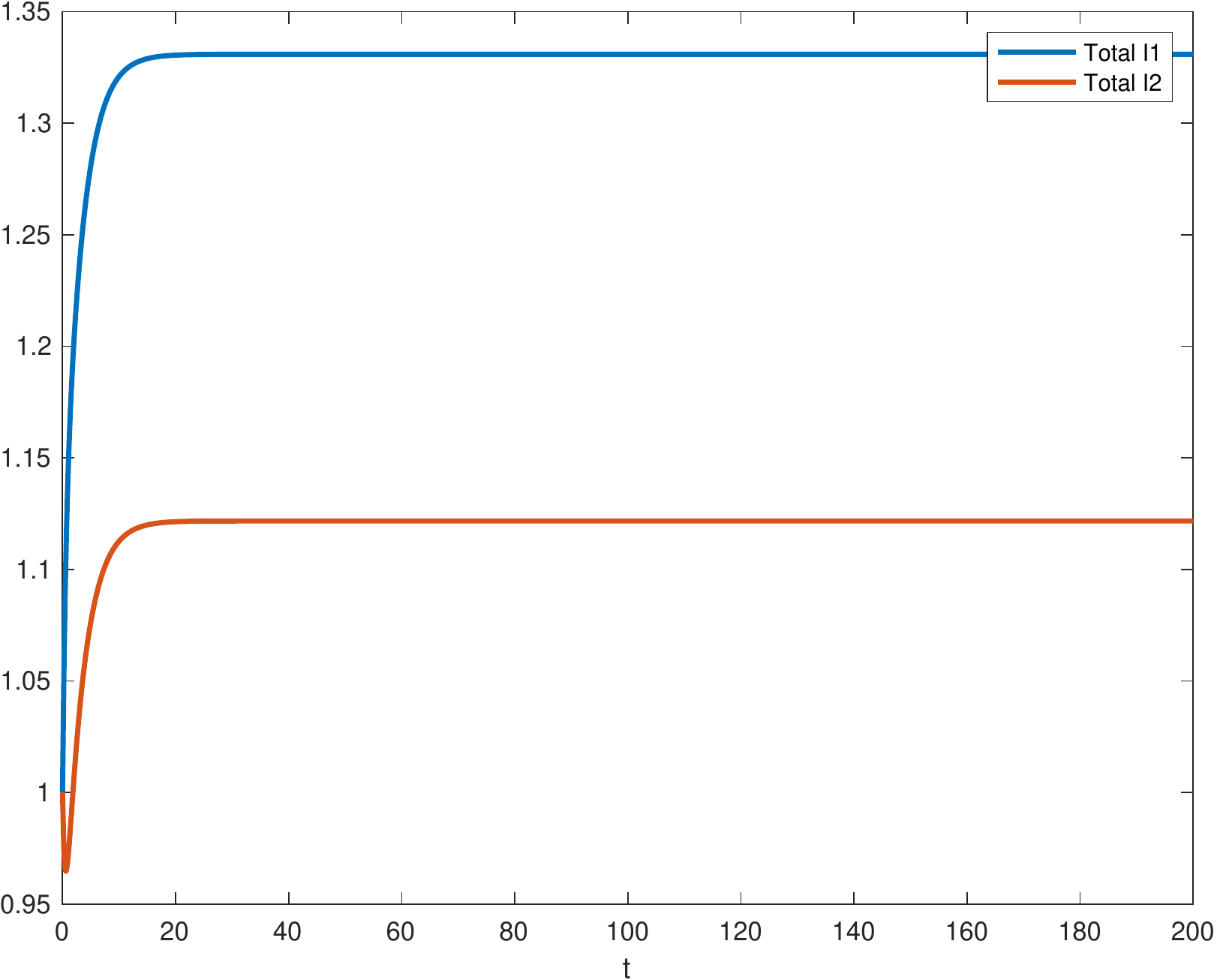}
\label{fig2c2}
(c)
\end{subfigure}
\caption{Total infected individuals $\int_{\Omega_1} I_1(x, t)$ and $\int_{\Omega_1} I_2(x, t)$  with $d_{S_1}=d_{S_2}=10^{-5}$.  The left figure is for $d_S=d_I=0$; the middle figure is for $d_S=d_I=0.2$; the right figure is for $d_S=d_I=1$. The other movement rates are $d_{I_1}=d_{I_2}=1$.}
\label{fig3}
\end{figure}

  \item Limiting $d_{I_1}$ and $d_{I_2}$. If the patches are disconnected ($d_S=d_I=0$), then limiting the movement of infected individuals cannot eliminate the disease as shown in Fig. \ref{fig4}(a). This is in agreement of the results proved in \cite{Peng2009}. Then we set $d_S=d_I=1$, which means that there are movement of individuals between patches. As expected, limiting the movement of infected individuals in one patch (see Fig. \ref{fig4}(b)) or two patches (Fig. \ref{fig4}(c)) cannot eliminate the disease. Moreover comparing  Fig. \ref{fig1} and Fig. \ref{fig4}, we can see that limiting the movement of infected individuals may even increase the epidemic size.

    \begin{figure}[htbp]
\centering
\begin{subfigure}{0.32\textwidth}
 \centering
\includegraphics[width=\textwidth]{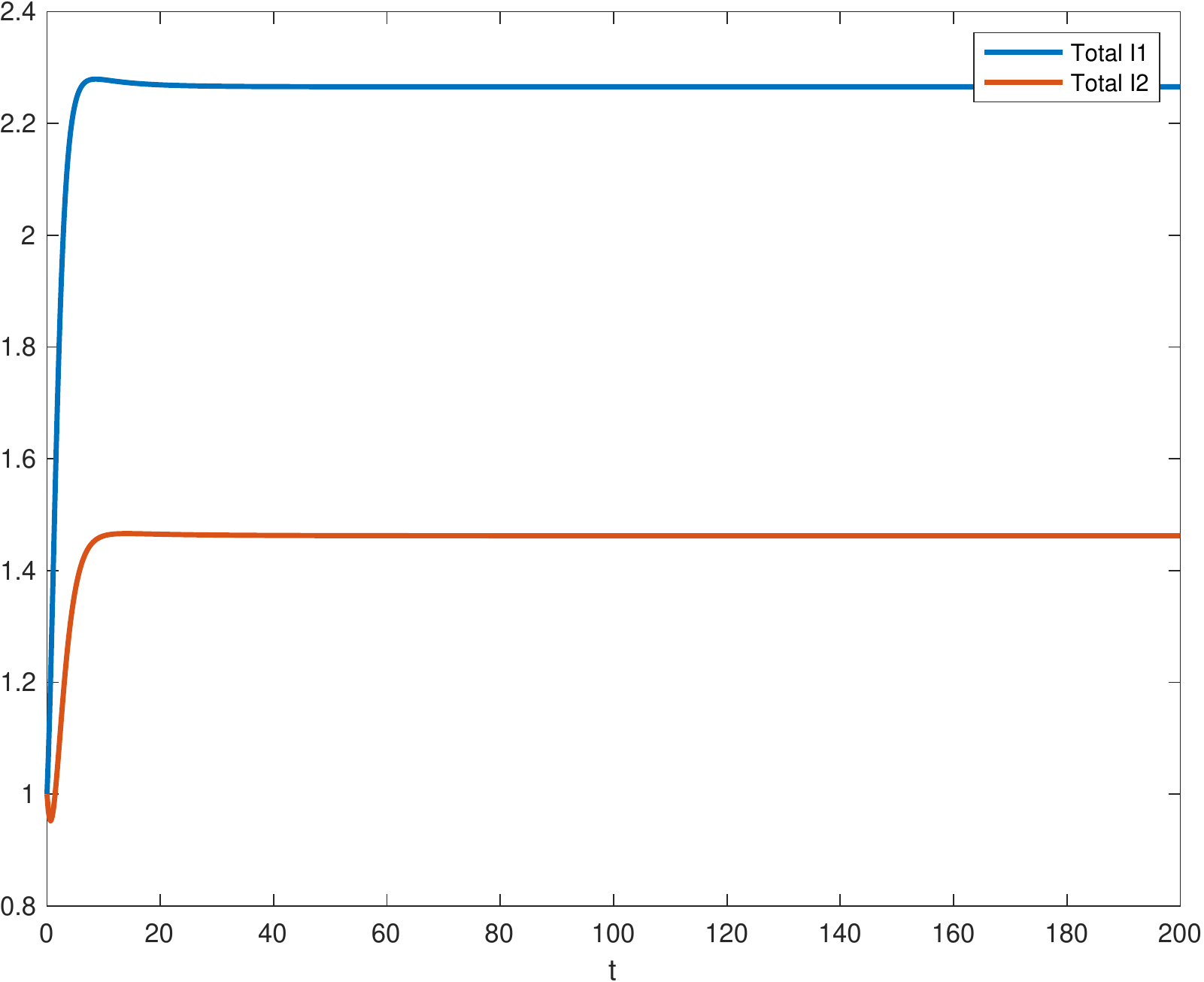}
(a)
\end{subfigure}
 \begin{subfigure}{0.32\textwidth}
 \centering
\includegraphics[width=\textwidth]{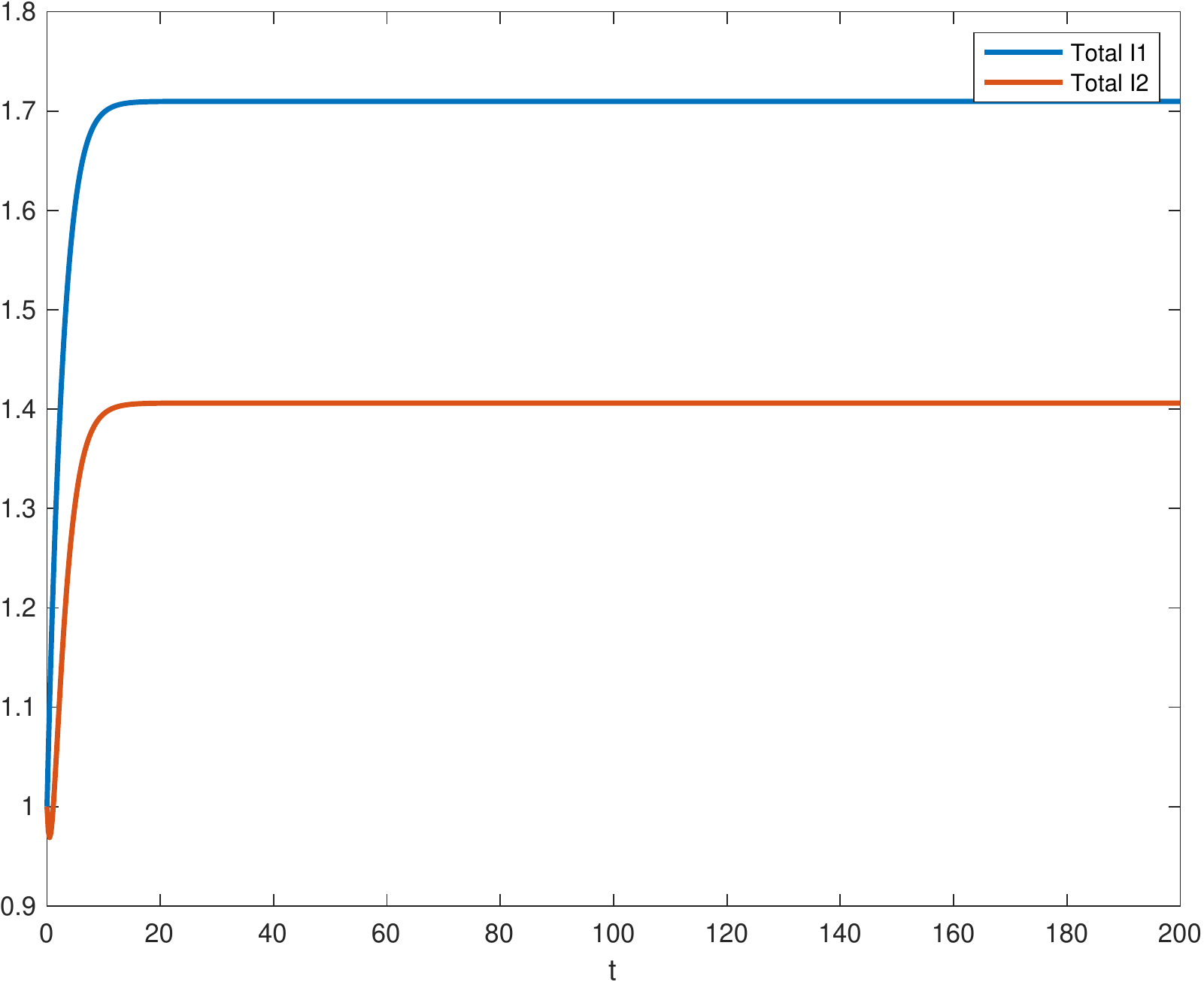}
\label{fig2b3}
(b)
\end{subfigure}
 \begin{subfigure}{0.32\textwidth}
 \centering
\includegraphics[width=\textwidth]{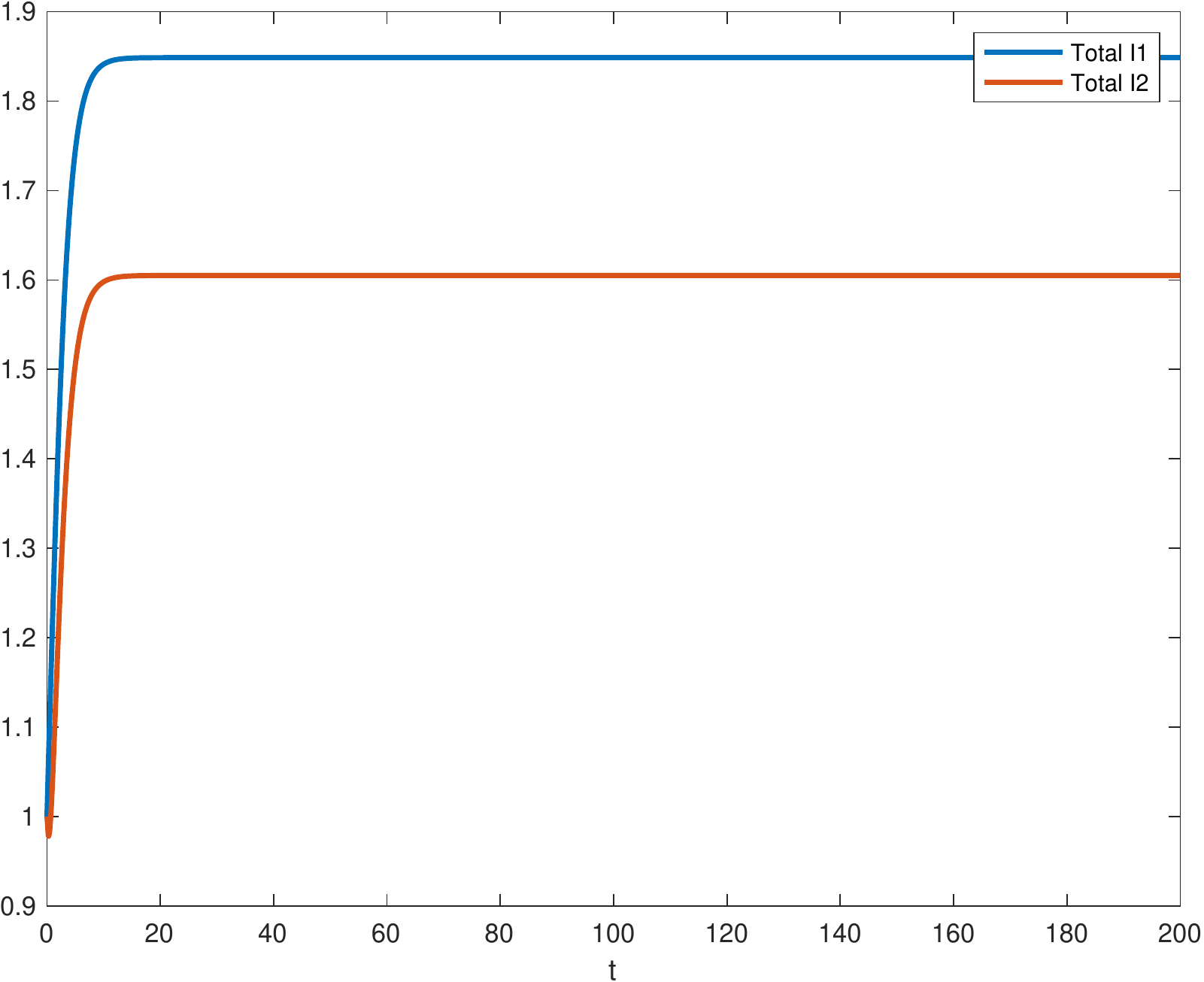}
\label{fig2c3}
(c)
\end{subfigure}
\caption{Total infected individuals $\int_{\Omega_1} I_1(x, t)$ and $\int_{\Omega_1} I_2(x, t)$.  The left figure is for  $d_{I_1}=d_{I_2}=10^{-5}$ and $d_S=d_I=0$; the middle figure is for $d_{I_1}=10^{-5}$, $d_{I_2}=1$ and $d_S=d_I=1$; the right figure is for $d_{I_1}=d_{I_2}=10^{-5}$ and $d_S=d_I=1$. }
\label{fig4}
\end{figure}

\end{itemize}

\section{Summary}

In this paper, we introduce a reaction-diffusion SIS epidemic patch model that describes the transmission of a disease among different regions. The movement of individuals inside regions is described by diffusion terms, and  the movement among regions is modeled by a matrix. We define a basic reproduction number $\mathcal{R}_0$ and show that it is a threshold value for the global dynamics of the model. In particular if $\mathcal{R}_0<1$, then the DFE is globally asymptotically stable; if $\mathcal{R}_0>1$, the solutions of the model are uniformly persistent and the model has at least one EE. If the matrix $L$ that describes the dispersal pattern among regions is symmetric,  we are able to obtain a variational formula for $\mathcal{R}_0$. From the formula, we can easily see that $\mathcal{R}_0$ is monotonely  dependent on the movement rates of the infected individuals. If $L$ is not symmetric, we use some graph-theoretic technique to show that  the monotone property still holds.  Moreover, we compute the limits of $\mathcal{R}_0$ as the movement rates approach zero or infinity, from which we can better see how $\mathcal{R}_0$ depends on the disease transmission and recovery rates. 

Theoretical investigations on the profiles of the EE as the movement rates of susceptible or infected individuals approaching zero has not been studied yet. However, simulations show that limiting the local movement of susceptible  individuals can no longer eliminate the disease when there are movement of individuals among patches. Moreover, limiting the local movement of infected individuals may even increase the epidemic size. Of course, these claims should be tested further by more rigorous theoretical analysis. Possible extensions of the model such as the introduction of demographic structure, exposed and/or recovered departments and multiple strains of the disease will be left as future works.

\section{Appendix}

\subsection{Eigenvalue problem}
Let $X$ be an ordered Banach space and $X_+$ be a nontrivial  cone of $X$ that is normal and generating. Let $A: D(A)\subset X\to X$ be a closed linear operator. A resolvent value of $A$ is a number $\lambda\in\mathbb{C}$ such that $\lambda I-A$ has a bounded inverse. The operator $A$ is called \emph{resolvent-positive} 
if there exists $\omega\in \mathbb{R}$ such that any number in the interval $(\omega, \infty)$ is a resolvent value of $A$ and $(\lambda I-A)^{-1}$ is a positive operator (i.e. maps $X_+$ to $X_+$) for any $\lambda\in (\omega, \infty)$. Denote $S(A)$ and $r(A)$ be the spectral bound and spectral radius of $A$, respectively. The following result is used to define $\mathcal{R}_0$.

\begin{lemma} [{\cite[Theorem 3.5]{thieme2009spectral}}] \label{lemma_thieme}
Let $A$ and $B$ be resolvent-positive operators on $X$. Suppose $A=B+C$, where $C: D(B)\to X$ is a positive linear operator. If $s(B)<0$, then $s(A)$ has the same sign as $r(-CB^{-1})-1$.
\end{lemma}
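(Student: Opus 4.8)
The plan is to track how the resolvent of $A=B+C$ degenerates as the spectral parameter decreases along the real axis, exploiting positivity at every step. Since $B$ is resolvent-positive with $s(B)<0$, the point $\lambda=0$ lies strictly to the right of $s(B)$, so $-B^{-1}=(0\cdot I-B)^{-1}$ exists as a bounded positive operator and $-CB^{-1}$ is a positive operator on $X$ with a well-defined spectral radius. The engine of the argument is the factorization, valid for every $\lambda>s(B)$,
$$\lambda I-A=(\lambda I-B)-C=(\lambda I-B)\bigl(I-(\lambda I-B)^{-1}C\bigr).$$
Reading this off, $\lambda$ is a resolvent value of $A$ with positive resolvent exactly when $I-(\lambda I-B)^{-1}C$ is boundedly invertible with positive inverse; and whenever $r\bigl((\lambda I-B)^{-1}C\bigr)<1$ the Neumann series $\sum_{k\ge0}\bigl((\lambda I-B)^{-1}C\bigr)^{k}$ converges to a positive inverse, forcing $(\lambda I-A)^{-1}\ge0$.

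First I would introduce the scalar function $g(\lambda):=r\bigl((\lambda I-B)^{-1}C\bigr)$ on $(s(B),\infty)$ and pin down its qualitative shape. The resolvent identity together with resolvent-positivity of $B$ shows that $\lambda\mapsto(\lambda I-B)^{-1}$ is positive and decreasing, so $g$ is non-increasing; a standard positivity argument then yields continuity, $g(\lambda)\to0$ as $\lambda\to\infty$, and strict decrease on the set where $g>0$. The crucial evaluation is $g(0)=r(-B^{-1}C)=r(-CB^{-1})$, using that $ST$ and $TS$ share the same spectral radius. Since $C\ge0$ forces $s(A)\ge s(B)$, the relevant regime is $\lambda\ge0>s(B)$; combining the factorization with the characterization $s(A)=\inf\{\lambda:(\mu I-A)^{-1}\ge0\text{ for all }\mu>\lambda\}$ identifies $s(A)$ as the crossing point of $g$ through the level $1$. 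The monotonicity of $g$ then delivers the trichotomy $s(A)>0\iff g(0)>1$, $\;s(A)=0\iff g(0)=1$, $\;s(A)<0\iff g(0)<1$, which is exactly the assertion that $s(A)$ has the same sign as $r(-CB^{-1})-1$.

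The hard part will be the two places where infinite dimensionality bites. First, I must justify rigorously that $s(A)$ coincides with the breakdown point of the positive resolvent rather than merely bounding it: this rests on the fact that for a resolvent-positive operator the spectral bound is itself a singularity of the resolvent approached along the real axis, so a positive resolvent cannot persist for $\lambda$ below $s(A)$. Second, the continuity and strict decrease of $g$ require control of the spectral radius of the family $(\lambda I-B)^{-1}C$ of positive operators; absent compactness, $r(-CB^{-1})$ need not be an eigenvalue, so I would argue directly through the Neumann-series characterization of invertibility and the monotonicity of positive operators rather than through eigenvectors. The borderline case $g(0)=1$, where invertibility fails precisely at $\lambda=0$, is the most delicate point, and it is there that the resolvent-positive framework of the cited spectral theory is indispensable.
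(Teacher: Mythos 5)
The paper does not prove this lemma at all: it is imported verbatim from \cite[Theorem 3.5]{thieme2009spectral} and used as a black box, so there is no in-paper argument to measure yours against; the relevant benchmark is Thieme's own proof, whose overall architecture your sketch correctly identifies (the factorization $\lambda I-A=(\lambda I-B)\bigl(I-(\lambda I-B)^{-1}C\bigr)$, the Neumann series, and the monotone function $g(\lambda)=r\bigl((\lambda I-B)^{-1}C\bigr)$). The easy direction is fine as you present it: if $g(\lambda)<1$ the Neumann series produces a positive inverse, and since $g$ is non-increasing, $g(0)<1$ puts all of $[0,\infty)$ in $\rho(A)$, whence $s(A)<0$ once one invokes the (nontrivial, but available under the normal-and-generating cone hypothesis) fact that $s(A)\in\sigma(A)$ whenever $s(A)>-\infty$ for a resolvent-positive operator.

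The genuine gaps are in the converse direction and at the borderline. First, the factorization only tells you that $\lambda\in\rho(A)$ iff $1\notin\sigma\bigl(C(\lambda I-B)^{-1}\bigr)$; it does \emph{not} give $g(\lambda)<1$, because a positive operator can have spectral radius exceeding $1$ while $1$ itself is a resolvent value. The actual mechanism is the iterated perturbation identity $\sum_{k=1}^{N}\bigl(C(\lambda I-B)^{-1}\bigr)^{k}\le C(\lambda I-A)^{-1}$ for $\lambda>\max\{s(A),s(B)\}$, normality of the cone to convert this order bound into a uniform norm bound on the powers (hence $g(\lambda)\le 1$ via $r=\lim\|T^{k}\|^{1/k}$), and then the theorem that the spectral radius of a positive operator belongs to its spectrum to upgrade $g(\lambda)\le 1$ to $g(\lambda)<1$. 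None of this appears in your outline; the sentence asserting that ``a standard positivity argument then yields continuity \ldots and strict decrease'' is precisely where the work lives. Second, strict decrease of $g$ on the set where $g\ge 1$ is genuinely needed (otherwise $g\equiv 1$ on an interval $[0,s(A)]$ with $s(A)>0$ is consistent with everything else you prove and would violate the conclusion in the case $g(0)=1$), it does not follow from $0\le S\le T\Rightarrow r(S)\le r(T)$ without an additional argument, and your closing sentence explicitly delegates this borderline case to ``the resolvent-positive framework of the cited spectral theory'' --- that is, to the very result being proved. As it stands the proposal is a correct road map with the two load-bearing steps missing; note also that for the paper's actual application the resolvent $-B^{-1}$ is compact and strongly positive, so a far more elementary Krein--Rutman argument would close the borderline case there, though not in the generality of the lemma as stated.
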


In our applications, we  take   $X:=\Pi_{j=1}^nC(\bar\Omega_j)$ and
\begin{equation}
\begin{split}
&X_{+}=\{\bm u=(u_1,\dots,u_n)\in X:u_j\ge0 \;\;\text{for}\;\;x\in\bar\Omega\}.
\end{split}
\end{equation}
For $\bm u=(u_1,\dots,u_n)\in X$, we write $\bm u\ge (\gg)\, \bm 0$ if $u_j(x)\ge (>) \, 0$ for all $x\in\bar\Omega_j$ and $j\in\Omega$ and $\bm u>\bm 0$ if $\bm u\ge \bm 0$ and $\bm u\neq \bm 0$. Define $A: X\rightarrow X$ be a linear operator given by 
\begin{equation}\label{An}
A(\bm u)=((A(\bm u))_1, \dots, (A(\bm u))_n)
\end{equation}
with
$$
(A(\bm u))_j=d_j\Delta u_j+d\sum_{k\neq j} \left(\f{L_{jk}\bar u_k}{|\Omega_j|}-  L_{kj}  u_j\right)+{a_j  u_j},
$$
where  $\bm u=(u_1, \dots, u_n)$, $a_j$ is H\"older continuous on $\bar\Omega_j$ for all $j\in\Omega$, $L=(L_{jk})$ satisfies (A2), and $d, d_1, \dots, d_n>0$. The domain $D(A)$ of $A$ is
$$
\ds D=D(A):=\prod_{j=1}^n \left\{u_j\in\bigcap_{p\ge 1}W^{2, p}(\Omega_j): \ \frac{\partial u_j}{\partial \nu}=0 \ \text{on} \ \partial\Omega_j \ \ \text{and}\ \ \Delta u_j\in C(\bar\Omega_j)\right\}.
$$

\begin{lemma}\label{lemma_positive}
The operator $A$ has a principal eigenvalue $\lambda_1:=s(A)$, which is simple and real and corresponds with a positive eigenvector; and $\lambda_1$ is the unique eigenvalue of $A$ corresponding with a positive eigenvector.  Moreover for each $\lambda>\lambda_1$, $\lambda I-A$ is invertible and $(\lambda I-A)^{-1}$ is strongly positive, i.e., $A\bm u\in int(X_+)$ for any $\bm u\in X_+$ with $\bm u\neq\bm 0$.  
\end{lemma}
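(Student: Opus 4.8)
The plan is to exhibit $A$ as a generator with a compact, strongly positive resolvent and then apply the Krein--Rutman theorem. First I would split $A=B_0+M$, where $B_0$ is the decoupled diagonal part
$$(B_0\bm u)_j=d_j\Delta u_j+(dL_{jj}+a_j)u_j,\qquad j\in\Omega,$$
using $L_{jj}=-\sum_{k\neq j}L_{kj}$ from (A2), and $M$ is the nonlocal off-diagonal coupling
$$(M\bm u)_j=d\sum_{k\neq j}\frac{L_{jk}}{|\Omega_j|}\,\bar u_k,\qquad j\in\Omega.$$
Each component of $B_0$ is the continuous realization on $C(\bar\Omega_j)$ of the Neumann Laplacian with a H\"older potential, which generates a positive analytic semigroup with compact resolvent; since $L_{jk}\ge 0$ for $k\neq j$, the coupling $M$ is a bounded, positive operator (indeed compact, as its $j$th component is spatially constant). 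Hence $A=B_0+M$ generates a positive analytic semigroup by the bounded perturbation theorem, and for $\lambda$ large $R:=(\lambda I-A)^{-1}$ exists and is compact.

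The heart of the argument is to show that $R$ is strongly positive. I would argue through the elliptic problem it solves: given $\bm f>\bm 0$, the function $\bm v=R\bm f$ solves the cooperative system $(\lambda-A)\bm v=\bm f$ under homogeneous Neumann conditions. Within each patch the strong maximum principle and Hopf's lemma force $v_j>0$ on $\bar\Omega_j$ as soon as $v_j\not\equiv 0$, while the coupling terms $d\,L_{jk}\bar v_k/|\Omega_j|$ feed strict positivity from patch $k$ into patch $j$ whenever $L_{jk}>0$. Because $L$ is irreducible, its directed graph is strongly connected, so this propagation reaches every patch and yields $\bm v\gg\bm 0$, i.e. $R\bm u\in\mathrm{int}(X_+)$ for all $\bm u>\bm 0$.

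With $R$ compact and strongly positive, the Krein--Rutman theorem \cite{engel1999one,deimling2010nonlinear} gives that $r(R)>0$ is an algebraically simple eigenvalue with a strictly positive eigenvector $\bm\phi\gg\bm 0$ and is the only eigenvalue of $R$ admitting a positive eigenvector. Translating back through the relation $\lambda_1=\lambda-1/r(R)$, this is exactly the assertion that $\lambda_1=s(A)$ is a simple principal eigenvalue with positive eigenvector $\bm\phi$ and the unique eigenvalue of $A$ possessing one. Finally, for any $\mu>\lambda_1$ the resolvent $(\mu I-A)^{-1}$ is again strongly positive: this follows either from the same strong-maximum-principle argument applied to $(\mu-A)\bm v=\bm f$, or from the representation $(\mu I-A)^{-1}=\int_0^\infty e^{-\mu t}e^{tA}\,dt$ together with positivity of the semigroup.

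I expect the main obstacle to be the strong-positivity (irreducibility) step: one must combine the within-patch strong maximum principle with the graph-theoretic strong connectivity encoded in the irreducibility of $L$ to show that strict positivity, present in at least one patch, spreads to all of them. The generation, compactness, and Krein--Rutman steps are by now standard once the operator is correctly realized on $\prod_{j} C(\bar\Omega_j)$.
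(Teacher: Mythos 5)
Your proposal is correct and follows essentially the same route as the paper: the same decomposition of $A$ into the decoupled local part plus the positive nonlocal coupling, generation of a compact positive analytic semigroup by bounded perturbation, strong positivity of the resolvent via the elliptic maximum principle combined with the irreducibility of $L$, and the Krein--Rutman theorem applied to the resolvent, translated back through $\lambda_1=\lambda-1/r(R)$. The only cosmetic difference is in the strong-positivity step, where the paper first integrates the resolvent equation over each $\Omega_j$ and runs the irreducibility argument on the averages $\bar u_j$ before applying the maximum principle, while you propagate strict positivity patch by patch along the directed graph of $L$; the two arguments are interchangeable.
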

\begin{proof}
 The operator $A$ is a bounded perturbation of $A'$, where $A': X\rightarrow X$ is given by 
$$
(A'(\bm u))_j=d_j\Delta u_j-d\sum_{k\neq j}   L_{kj}  u_j+{a_j  u_j},
$$
with $D(A)=D(A')=D$. 
 Moreover, $A'\le A$  on $X$, i.e. $A'\bm u\le A\bm u$ for any $\bm u\in X_{+}$.
It is well-known that $A'$ generates an analytic compact positive semigroup $T'(t)$ on $X$ ($T'(t)$ being positive means that $T'(t)\bm u\ge 0$ for any $t\ge 0$ and $\bm u\in X_+$). Therefore by \cite[Proposition 3.1.12, Corollary 6.1.11]{engel1999one}, the semigroup  $T(t)$ generated by $A$ is an analytic compact positive semigroup and $s(A)\ge s(A')$. Since $T(t)$ is positive,  
 $(\lambda I-A)^{-1}$ is positive if and only if $\lambda >s(A)$ \cite[Lemma 6.1.9]{engel1999one}. Since $T(t)$ is compact, $(\lambda I-A)^{-1}$ is compact for any $\lambda>s(A)$. 

Fixing $\lambda >s(A)$, we claim that $(\lambda I-A)^{-1}$ is strongly positive. Indeed, let $\bm v\in X_+$ with $\bm v\ne\bm 0$ and denote $\bm u=(\lambda I-A)^{-1}\bm v$. The positivity of $(\lambda I-A)^{-1}$  implies $\bm u\ge 0$. Since $(\lambda I-A) \bm u=\bm v$, we have 
\begin{equation}\label{eiggg}
-d_j\Delta u_j-d\sum_{k\neq j} \left(\f{L_{jk}\bar u_k}{|\Omega_j|}-  L_{kj}  u_j\right)+(\lambda-a_j)  u_j=v_j, \ \ x\in\Omega_j, j\in\Omega.
\end{equation}
Integrating it over $\Omega_j$, we obtain
$$
-d\sum_{k\in\Omega} L_{jk}\bar u_k+(\lambda-a_{jm})\bar u_j\ge \bar v_j, \ \ j\in\Omega,
$$
where $a_{jm}=\min\{a_j(x): \ x\in\bar\Omega_j\}$. 
Then, we have $\bar u_j>0$ for all $j\in\Omega$. Indeed, suppose to the contrary that $\bar u_{j_0}=0$ for some $j_0\in\Omega$. Then, we obtain $-d\sum_{k\neq j_0} L_{j_0k}\bar u_k\ge \bar v_{j_0}\ge 0$. This implies $\bar u_k=0$ for all $k\in\Omega$ with $k\neq j_0$ and $a_{j_0k}>0$. Since $L$ is irreducible, we must have $\bar u_k=0$ and $\bar v_k=0$ for all $k\in \Omega$. This contradicts $\bm v>\bm 0$ and proves $\bar u_j>0$ for all $j\in\Omega$. Then by \eqref{eiggg} and the elliptic maximum principle, we have $\bm u\gg \bm 0$. This proves that $(\lambda I-A)^{-1}$ is strongly positive.

We have shown that   $(\lambda I-A)^{-1}$ is compact and strongly positive for each $\lambda >s(A)$. 
By the Krein-Rutman theorem (\cite[Theorem 19.3]{deimling2010nonlinear}), $r((\lambda I-A)^{-1})$ is the principal eigenvalue of $(\lambda I-A)^{-1}$, which is simple and corresponds with a positive eigenvector and it is the unique eigenvalue corresponding with a positive eigenvector. Since $(\lambda I-A)^{-1}$ is compact, every spectral value of $A$ is an eigenvalue (\cite[Corollary 4.1.19]{engel1999one}). It is easy to check that $\lambda'$ is an eigenvalue of $(\lambda I-A)^{-1}$ if and only if $\lambda-1/\lambda'$ is an eigenvalue of $A$. Then the claimed results hold. 
\end{proof}

\begin{lemma}\label{lemma_B}
Let $A$ be defined in \eqref{An} and suppose that  $\bm a=(a_1, \dots, a_n)< \bm 0$. Then, $s(A)<0$. 
\end{lemma}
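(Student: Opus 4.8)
The plan is to read off the sign of $s(A)$ directly from the principal eigenpair together with the ``no population loss'' structure of $L$. By Lemma \ref{lemma_positive}, $A$ possesses a principal eigenvalue $\lambda_1:=s(A)$ with a strongly positive eigenvector $\bm\varphi=(\varphi_1,\dots,\varphi_n)\gg\bm 0$, so that
$$
\lambda_1\varphi_j=d_j\Delta\varphi_j+d\sum_{k\neq j}\left(\frac{L_{jk}\bar\varphi_k}{|\Omega_j|}-L_{kj}\varphi_j\right)+a_j\varphi_j,\qquad j\in\Omega.
$$
The idea is that the differential and the dispersal contributions both integrate away, leaving an identity in which only $\bm a$ carries sign information.

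First I would integrate this identity over $\Omega_j$. The homogeneous Neumann condition gives $\int_{\Omega_j}d_j\Delta\varphi_j=0$, and since $\bar\varphi_k$ is a constant we have $\int_{\Omega_j}\tfrac{L_{jk}\bar\varphi_k}{|\Omega_j|}=L_{jk}\bar\varphi_k$ and $\int_{\Omega_j}L_{kj}\varphi_j=L_{kj}\bar\varphi_j$. Using $L_{jj}=-\sum_{k\neq j}L_{kj}$ to collapse the two movement terms, I obtain
$$
\lambda_1\bar\varphi_j=d\sum_{k\in\Omega}L_{jk}\bar\varphi_k+\int_{\Omega_j}a_j\varphi_j,\qquad j\in\Omega.
$$
Next I would sum over $j\in\Omega$. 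Because each column of $L$ sums to zero by (A2), the coupling terms cancel identically, $\sum_{j\in\Omega}\sum_{k\in\Omega}L_{jk}\bar\varphi_k=\sum_{k\in\Omega}\bar\varphi_k\bigl(\sum_{j\in\Omega}L_{jk}\bigr)=0$, which leaves
$$
\lambda_1\sum_{j\in\Omega}\bar\varphi_j=\sum_{j\in\Omega}\int_{\Omega_j}a_j\varphi_j.
$$
Finally, since $\bm a<\bm 0$ means $a_j(x)<0$ for all $x$ and $\bm\varphi\gg\bm 0$, the right-hand side is strictly negative while $\sum_{j\in\Omega}\bar\varphi_j>0$; hence $\lambda_1=s(A)<0$, as claimed.

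This computation is essentially mechanical, so I do not anticipate a genuine obstacle. The one step requiring attention is the collapse of the two dispersal terms: the zero-column-sum property of $L$ is exactly what causes the entire movement contribution to vanish after summation, isolating the sign of $\bm a$. I note that strict negativity of each $a_j$ together with the nonnegativity and nontriviality of $\bm\varphi$ already forces the strict inequality, so the full strong positivity supplied by Lemma \ref{lemma_positive} is more than enough for the argument.
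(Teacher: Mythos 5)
Your proof is correct and is essentially the paper's own argument: integrate the principal eigenvalue equation over each $\Omega_j$, sum over $j$ so the zero-column-sum property of $L$ annihilates the dispersal terms, and read off the sign of $\lambda_1$ from $\sum_j\int_{\Omega_j}a_j\varphi_j<0$. The only caveat is notational: in this paper $\bm a<\bm 0$ means $a_j\le 0$ with $\bm a\not\equiv\bm 0$ (not pointwise strict negativity), but your computation gives the strict inequality under that weaker hypothesis too, precisely because $\bm\varphi\gg\bm 0$.
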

\begin{proof}
Let $\bm\varphi=(\varphi_1, \dots, \varphi_n)$ be a positive eigenvector of $A$ corresponding with the principal eigenvalue $\lambda_1=s(A)$. Then, we have 
\begin{equation}\label{eign}
\left\{
\begin{array}{lll}
\ds d_j\Delta \varphi_j+d\sum_{k\neq j} \left(\f{L_{jk}\bar \varphi_k}{|\Omega_j|}-  L_{kj}  \varphi_j\right)+{a_j  \varphi_j}=\lambda \varphi_j, \ \ &x\in\Omega_j, j\in\Omega,\\
\ds\frac{\partial \varphi_j}{\partial \nu}=0, \ \ &x\in\partial\Omega_j, j\in\Omega. 
\end{array}
\right.
\end{equation}
Integrating the first equation of \eqref{eign} over $\Omega_j$ and summing up for all $j\in\Omega$, we obtain 
$$
d\sum_{j, k\in\Omega} L_{jk}\bar\varphi_k + \sum_{j\in\Omega} \int_{\Omega_j} a_j \varphi_j =\lambda_1  \sum_{j\in\Omega} \bar\varphi_j. 
$$
By $\bm a<\bm 0$ and $\bm\varphi\gg \bm 0$, we have $\lambda_1<0$. 
\end{proof}

\subsection{Comparison principle}

\begin{lemma}\label{lemma_comp}
Suppose that $L=(L_{jk})$ satisfies (A2) and $d, d_1, \cdots, d_n>0$. Let $\bm u^i(\cdot, t)\in C([0, T]; X)$, $i=1, 2$, be two  solutions of 
 \begin{equation}\label{coml}
\begin{cases}
\ds\f{\partial  u_j}{\partial t}=d_{j} \Delta  u_j+d\sum_{k\neq j} \left(\f{ L_{jk}{\bar u_k}}{|\Omega_j|}- L_{kj}  u_j\right)+f_j(x, u_j),&x\in\Omega_j, t>0, j\in\Omega,\\
\ds\f{\partial u_j}{\partial \nu}=0, &x\in\partial\Omega_j, t>0, j\in\Omega,
\end{cases}
\end{equation}
where $f_j:\mathbb{R}^2\to\mathbb{R}$ is continuously differentiable for each $j\in\Omega$. 
If $\bm u^1(\cdot, 0)\le \bm u^2(\cdot, 0)$, then  $\bm u^1(\cdot, t)\le \bm u^2(\cdot, t)$ for all $t\in [0, T]$. If $\bm u^1(\cdot, 0)< \bm u^2(\cdot, 0)$, then $\bm u^1(\cdot, t)\ll \bm u^2(\cdot, t)$ for all $t\in (0, T]$. 
\end{lemma}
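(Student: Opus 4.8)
The plan is to pass to the difference $\bm w:=\bm u^2-\bm u^1$ and show it remains in the positive cone, using crucially that the off-diagonal entries $L_{jk}$ ($k\ne j$) are nonnegative, so that despite the nonlocal coupling the system is quasimonotone. First I would linearize the reaction: setting $c_j(x,t):=\int_0^1\partial_u f_j\big(x,\,u_j^1+s(u_j^2-u_j^1)\big)\,ds$, the fundamental theorem of calculus gives $f_j(x,u_j^2)-f_j(x,u_j^1)=c_j\,w_j$, so $\bm w$ solves the linear nonlocal system
$$\frac{\partial w_j}{\partial t}=d_j\Delta w_j+d\sum_{k\ne j}\Big(\frac{L_{jk}\bar w_k}{|\Omega_j|}-L_{kj}w_j\Big)+c_j(x,t)\,w_j,\quad x\in\Omega_j,\ j\in\Omega,$$
with homogeneous Neumann conditions and $\bm w(\cdot,0)\ge\bm 0$. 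Because $\bm u^1,\bm u^2\in C([0,T];X)$ are bounded and each $f_j$ is $C^1$, the coefficients $c_j$ are bounded on $\bar\Omega_j\times[0,T]$.

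For the nonstrict inequality I would construct a strict supersolution by an exponential shift. Put $w_j^\varepsilon:=w_j+\varepsilon e^{\sigma t}$; since $\overline{w_k^\varepsilon}=\bar w_k+\varepsilon e^{\sigma t}|\Omega_k|$, substituting $\bm w^\varepsilon$ into the left-minus-right of the system produces the defect $\varepsilon e^{\sigma t}\big[\sigma-d\sum_{k\ne j}L_{jk}|\Omega_k|/|\Omega_j|+d\sum_{k\ne j}L_{kj}-c_j\big]$. Choosing $\sigma>\max_{j,x,t}\big(d\sum_{k\ne j}L_{jk}|\Omega_k|/|\Omega_j|+|c_j|\big)$ makes $\bm w^\varepsilon$ a strict supersolution with $\bm w^\varepsilon(\cdot,0)\ge\varepsilon>\bm 0$. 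If $\bm w^\varepsilon$ first reached zero at some $(x_0,t_0)$ in component $j_0$, then at that point $\partial_t w_{j_0}^\varepsilon\le 0$, $\Delta w_{j_0}^\varepsilon\ge 0$ (an interior spatial minimum, or the Hopf lemma at the Neumann boundary), $w_{j_0}^\varepsilon=0$, and $\overline{w_k^\varepsilon}\ge 0$ for every $k$ because $\bm w^\varepsilon\ge\bm 0$ on $[0,t_0]$; as $L_{j_0k}\ge 0$, the nonlocal term is nonnegative, forcing the defect $\le 0$ and contradicting strict supersolutionality. Hence $\bm w^\varepsilon>\bm 0$ on $[0,T]$, and letting $\varepsilon\to 0$ gives $\bm w\ge\bm 0$, i.e. $\bm u^1\le\bm u^2$.

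For the strict conclusion, assume $\bm w(\cdot,0)>\bm 0$, so $\bar w_{j_0}(0)>0$ for some $j_0$. Viewing the $w_{j_0}$-equation as a scalar parabolic equation with bounded zeroth-order coefficient $c_{j_0}-d\sum_{k\ne j_0}L_{kj_0}$ and nonnegative nonlocal source, the strong maximum principle and Hopf lemma give $w_{j_0}(x,t)>0$ for all $x\in\bar\Omega_{j_0}$ and $t>0$. To propagate, for any patch $j_1$ with $L_{j_1 j_0}>0$ the source $d\,L_{j_1 j_0}\overline{w_{j_0}}/|\Omega_{j_1}|$ is strictly positive for $t>0$, which by the same scalar argument forces $w_{j_1}(\cdot,t)>0$ for $t>0$; since $L$ is irreducible by (A2), its directed graph is strongly connected, so iterating this step reaches every patch and yields $\bm w\gg\bm 0$ for $t>0$, that is $\bm u^1\ll\bm u^2$.

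The main obstacle is the nonlocal coupling: a pointwise minimum argument cannot directly control the integral quantity $\bar w_k$, so the entire scheme rests on the sign structure $L_{jk}\ge 0$ for $k\ne j$, which guarantees the coupling term is nonnegative whenever $\bm w$ is, and on the irreducibility of $L$, which upgrades positivity on a single patch to positivity on all of them. The exponential shift is exactly the device that turns the borderline inequality at a first touching point into the strict contradiction needed for the weak comparison.
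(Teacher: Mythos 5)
Your proof is correct, and it shares the paper's overall strategy --- a maximum-principle contradiction exploiting that $L_{jk}\ge 0$ for $k\neq j$ makes the nonlocal coupling quasimonotone, plus irreducibility for the strict conclusion --- but the technical devices differ at both stages in ways worth noting. For the weak inequality, the paper multiplies the difference by an exponential weight to drive the zeroth-order coefficient negative and argues at a point where $v_{j_0}$ attains its \emph{global} maximum over all patches, controlling the nonlocal term via $\bar v_k(t_0)\le |\Omega_k|\,v_{j_0}(x_0,t_0)$; you instead add $\varepsilon e^{\sigma t}$ to manufacture a strict supersolution and argue at a \emph{first touching time}, where the touching value is exactly zero, so the terms $-L_{kj}w_j$ and $c_jw_j$ vanish on the spot and the nonlocal term is handled simply by $\overline{w_k^\varepsilon}(t_0)\ge 0$. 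Both work; yours is marginally cleaner in that nothing at the critical point needs to be dominated by a large constant (large $\sigma$ is only used to make the defect strictly positive). For the strict inequality the routes genuinely diverge: the paper integrates over each $\Omega_j$ to obtain a cooperative irreducible ODE differential inequality for the averages $\bar v_j$, invokes the theory of monotone ODE systems to get $\bar{\bm v}(t)\ll \bm 0$ for $t>0$, and then feeds this back into a scalar parabolic comparison; you propagate positivity patch-by-patch along the directed graph of $L$ using only the scalar strong maximum principle and Hopf lemma, with irreducibility entering as strong connectivity of that graph. Your version is self-contained at the cost of an explicit induction along directed paths; the paper's is shorter but outsources the combinatorics to the monotone-systems literature. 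If you write this up, make two small points explicit: at a boundary touching point the minimum value is zero, which is what lets the Hopf lemma apply irrespective of the sign of the zeroth-order coefficient $c_{j_0}+dL_{j_0j_0}$ (absorb the positive part of the coefficient using $w^\varepsilon_{j_0}\ge 0$); and the nonnegativity of the nonlocal source must be asserted on the whole interval $[0,t_0]$, not just at $t_0$, so that $w^\varepsilon_{j_0}$ is a genuine scalar supersolution near the touching point.
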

\begin{proof}
Suppose  $\bm u^1(\cdot, 0)\le \bm u^2(\cdot, 0)$.  Choose $c>0$ be sufficiently large and let $\bm {v}=e^{ct}(\bm u^1-\bm u^2)$. 
Then $\bm v$ satisfies 
 \begin{equation}\label{coml1}
\begin{cases}
\ds\f{\partial  v_j}{\partial t}=d_{j} \Delta  v_j+d\sum_{k\neq j} \left(\f{ L_{jk}{\bar v_k}}{|\Omega_j|}- L_{kj}  v_j\right)+v_j\left(\frac{\partial}{\partial y}f_j(x, \theta_j)-c\right),&x\in\Omega_j, t>0, j\in\Omega,\\
\ds\f{\partial v_j}{\partial \nu}=0, &x\in\partial\Omega_j, t>0, j\in\Omega,
\end{cases}
\end{equation}
where $\theta_j$ is between $u^1_j$ and $u^2_j$ for $j\in\Omega$. 
We want to show that $\bm v(\cdot, t)\le \bm 0$ for all $t\in [0, T]$. To see it, suppose to the contrary that this is not true. Then we have $\max_{j\in\Omega, x\in \bar\Omega_j, t\in [0, T]} v_j(x, t)>0$. Suppose that the maximum is attained at $j=j_0$ and $(x, t)=(x_0, t_0)\in \bar\Omega_{j_0}\times (0, T]$. If $x_0\in\Omega_{j_0}$, then $\partial_t v_{j_0}(x_0, t_0)\ge 0$, $\Delta v_{j_0}(x_0, t_0)\le 0$, and $\bar v_k(t_0)\le |\Omega_k| v_{j_0}(x_0, t_0)$ for all $k\in\Omega$. Therefore if $c>0$ is large, evaluating the first equation of \eqref{coml} at $j=j_0$ and $(x, t)=(x_0, t_0)$ will lead to a contradiction.  Otherwise, $x_0\in\partial\Omega_{j_0}$ and $v_{j_0}(x_0, t_0)>v_{j_0}(x, t)$ for all $(x, t)\in\Omega_{j_0}\times [0, t_0]$. We can choose $c>0$ large such that   $\partial_t v_{j_0}(x, t)-d_{j_0}\Delta v_{j_0}(x, t)<0$ near $(x_0, t_0)$. Then the Hopf boundary implies $\partial_n v_j(x_0, t_0)>0$, which is a contradiction. 

Now suppose  $\bm u^1(\cdot, 0)< \bm u^2(\cdot, 0)$. Let $c$ and $\bm v$ be defined as above. It suffices to show $\bm v(\cdot, t)\ll \bm 0$  for all $t\in (0, T]$. We have already shown that $\bm v(\cdot, t)\le \bm 0$ for all $t\in [0, T]$. Integrating the first equation of \eqref{coml1} on $\Omega_j$ for $j\in\Omega$, we obtain
 \begin{equation}\label{coml2}
\begin{cases}
\ds\f{d  \bar v_j}{d t}\le d\sum_{k\in\Omega} L_{jk}\bar v_k,& t>0, j\in\Omega,\\
\ds\bar {\bm v}(0)< \bm 0, &t>0.
\end{cases}
\end{equation}
Since $L$ is irreducible and essentially nonnegative, we have $\bar{\bm v}(t)\ll\bm 0$ for all $t>0$ (\cite{Smith1995}). Then by \eqref{coml1} and the parabolic comparison principle, we have $\bm v(\cdot, t)\ll \bm 0$  for all $t\in (0, T]$.
\end{proof}

{\large\noindent{\bf Declarations}}

\noindent{\bf Conflict of interest} The authors declare that they have no conflict of interest.

\bibliographystyle{plain}
\bibliography{epidem}

\end{document}